\newcommand{\np}{{\mathrm{NP}}}
\newcommand{\cO}{\mathcal{O}}
\newcommand{\cE}{\mathcal{E}}
\newcommand{\cU}{\mathcal{U}}
\newcommand{\ie}{i.e.,\xspace}
\newcommand{\eg}{e.g.,\xspace}
\DeclareMathOperator{\rank}{rank}
\DeclareMathOperator{\rev}{rev}
\newtheorem{definition}{Definition}
\newtheorem{proposition}{Proposition}
\DeclareMathOperator{\alg}{ALG}
\DeclareMathOperator{\opt}{OPT}
\DeclareMathOperator*{\argmin}{arg\,min}
\newenvironment{proofsketch}{%
  \proof}{\endproof}
\title{Online Elicitation of Necessarily Optimal Matchings}
\author{Jannik Peters}
\begin{document}
\maketitle
\begin{abstract}
    In this paper, we study the problem of eliciting preferences of agents in the house allocation model. For this we build on a recent model of \citet{HMSS21}[AAAI'21] and focus on the task of eliciting preferences to find matchings which are necessarily optimal, \ie optimal under all possible completions of the elicited preferences.  In particular, we follow the approach of \citet{HMSS21} and investigate the elicitation of necessarily Pareto optimal (NPO) and necessarily rank-maximal (NRM) matchings. Most importantly, we answer their open question and give an online algorithm for eliciting an NRM matching in the next-best query model which is $\frac{3}{2}$-competitive, \ie it takes at most $\frac{3}{2}$ as many queries as an optimal algorithm. Besides this, we extend this field of research by introducing two new natural models of elicitation and by studying both the complexity of determining whether a necessarily optimal matching exists in them, and by giving online algorithms for these models.
\end{abstract}
\section{Introduction}
One of the key settings in the area of matching under preferences is the so-called house allocation or assignment problem. In this problem we are given two sets, a set of agents $A$ and a set of houses $H$ with agents having preferences over houses. This simple setting has found multiple real life applications, for instance in the allocation of people to jobs \cite{HyZe79}, papers to reviewers \cite{GKK10a}, or students to student dorms \cite{ChSo02}.
Over the years, various solution concepts have been designed for the house allocation problem, for instance \emph{Pareto optimality} \cite{AbSo98a} \cite{ACMM04}, \emph{popularity} \cite{AIKM07} or \emph{rank-maximality} \cite{IKMMP06}.

However, most of the work on house allocation problems assumes the preferences of the agents to be given in their entirety, while in many real-world applications only partial preferences might be known and eliciting complete rankings from agents might be costly. 

As an expository (non-serious) example (based on a real life story), imagine a group of AI researchers meeting in their office kitchen to celebrate the acceptance of multiple papers. For this occasion, the researchers decide to eat some ice pops. However, after opening the freezer, they notice that only one ice pop of each kind is left, causing discussion on how to fairly divide the ice. Quickly, the group agrees that a rank-maximal allocation would be the fairest they could currently think of. Now there is just one problem left, due to time constraints and hunger, the researchers do not want to all give their whole ranking to each other. Instead, they agree that they should start off with naming the ice pop they like the most. But how should they continue after this, and who should be asked for their second favorite ice?

To deal with this problem \citet*{HMSS21} initiated the study of finding matchings that are necessarily optimal by eliciting partial preferences from the agents. In their model \citet{HMSS21} use so-called top-$k$ preferences in which each agent has only elicited a prefix of their true preferences. To obtain these preferences, they introduce the \emph{next-best} query model, in which agents can be asked to reveal the top house they have not revealed yet. The goal in this setting is to ask as few queries as possible in order to find a matching that is \emph{necessarily optimal}, i.e., optimal under every possible linear extension of the top-$k$ preferences. The performance of such an elicitation algorithm is then measured in terms of the so-called competitive ratio, i.e., the ratio between the number of queries of the algorithm and the number of queries of an optimal algorithm with knowledge of the complete preferences. As their main results, \citet{HMSS21} gave an $\mathcal{O}(\sqrt{n})$-competitive elicitation algorithm for finding a necessarily Pareto optimal matching,
    showed that no elicitation algorithm for finding a necessarily Pareto optimal matching can be $o(\sqrt{n})$-competitive,
    and proved that no elicitation algorithm for finding a necessarily rank-maximal matching can be $\frac{4}{3}-\varepsilon$-competitive for any $\varepsilon > 0$.
Further, they conjectured that an online algorithm with a constant competitive ratio for eliciting necessarily rank-maximal matchings is possible, and left this as their most important open question.
\subsection{Our results}

We contribute to this line of research in the following way.  First, we confirm the conjecture of \citet{HMSS21} and show that an online algorithm with a $\frac{3}{2}$-competitive ratio for eliciting a necessarily rank-maximal matching does exist in the next-best query model. Further, we show that this algorithm is optimal and no online-algorithm can have a competitive ratio better than $\frac{3}{2}$. Besides this, our many focus lies on the \emph{hybrid-query} model in which agents can be asked to either elicit a house of a given rank or to return the rank of a given house. In this model, we give an online algorithm with a constant competitive ratio for eliciting a necessarily rank-maximal matching, as well as, for any $\varepsilon > 0$,  an $\mathcal{O}(n^{\frac{1}{3}+\varepsilon})$-competitive algorithm for eliciting a necessarily Pareto optimal matching which almost meets the lower bound of $\Omega(n^\frac{1}{3})$. To add on to this, we also give a polynomial time algorithm for determining whether an NRM matching exists and show that the same problem becomes $\np$-complete for Pareto optimal matchings.

Finally, we also introduce the \emph{set-compare} model in which agents can be asked to give their top-choice element out of a set. Here, we show that this model is already powerful enough to obtain a $1$-competitive elicitation algorithm for Pareto optimal matchings. We show that determining whether an NPO matching under this preference model exists, is $\np$-complete as well, and that the $\frac{3}{2}$ lower bound obtained for the next-best model is also valid for the set-compare model. For a brief overview over the competitiveness bounds derived in this paper, we refer the reader to \Cref{ta:results}.
\begin{table}[tb]
	\begin{center}
			\resizebox{\columnwidth}{!}{ 
		\begin{tabular}{l c c c c c c c c} 
			
			& \multicolumn{2}{c}{next-best} & \multicolumn{2}{c}{set-compare} & \multicolumn{2}{c}{hybrid-query}  \\ \hline
			&LB & UB &LB & UB &LB & UB \\ 
			Pareto optimal & $\Omega(\sqrt{n})^\dag$ & $\mathcal O(\sqrt{n})^\dag$ & 1 & 1 & $\Omega(n^\frac{1}{3})$ & $\mathcal{O}(n^{\frac{1}{3} + \varepsilon})$ \\ \\
			Rank-maximal & $\frac{3}{2}$ & $\frac{3}{2}$ & $\frac{3}{2}$ & $\mathcal{O}(n)$ & $\frac{3}{2}$ & $6$ \\ 
			
		\end{tabular}
		}	
		
	\end{center}
	
	\caption{Overview over the lower (LB) and upper bounds (UB) on the possible competitiveness of online algorithms derived for eliciting Pareto optimal and rank-maximal matchings in the different query models. Results marked with $\dag$ were shown by \citet{HMSS21}.\label{ta:results}}
\end{table}%
\subsection{Related Work}
The house allocation or assignment problem is one of the key matching settings in both computer science and economics. Besides the aforementioned classical works of for instance \cite{BoMo01a, HyZe79, ShSc74a} recent papers on house allocation include work on envy-free house allocation \cite{GSV19, BCGHLMW19} on diversity constrains \cite{BCHSZ18}, incorporating cardinal queries in ordinal preferences \cite{MML21} or closely related to us on Pareto optimal house allocation under probabilistic uncertainty \cite*{ABDR19}.

Rank-maximal matchings were first introduced by \citet{irving2003greedy} and were subsequently studied and characterized by \citet{IKMMP06}. Following these two initial papers, several works studied algorithmic aspects of various variants of the problem \cite{KavSha06, Mich07, Paluch13,GNN19, NNP19}. Besides this \citet{BMW21} studied rank-maximality (and popularity) in a variant of the house allocation problem, where not only the allocation of the houses, but also the selection of the allocated houses, \ie which houses are matched and which are unmatched, matter. Very recently \citet{AziSun21} used rank-maximality and algorithmic techniques of \citet{IKMMP06} for the school choice problem with diversity constraints. 

Besides the aforementioned works by \citet{HMSS21, ABDR19} uncertainty in matching markets has been incorporated in several papers in the literature on two-sided matchings for instance by \citet{RCIL13, LMPS14}. Besides this, \citet{DruBou14} studied preference elicitation for the stable matching problem or \cite{GSSS17,MaiVaz18,CSS19} who studied stable matchings under various aspects of robustness, \eg stable under probabilistic perturbations or after a certain number of swaps in the input rankings. Finally, very closely related to our work is also the study of possible and necessary winners in computational social choice, where given partial preferences of voters, a candidate winning every election or some election is required. See \cite{Lang20} for a recent survey on this topic.

\section{Preliminaries}
For $a,b \in \mathbb{N}$ let $[a,b] = \{a, a+1,  \dots, b\}$ and $[a] = [1,a]$.

Throughout the paper, we let $A = \{a_1, \dots, a_n\}$ denote our set of \emph{agents} and $H = \{h_1, \dots, h_n\}$ our set of \emph{houses}. A \emph{matching} in our setting is simply a subset $M \subseteq A \times H$ such that no agent and no house appear in more than one pair. If $(a_i, h_j) \in M$ for some agent $a_i \in A$ and house $h_j \in H$ we say that $a_i$ is matched to $h_j$. 

Further, we assume that each agent $a_i \in A$ has a strict \emph{preference list} $\succ_i$ over all houses in $H$. If $h_j \succ_i h_k$ for two houses $h_j$ and $h_k$ we say that $a_i$ \emph{prefers} $h_j$ to $h_k$. When $h_j$ appears in the $k$th place in the preference list of $a_i$ we say that the \emph{rank} of $h_j$ in the preference list of $a_i$ is $k$ and write $\rank(a_i,h_j) = k$. For a given subset $H' \subseteq H$ and agent $a_i$, we call $\max_i(H')$ the maximum element of $H'$ with regard to $\succ_i$, \ie the house in $H'$ which $a_i$ likes the most. We refer to the collection of preference lists as a \emph{preference profile} $\succ$.

We are now ready to define the two problems we investigate in our paper. 
\paragraph{Pareto optimality.} We begin with the classical notion of Pareto optimal matchings \cite{AbSo98a}. Given a matching $M$, we say that another matching $M'$ \emph{dominates} $M$ if 
\begin{itemize}
    \item for every agent $a_i \in A$ it holds that $M'(a_i) \succ_i M(a_i)$ or $M'(a_i) = M(a_i)$ ;
    \item for at least one agent $a_i \in A$ it holds that $M'(a_i) \succ_i M(a_i)$.
\end{itemize} Now, a matching $M$ is \emph{Pareto optimal} if there is no matching $M'$ which dominates $M$.
\paragraph{Rank-maximality.} As our second optimality concept we consider rank-maximality. For a given matching $M$ let $r^M_l \coloneqq \lvert \{a_i \in A \mid \rank(a_i, M(a_i)) = l\} \rvert$ for any $l \in [n]$. Now a matching is \emph{rank-maximal} if and only if there is no other matching $M'$ and $l \in [n]$ such that $r^M_k = r^{M'}_k$ for all $k \in [l-1]$ and $r^M_l < r^{M'}_l$, \ie the vector $(r^M_1, \dots, r^M_n)$ is lexicographically maximal among all matchings. If such a matching $M'$ were to exist, we also say that $M'$ \emph{rank-dominates} $M$. Thus, a matching $M$ is rank-maximal if it first maximizes the number of agents matched to their first choice, subject to that maximizes the number of agents matched to their second choice and so on.
\subsubsection{Elicitation Protocols.}
Now, we turn to the three different elicitation protocols we study in our work. For the definition of the models, we assume that we are given a fixed instance of the house allocation problem with preference profile $\succ$.
\begin{itemize}
    \item First, we investigate the \emph{next-best} query model as defined by \citet{HMSS21}. In this model, we are only allowed to ask one type of query. Namely, we can query a single agent, who will return the house they rank the highest, which has not been revealed yet, \ie if this is the $k$th query asked to the agent, the query returns the house ranked $k$th by the agent in $\succ$. For any agent $a \in A$ we denote such a query as $\mathcal{Q}(a)$ and we refer to the set of agents revealed to by $A$ as $\rev(a)$.
    \item Next we study the \emph{hybrid-query} model. Here we can ask two types of queries. Firstly, a \emph{rank query} $\mathcal{Q}(a_i, k)$ for an agent $a_i \in A$ and $k \le n$ returns the house $h_j \in H$ with $\rank(a_i, h_j) = k$  and secondly a \emph{house query} $\mathcal{Q}(a_i, h_j$) for an $a_i \in A$ and a house $h_j \in H$ returns $\rank(a_i, h_j)$. Similarly to the next-best query model, for any $a\in A$ we refer to $\rev(a)$ as the set of houses $h \in H$ for which we know $\rank(a,h)$.
    \item As our third model, we study a less restricted version of the next-best query model, which we call the \emph{set-compare} query model. Here a query $\mathcal{Q}(a_i, H')$ consists of an agent $a_i \in A$ and a subset of houses $H' \subseteq H$ and returns $\max_{i}(H')$, \ie the house $a_i$ likes best in $H'$. This model is inspired by recent works of learning rankings in the area of machine learning \cite{CLM18, RLS19, SahGop19, SahGop20}.
    
\end{itemize}
For any of the three aforementioned query models, let $\mathcal{Q}_1, \dots, \mathcal{Q}_k$ be a sequence of queries with answers $\alpha_1, \dots, \alpha_k$ (we also refer to this as partial preferences throughout the paper). We call a preference profile $\succ$ consistent with these queries if the output of these queries on $\succ$ would be $\alpha_1, \dots, \alpha_k$ as well. A matching $M$ is now \emph{necessarily Pareto optimal(NPO)} (\emph{necessarily rank-maximal(NRM))} for a given sequence of queries if $M$ is Pareto optimal (rank-maximal) for all preference profiles consistent with these queries. The goal is now to design an online-algorithm which can ask queries according to one of the three aforementioned models and outputs a matching that is either necessarily Pareto optimal or necessarily rank-maximal according to the queries the algorithm asked. We assume that the online-algorithm only has access to the agents, houses and the answers to the queries, but not to the underlying preferences themselves.

In order to compare the performance of these algorithms, we measure their competitive ratio in comparison to an optimal algorithm which also knows the underlying preference profile. For any instance, such an optimal algorithm asks the minimum number of queries, after which a necessarily optimal matching with regard to these queries, asked by the optimal algorithm, can be given. We call an online algorithm $\alpha$-competitive if for any preference profile $\succ$ the online algorithm asks at most $\alpha \cdot\opt_\succ$ queries, where $\opt_\succ$ is the number of queries of the optimal algorithm on this instance.

Finally, we note that partial preferences in the next-best query model can be equivalently expressed by an incomplete preference profile $\succ'$ (with induced rank function $\rank')$, in the hybrid-query model, by an incomplete rank function $\rank'$ (with induced partial preference profile $\succ'$) in which each agent only lists ranks for a subset of houses, and in the set-compare model, by having a partial order $\succ'_i$ for each agent $a_i$.
\section{Pareto-optimal Matchings}
We start off with Pareto-optimal matchings. Here, \citet{HMSS21} managed to give an asymptotically tight $\mathcal{O}(\sqrt{n})$-competitive     elicitation algorithm for the next-best query model. As our main results, we first give a $1$-competitive algorithm for eliciting NPO matchings in the set-compare model, followed by a classification of NPO matchings in the hybrid-query model together with an $\mathcal{O}(n^{\frac{1}{3}+\varepsilon})$-competitive elicitation algorithm for any $\varepsilon > 0$. Before we turn to our elicitation algorithms, we quickly recap the famous serial dictatorship mechanism \cite{AbSo98a} and its relation to Pareto optimal matchings. 
\begin{definition}[Serial Dictatorship Mechanism]
The serial dictatorship mechanism takes as input a permutation $\sigma$ of the agents together with a preference profile $\succ$ and returns a matching $SD_\succ(\sigma)$ which iteratively matches agent $\sigma(i)$ to their most preferred house in $\succ$ not matched to by an agent in $\sigma(1), \dots, \sigma(i-1)$.
\end{definition}
As shown by \citet{AbSo98a} the serial dictatorship mechanism is already enough to classify all Pareto optimal matchings.
\begin{restatable}[\citet{AbSo98a}]{thm}{poclassify}
Given a preference profile $\succ$ a matching $M$ is Pareto optimal if and only if there is a permutation of the agents $\sigma$ such that $M = SD_\succ(\sigma)$.
\end{restatable}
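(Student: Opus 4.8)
The plan is to prove the two directions separately. The forward direction, that every serial dictatorship outcome is Pareto optimal, is routine, while the converse, that every Pareto optimal matching is realised by serial dictatorship, requires a trading-cycle argument together with an induction.

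For the easy direction, suppose $M = SD_\succ(\sigma)$ but $M$ is dominated by some $M'$. I would pick the smallest index $i$ for which $\sigma(i)$ strictly improves, i.e.\ $M'(\sigma(i)) \succ_{\sigma(i)} M(\sigma(i))$; by minimality every earlier agent $\sigma(j)$ with $j<i$ satisfies $M'(\sigma(j)) = M(\sigma(j))$. Since $M'$ is a matching, the house $M'(\sigma(i))$ differs from all $M(\sigma(j))$ with $j<i$, so it was still available when $\sigma(i)$ made its choice in the mechanism; as $\sigma(i)$ prefers it to $M(\sigma(i))$, this contradicts the greedy choice of serial dictatorship.

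The converse is the heart of the argument. First observe that Pareto optimality forces $M$ to be a perfect matching (otherwise an unmatched agent and an unmatched house could be paired, dominating $M$), so every house has an owner under $M$. The key claim is that some agent is matched to its top choice: define $f(a_i) = M^{-1}(\max_i(H))$, the owner of $a_i$'s most preferred house. If no agent received its top choice, then $f$ has no fixed point, and since it maps the finite set $A$ into itself it must contain a cycle $a_{i_1} \to \cdots \to a_{i_k} \to a_{i_1}$ of length at least two. Reassigning to each $a_{i_t}$ the house $M(a_{i_{t+1}})$ (indices taken modulo $k$) gives every agent on the cycle its strictly preferred top choice while leaving all other agents fixed, producing a matching that dominates $M$ --- a contradiction.

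Having found an agent $a$ matched to its top choice, I would set $\sigma(1) = a$ and recurse: restricting $M$ to $A \setminus \{a\}$ and $H \setminus \{M(a)\}$ again yields a Pareto optimal matching, because any matching dominating this restriction, extended by the pair $(a, M(a))$, would dominate $M$ itself. Iterating produces an ordering $\sigma$ in which each $\sigma(i)$ is matched to its favourite house among those not taken by $\sigma(1), \dots, \sigma(i-1)$, which is exactly the choice serial dictatorship makes, so $M = SD_\succ(\sigma)$. I expect the main obstacle to be the trading-cycle step: checking that the rotation along the cycle is a well-defined matching that genuinely dominates $M$, and confirming that the inductive restriction preserves Pareto optimality so that the construction terminates with the desired permutation.
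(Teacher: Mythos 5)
Your proof is correct. Note that the paper itself gives no proof of this statement---it is imported verbatim from \citet{AbSo98a}---and your argument is essentially the canonical one from that reference: the greedy/first-deviation contradiction for the forward direction, and the top-trading-cycle rotation plus induction on the agent who receives their top choice for the converse. Both directions, including the well-definedness of the rotated matching and the preservation of Pareto optimality under the inductive restriction, check out against the paper's definitions.
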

This immediately brings us to the set-compare query model where we can show that this model is already sufficient to simulate the serial dictatorship mechanism, which allows us to construct a $1$-competitive algorithm.
\begin{restatable}{thm}{setcompareeli}
There exists a $1$-competitive algorithm in the set-compare model for computing a necessarily Pareto-optimal matching.
\end{restatable}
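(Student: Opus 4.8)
The plan is to simulate the serial dictatorship mechanism with exactly one set-compare query per agent, and then to argue that this number of queries is also what an optimal offline algorithm must spend. Concretely, I would fix any ordering $\sigma$ of the agents and maintain the set $R$ of houses not yet assigned, initially $R = H$. For $i = 1, \dots, n-1$ I ask the query $\mathcal{Q}(\sigma(i), R)$, match $\sigma(i)$ to the returned house $\max_{\sigma(i)}(R)$, and delete that house from $R$; finally I match $\sigma(n)$ to the single house left in $R$. This makes exactly $n-1$ queries and, crucially, requires no knowledge of $\succ$, so it is a valid online algorithm.

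Next I would prove that the output $M$ is necessarily Pareto optimal. The key observation is that the answers pin down the run of serial dictatorship along $\sigma$ for \emph{every} consistent profile: by construction the $i$-th answer $\max_{\sigma(i)}(R)$ is taken over exactly the set of houses left after the first $i-1$ agents have been served, which is precisely the house serial dictatorship assigns to $\sigma(i)$. Hence for any profile $\succ'$ consistent with the asked queries we have $M = SD_{\succ'}(\sigma)$, and by the classification theorem of \citet{AbSo98a} $M$ is Pareto optimal under $\succ'$. As this holds for all consistent $\succ'$, the matching $M$ is necessarily Pareto optimal.

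The crux is the lower bound $\opt_\succ \ge n-1$. Since $M$ is a perfect matching, it is Pareto optimal under a profile exactly when it admits no Pareto-improving trading cycle, so I would associate to the collected answers the \emph{desire digraph} $G$ on the agents, placing an arc $a \to b$ whenever it is still consistent that $a$ prefers $M(b)$ to $M(a)$. A directed cycle in $G$ can be realized simultaneously by choosing, for each agent on the cycle, a linear extension of its revealed partial order that ranks the desired house above its own; this yields a single consistent profile in which $M$ is not Pareto optimal. Conversely, if $G$ is acyclic then no consistent profile contains a Pareto-improving cycle. Thus necessary Pareto optimality of $M$ is equivalent to acyclicity of $G$.

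I would then finish by a counting argument on $G$. An arc $a \to b$ can be deleted only by learning that $a$ prefers $M(a)$ to $M(b)$, and such information about $a$ can only come from a set-compare query addressed to $a$; hence every agent some of whose outgoing arcs have been removed was queried at least once. In an acyclic $G$ at most one vertex can retain all of its outgoing arcs, since two such vertices would form a $2$-cycle. Consequently at least $n-1$ agents were queried, giving $\opt_\succ \ge n-1$; together with the algorithm above (which also shows $\opt_\succ \le n-1$) this yields $\opt_\succ = n-1$, matched exactly by our algorithm, so it is $1$-competitive. I expect the main obstacle to be this lower bound, and within it the simultaneous realizability of a directed cycle of $G$ by one profile that is consistent with all agents' revealed partial orders at once.
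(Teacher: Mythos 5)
Your proposal is correct and follows essentially the same route as the paper: simulate serial dictatorship with one set-compare query per agent (skipping the last), observe that the answers pin down $SD_{\succ'}(\sigma)$ for every consistent profile, and note that the optimal algorithm must also spend at least $n-1$ queries. Your desire-digraph argument for the lower bound is a careful expansion of the paper's one-line observation that at most one agent can be left entirely unqueried (two unqueried agents would form a potential trading cycle), so the content is the same, just spelled out in more detail.
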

\begin{proof}
Our algorithm is a simple adaption of the serial dictatorship mechanism to the set-compare model. It works iteratively by constructing a matching $M$. In iteration $i$ let $H_i$ be the set of houses already matched by $M$ in previous iterations. Then in iteration $i$ we query $h \coloneqq \mathcal{Q}(a_i, H \setminus H_i)$ and add $(a_i, h)$ to $M$. Note that we do not need to query in iteration $n$ since only one agent/house pair is left. It is easy to see that for all possible preference extensions, in iteration $i$ agent $a_i$ is matched to the currently unmatched house they prefer the most. Therefore, this algorithm simulates the serial dictatorship mechanism and thus produces a (necessarily) Pareto optimal matching. Furthermore, the algorithm only uses $n-1$ queries and is therefore $1$-competitive, since at most $1$ agent can be left unqueried by the optimal algorithm.
\end{proof}
We further note that this proof can also be generalized to the setting where the set $H$ in each query can contain at most $k$ houses, \eg if $k = 2$ this would mean that only pair-wise comparisons could be asked to the agents. For a proof sketch, we refer to the full version.
For the hybrid-query model, we start off with the complexity of determining whether an NPO matching exists. While it is still polynomial time checkable if a given matching is NPO, we also show that it is $\np$-complete to determine the existence of an NPO matching.
\begin{restatable}{thm}{algnpomixed}
Given a matching $M$ and partial preferences $\rank'$ in the hybrid-query model, it can be determined in polynomial time whether $M$ is necessarily Pareto optimal.
\label{thrm:alg_npo_mixed}
\end{restatable}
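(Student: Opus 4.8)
The plan is to reduce the necessary-Pareto-optimality test to cycle detection in an auxiliary digraph on the agents. First I would dispose of a trivial case: I may assume $M$ is a perfect matching, since if some agent and some house are both unmatched, then rematching them yields a dominating matching in \emph{every} completion, so $M$ is not even Pareto optimal (hence not NPO). For a perfect matching and a \emph{fixed} completion $\succ$, I would use the standard trading-cycle characterisation of Pareto optimality: because $|A|=|H|=n$, any dominating matching $M'$ is also perfect, so $M \triangle M'$ decomposes into alternating cycles in the bipartite graph, and along each nontrivial cycle every involved agent receives a different house and therefore (for weak domination of a changed house) must strictly improve. Thus, for fixed $\succ$, $M$ is Pareto optimal if and only if the improvement digraph $D^\succ$, which has an edge $a_i \to a_k$ exactly when $M(a_k) \succ_i M(a_i)$, contains no directed cycle; such a cycle is precisely a trade that strictly benefits everyone on it and leaves all other agents untouched.

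Next I would lift this to the online setting by defining the \emph{possibility digraph} $D$ on $A$, placing an edge $a_i \to a_k$ whenever \emph{some} completion of $a_i$'s partial list $\rank'$ satisfies $M(a_k) \succ_i M(a_i)$. The key observation is that the comparison witnessing each edge concerns only the single agent $a_i$ and the two fixed houses $M(a_i)$ and $M(a_k)$, so the requirements of distinct agents are independent. Consequently any simple directed cycle $a_{i_1} \to \dots \to a_{i_k} \to a_{i_1}$ in $D$ can be realised simultaneously: each agent $a_{i_t}$ occurs once and imposes exactly one constraint, namely that of its outgoing edge, so I complete each $a_{i_t}$'s list by its witnessing completion and complete the remaining agents arbitrarily. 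The resulting profile admits the trade along the cycle, witnessing that $M$ is not Pareto optimal there. Conversely, if $M$ is not NPO, some completion admits a dominating matching and hence a trading cycle, and every edge of that cycle is in particular possible, so it appears in $D$. Therefore $M$ is NPO if and only if $D$ is acyclic.

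It then remains to build $D$ in polynomial time, which reduces to deciding, for each ordered pair $(a_i, a_k)$ with $h = M(a_i)$ and $h' = M(a_k)$, whether $\rank(a_i, h') < \rank(a_i, h)$ is achievable under some completion. Letting $R_i$ be the set of ranks already fixed for $a_i$ and $[n] \setminus R_i$ the free ranks, a short case distinction settles each edge: if both houses are known the comparison is determined outright; if exactly one is known, the edge exists iff a free rank lies on the appropriate side of the known rank; and if both are unknown the edge always exists, since there are then at least two free ranks which can be assigned to $h'$ and $h$ in the desired order. Each test runs in $O(n)$ time, so $D$ is constructed in $O(n^3)$ time, and a final acyclicity check by depth-first search or topological sort decides necessary Pareto optimality in polynomial time.

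I expect the main obstacle to be the lifting step, specifically the argument that edge-by-edge possibility can be stitched into one global completion. The crux is the independence across agents together with the fact that a simple cycle imposes only a single comparison per agent (from its outgoing edge); one must take care that the cycle uses distinct agents, so that the reassigned houses $M(a_{i_t})$ are genuinely distinct and the trade produces a valid matching, and that the both-unknown case truly leaves enough free ranks to order the two houses as required.
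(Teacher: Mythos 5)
Your proposal is correct and takes essentially the same route as the paper: both build the auxiliary digraph on agents with an arc $a_i \to a_j$ whenever some completion lets $a_i$ prefer $M(a_j)$ to $M(a_i)$ (determined by the same case analysis on which of the two houses are revealed), and both reduce necessary Pareto optimality to acyclicity of that digraph. Your explicit stitching argument for realising all edges of a simple cycle in one completion is a point the paper leaves implicit, but it is the same proof.
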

\begin{proof}
This follows fairly simply by adapting the algorithm of \citet{HMSS21} for determining whether a matching $M$ is NPO in the next-best model. We create an auxiliary directed graph $G = (A, E)$ in which we add an arc from agent $a_i$  to agent $a_j$ if it is possible for $a_i$ to prefer $M(a_j)$ to $M(a_i)$. Then a cycle in $G$ implies that a matching dominating $M$ exists in a preference extension of $\rank'$. To be more precise, we add an edge from $a_i$ to $a_j$ if $M(a_i),M(a_j) \in \rev(i)$ and $\rank'
    (a_i, M(a_j)) < \rank'(a_i, M(a_i))$; or if $M(a_i) \in \rev(i), M(a_j) \notin \rev(i)$ and there is a rank $k < \rank(a_i, M(a_i))$ with no revealed house for $a_i$; or if $M(a_i) \notin \rev(i), M(a_j) \in \rev(i)$ and there is a rank $k > \rank(a_i, M(a_j))$ with no revealed house for $a_i$;
    or if $M(a_i), M(a_j) \notin \rev(i)$.
It is easy to see that there is a preference profile consistent with the partial preferences in which $a_i$ prefers $M(a_j)$ to $M(a_i)$ if and only if there is an edge from $a_i$ to $a_j$. Thus, a cycle in $G$ would indeed imply that we could extend the preferences in such a way, that we could construct a matching $M'$ dominating $M$, by swapping the houses along the cycle. On the other hand, if a matching $M'$ dominates $M$ in some preference extension, there has to be a cycle of agents $a_1, \dots, a_k, a_{k+1} = a_1$ with agent $a_i$ preferring $M(a_{i+1})$ to $M(a_i)$ for this preference extension and thus also forming a cycle in $G$. 
\end{proof}
This algorithm also translates into an algorithm for determining whether a matching $M$ is necessarily Pareto optimal in the set-compare model, again by adding edges from one agent to another, if there is any extension where one agent could prefer the house of the other agent.
\begin{restatable}{cor}{algnposc}
Given a matching $M$ and partial preferences $\succ'$ in the set-compare model it can be determined in polynomial time whether $M$ is necessarily Pareto optimal.
\label{thrm:alg_npo_sc}
\end{restatable}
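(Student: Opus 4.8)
The plan is to mirror the graph-based argument used for \Cref{thrm:alg_npo_mixed}, replacing the hybrid-query edge conditions with a single condition phrased in terms of the partial orders $\succ'_i$ induced by the set-compare queries. First I would make explicit what information a set-compare query supplies: a query $\mathcal{Q}(a_i, H')$ returning $h^*$ forces $h^* \succ_i h$ for every $h \in H' \setminus \{h^*\}$, so the collection of queries asked to $a_i$ yields a set of ordered pairs whose transitive closure is exactly the partial order $\succ'_i$. Since all these relations stem from the (consistent) true linear order, this transitive closure is acyclic and can be computed in polynomial time; a relation check ``$M(a_i) \succ'_i M(a_j)$?'' then reduces to a reachability query in this relation graph.

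Next I would define the auxiliary digraph $G = (A, E)$ exactly as before, placing an arc from $a_i$ to $a_j$ whenever there is a consistent extension in which $a_i$ prefers $M(a_j)$ to $M(a_i)$. The key local fact is the standard linear-extension property of partial orders: for two distinct houses $x,y$, some linear extension of $\succ'_i$ ranks $y$ above $x$ if and only if it is not already the case that $x \succ'_i y$. Hence the correct edge rule in the set-compare model is simply to add the arc $a_i \to a_j$ iff $M(a_i) \not\succ'_i M(a_j)$. Every such check is a reachability query and runs in polynomial time, so the entire construction of $G$ is polynomial.

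It remains to inherit the correctness equivalence ``$M$ is NPO iff $G$ is acyclic.'' For the forward direction I would argue that if $G$ contains a cycle $a_1 \to \dots \to a_k \to a_1$, then each agent $a_i$ on the cycle individually admits a linear extension of $\succ'_i$ in which $M(a_{i+1}) \succ_i M(a_i)$; since the agents' partial orders are mutually independent, these per-agent extensions combine into one consistent global profile, and rotating the houses along the cycle yields a matching $M'$ dominating $M$. Conversely, if some consistent extension admits a dominating $M'$, the agents who strictly improve — each pointing to the $M$-holder of its newly assigned house — decompose into directed cycles, every arc of which is present in $G$ by the edge rule; thus $G$ is cyclic.

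The main thing to get right is the local characterization of the second paragraph: one must verify that no hidden interaction among an agent's several revealed comparisons blocks the desired single inversion, and that in the cycle-to-domination direction the separately chosen linear extensions do not conflict. Both points follow because each agent uses only one outgoing arc of the cycle and because the preferences of distinct agents are unconstrained relative to one another. Once this is checked, the remainder of the argument is verbatim that of \Cref{thrm:alg_npo_mixed}.
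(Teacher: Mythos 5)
Your proposal is correct and follows exactly the route the paper intends: it adapts the cycle-detection graph of \Cref{thrm:alg_npo_mixed} by adding an arc $a_i \to a_j$ whenever the partial order $\succ'_i$ does not already force $M(a_i)$ above $M(a_j)$ (the paper states this in one sentence; its appendix proof of \Cref{thrm:alg_setcomp_npc_po} uses the same linear-extension and cycle arguments you spell out). The only difference is that you supply the details the paper omits, and your handling of the independence of agents' extensions and the reachability-based relation check is sound.
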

Using the simple algorithm in \Cref{thrm:alg_npo_mixed} we can also give a succinct classification of necessarily Pareto optimal matchings in the hybrid-query model using the Serial Dictatorship mechanism. 
\begin{restatable}{lem}{nposerial}
Given partial preferences $\rank'$ in the hybrid-query model a matching $M$ is necessarily Pareto optimal if and only if there is a permutation $\sigma$ of $A$ such that for all possible preference extensions $\succ$ of $\rank'$ it holds that $M = SD_\succ(\sigma)$ .
\label{lem:npo_serial}
\end{restatable}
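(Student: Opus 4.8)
The plan is to prove both implications of the biconditional. The ``if'' direction is immediate from the serial dictatorship characterization of \citet{AbSo98a}: suppose there is a permutation $\sigma$ with $M = SD_\succ(\sigma)$ for every extension $\succ$ of $\rank'$. Fixing any such $\succ$, that characterization makes $M$ Pareto optimal with respect to $\succ$, and since this holds for all extensions, $M$ is necessarily Pareto optimal by definition.

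For the ``only if'' direction I would start from the auxiliary directed graph $G = (A, E)$ used in \Cref{thrm:alg_npo_mixed}, where an arc $a_i \to a_j$ is present exactly when some extension of $\rank'$ lets $a_i$ prefer $M(a_j)$ to $M(a_i)$. Since $M$ is necessarily Pareto optimal, that proof establishes that $G$ is acyclic, so it admits a topological order; I would then take $\sigma$ to be the \emph{reverse} of such an order, so that every arc $a_i \to a_j$ places $a_j$ strictly before $a_i$ in $\sigma$. The intuition is that whenever $a_i$ could covet $a_j$'s house, we let $a_j$ choose first, so that the house is already gone by the time $a_i$ picks.

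The heart of the argument is to verify that this single $\sigma$ reproduces $M$ under every extension. Fixing an arbitrary extension $\succ$, I would induct along $\sigma$ to show that agent $\sigma(k)$ is matched to $M(\sigma(k))$ for each $k$. Assuming all earlier agents received their $M$-houses, the houses still available when $a \coloneqq \sigma(k)$ chooses are exactly $H$ minus the $M$-houses of $\sigma(1), \dots, \sigma(k-1)$, so $M(a)$ is among them. Suppose for contradiction that $a$ prefers some available house $h \neq M(a)$ to $M(a)$, and write $h = M(a_j)$. Then $a_j$ cannot precede $a$ in $\sigma$ (else $a_j$ would already hold $h$), so $a_j$ comes later; but $a$ preferring $M(a_j)$ to $M(a)$ in $\succ$ witnesses an arc $a \to a_j$ in $G$, which by our reverse-topological choice forces $a_j$ before $a$ --- a contradiction. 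Hence $a$ receives $M(a)$, and the induction yields $SD_\succ(\sigma) = M$.

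I expect the main subtlety to lie in this induction, specifically in the interface between the two notions of preference. The arcs of $G$ record what is \emph{possible} over all extensions, whereas the induction reasons inside one concrete extension $\succ$; the argument goes through because a preference actually realized in $\succ$ is a fortiori possible, and hence genuinely witnesses the arc. This is precisely what lets the combinatorial structure of $G$ --- its acyclicity and the ordering it induces --- govern the behaviour of serial dictatorship simultaneously across all extensions, after which the reverse-topological order does the rest.
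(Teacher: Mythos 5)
Your proposal is correct and follows essentially the same route as the paper: both directions hinge on the auxiliary graph from \Cref{thrm:alg_npo_mixed} being acyclic, taking $\sigma$ to be a reversed topological order, and inducting along $\sigma$ to show every extension's serial dictatorship reproduces $M$. Your write-up merely makes explicit the (omitted in the paper) ``if'' direction via the \citet{AbSo98a} characterization and spells out the induction step in more detail.
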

\begin{proof}
Let $M$ be an NPO matching and $G$ the graph constructed in Theorem \ref{thrm:alg_npo_mixed} for $M$. Since $M$ is NPO the graph $G$ is acyclic. Therefore, there exists a topological ordering of $G$. Let $\sigma$ be a reversed topological ordering of $G$. Then for every $i \in [n]$ and every possible preference extension (and thus also in $\succ$), the agent $\sigma(i)$ could only possibly prefer the houses already matched to the agents $\sigma(1), \dots, \sigma(i-1)$ if they were assigned their partner in $M$. Therefore, $SD_\succ(\sigma)$ would set $SD_\succ(\sigma)(\sigma(i)) = M(\sigma(i))$. Thus, by induction, we get that $SD_\succ(\sigma) = M$.
\end{proof}
This however does not translate to an algorithm for finding an NPO matching or determining that one exists. To show the $\np$-completeness of this problem we reduce from the $\np$-complete \textsc{(2,2)-e3-sat} problem, \cite{BKS04}. In an instance of the \textsc{(2,2)-e3-sat} problem we are given a set of variables $X$ and a set of clauses $\mathcal C$ over $X$ with each clause in $\mathcal{C}$ having length exactly $3$ such that each variable in $X$ appears exactly twice in negated form and twice in positive form in $\mathcal{C}$. 

\begin{restatable}{thm}{npcmqnpo}
Given partial preferences $\rank'$ in the hybrid-query model it is $\np$-complete to determine whether an NPO matching exists.
\end{restatable}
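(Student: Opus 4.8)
The plan is to prove the two directions of NP-completeness separately: containment in NP, which is immediate, and NP-hardness, which carries all the work. For \emph{membership}, observe that a matching $M$ is a certificate of size $\mathcal{O}(n)$, and by \Cref{thrm:alg_npo_mixed} one can test in polynomial time whether a given $M$ is necessarily Pareto optimal; guessing $M$ and verifying it therefore places the problem in NP. For \emph{hardness} I would reduce from \textsc{(2,2)-e3-sat}. The key is the characterization behind \Cref{thrm:alg_npo_mixed}: for a fixed $M$, the matching is NPO exactly when the ``could-prefer'' digraph $G_M$ is acyclic, where an arc $a_i \to a_j$ is present whenever some extension of $\rank'$ lets $a_i$ prefer $M(a_j)$ to $M(a_i)$. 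Hence ``does an NPO matching exist'' is equivalent to ``does there exist a matching $M$ whose digraph $G_M$ is acyclic'', and the reduction must engineer the partial preferences so that the acyclic matchings correspond precisely to the satisfying assignments of the formula.

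For each variable $x$ I would build a small \emph{switch} gadget consisting of a constant number of agents and houses, with ranks revealed so that its internal part of $G_M$ is acyclic under exactly two matchings, encoding $x = \mathrm{true}$ and $x = \mathrm{false}$. Any other local matching should close a short cycle of agents who could cyclically prefer each other's houses (a $2$- or $3$-cycle), ruling it out; here I would exploit that leaving both matched houses of a pair unrevealed forces arcs in both directions, hence a $2$-cycle, exactly on the forbidden configurations, while keeping enough empty rank slots to realize the two intended extensions. The two admissible configurations differ in which ``output'' house is held by an agent that already possesses its revealed top choice (and so has no outgoing arc) versus by a ``needy'' agent (which does). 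Since each variable occurs exactly twice positively and twice negatively, the gadget would expose four output connections, two activated under the true setting and two under the false setting.

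For each clause $c = \ell_1 \vee \ell_2 \vee \ell_3$ I would install a potential directed triangle on three clause agents, one per literal, wired through the corresponding variable gadgets so that the arc belonging to literal $\ell$ is present exactly when $\ell$ is \emph{false}. Then the triangle is a cycle of $G_M$ if and only if all three literals are false, i.e.\ if and only if $c$ is unsatisfied. Consequently a matching consistent with a truth assignment yields an acyclic $G_M$ precisely when every clause is satisfied, and by the characterization of \Cref{thrm:alg_npo_mixed} such a matching is then NPO (one may also read off the witnessing serial-dictatorship order via \Cref{lem:npo_serial}). Together with the rigidity of the variable gadgets, this gives the desired equivalence: an NPO matching exists if and only if the formula is satisfiable.

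The delicate part, and the step I expect to be the main obstacle, is the \emph{composition}. I must verify that the only matchings avoiding a local variable-gadget cycle are the two intended ones, so that no ``illegal'' matching can break clause triangles in a way that does not correspond to a single consistent assignment, and, conversely, that the arcs wiring clause gadgets to variable gadgets never create spurious cycles spanning several gadgets. Getting the revealed/unrevealed rank pattern exactly right, so that forbidden configurations are penalized by short cycles while all agents and houses can still be matched simultaneously with $G_M$ acyclic precisely on satisfying assignments, is where essentially all of the care in the argument resides.
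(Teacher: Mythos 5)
Your membership argument and your choice of starting point match the paper exactly: a matching $M$ is a polynomial-size witness checkable via \Cref{thrm:alg_npo_mixed}, and the hardness reduction is from \textsc{(2,2)-e3-sat} using the acyclicity of the ``could-prefer'' digraph as the working characterization of NPO matchings. However, what you have written is a design brief rather than a proof: the entire content of the hardness direction lies in actually exhibiting the revealed/unrevealed rank patterns and verifying both directions of the equivalence, and you explicitly defer this (``the delicate part\dots is the composition\dots where essentially all of the care in the argument resides''). No concrete gadget is constructed, so there is nothing to check for correctness. That is a genuine gap, not a stylistic omission, because the two things you flag as obstacles are precisely the things that are hard here: (i) ruling out ``illegal'' matchings that ignore the gadget structure, and (ii) preventing spurious cross-gadget cycles.

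It is also worth noting that the mechanism you sketch for the clause gadget is unlikely to work as stated. An arc $c_1 \to c_2$ in $G_M$ between two clause agents is determined by $c_1$'s partial preferences and the houses $M(c_1), M(c_2)$; to make that arc ``present exactly when $\ell_1$ is false'' you would need the clause agents' matched houses to vary with the variable gadget's configuration, which is a different (and more global) design than a fixed triangle whose arcs toggle. The paper resolves both of your obstacles not with local penalty cycles inside variable gadgets but with global scaffolding: a pair of dummy agents and dummy houses $d, d'$ whose partial preferences force every clause house onto one of its own three clause agents (any other assignment closes a cycle through a dummy agent), which in turn forces at least one variable house per clause to be taken by an agent of the matching sign, from which a satisfying assignment is read off; the converse direction builds an explicit serial-dictatorship order valid for every completion. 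If you want to complete your proof, you will need something playing the role of those dummy agents to police matchings that do not respect the gadget boundaries --- the local $2$-cycle penalties you propose do not by themselves prevent, say, a clause agent from grabbing a house belonging to an unrelated variable gadget.
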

The proof of this theorem and all further missing proofs are in the appendix.
Using this result, we can also show that the same problem is $\np$-complete in the set-compare model. For this we simply show how to, given an instance in the hybrid-query model, construct an instance in the set-compare model, such that a matching is NPO in the former model if and only if is also NPO in the latter.
\begin{restatable}{thm}{scnponpc}
Given partial preferences $\succ'$ in the set-compare model, it is $\np$-complete to determine whether a necessarily Pareto-optimal matching exists.
\label{thrm:alg_setcomp_npc_po}
\end{restatable}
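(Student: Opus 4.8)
The plan is to give a polynomial-time many-one reduction from the NPO-existence problem in the hybrid-query model, shown $\np$-complete in the preceding theorem, to the NPO-existence problem in the set-compare model, keeping the agent set $A$, the house set $H$, and hence the family of candidate matchings unchanged. Membership in $\np$ is immediate: a nondeterministic algorithm guesses a matching $M$ and verifies in polynomial time that $M$ is necessarily Pareto optimal using \Cref{thrm:alg_npo_sc}. So the work is in the hardness direction.

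Given a hybrid-query instance with partial rank function $\rank'$, I would define for each agent $a_i$ the partial order $\succ'_i$ consisting of exactly the \emph{forced} relations: put $h \succ'_i h'$ whenever $h \succ_i h'$ holds in every preference extension consistent with $\rank'$. This is computable in polynomial time by a short case analysis: among revealed houses the order is read off the ranks; a house revealed at rank $r$ is forced above an unrevealed house iff no rank below $r$ is free; symmetrically an unrevealed house is forced above a house revealed at rank $s$ iff no rank above $s$ is free; and two unrevealed houses are never comparable. Being the intersection of all completing linear orders, each $\succ'_i$ is a genuine partial order, so $(A, H, (\succ'_i)_i)$ is a valid set-compare instance.

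The core of the argument is to show that a matching $M$ is NPO in the hybrid instance iff it is NPO in the constructed set-compare instance, which I would establish by observing that the auxiliary digraph $G$ of \Cref{thrm:alg_npo_mixed} and the analogous digraph built in \Cref{thrm:alg_npo_sc} coincide. In either model an arc $a_i \to a_j$ is present precisely when $a_i$ can prefer $M(a_j)$ to $M(a_i)$ in some completion, which is equivalent to $M(a_i) \not\succ'_i M(a_j)$. In the hybrid model this matches the edge conditions of \Cref{thrm:alg_npo_mixed} by the direct case check above; in the set-compare model it follows from the fact that a partial order equals the intersection of its linear extensions, so $M(a_i) \succ_i M(a_j)$ is forced across all completions exactly when $M(a_i) \succ'_i M(a_j)$. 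Since the two digraphs are identical, they are acyclic on exactly the same matchings, NPO-ness is preserved matching-for-matching, and therefore an NPO matching exists in one instance iff it exists in the other.

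The step I expect to demand the most care is justifying that this pairwise digraph faithfully captures domination in \emph{both} models at once. The key point is that each arc constrains only a single agent's own preference list, and different agents' lists may be completed independently; hence any cycle whose arcs are individually realizable can be realized by one joint completion, which is what reduces NPO-ness to acyclicity. This same independence is what makes the reduction robust to the one genuine mismatch between the models, namely that the set-compare completions form a strictly larger family (unlike the hybrid model, a partial order cannot record how many unrevealed houses must fit between two known ranks). These additional completions introduce no new failures of forced relations, hence no new arcs and no new dominating matchings, so the equivalence goes through cleanly.
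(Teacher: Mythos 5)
Your proof is correct and follows essentially the same route as the paper: it reduces from NPO-existence in the hybrid-query model by replacing each agent's partial rank function with the partial order of forced pairwise comparisons, then argues matching-for-matching equivalence of NPO-ness via the cycle characterization from \Cref{thrm:alg_npo_mixed}. The only difference is that your partial order also records that an unrevealed house is forced above a revealed house whenever no worse rank is free---a family of forced relations the paper's construction omits---which makes your arc-by-arc identification of the two auxiliary digraphs slightly cleaner.
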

Even though it is inherently hard to find an NPO matching given partial preferences in the hybrid query model, we can still give an elicitation algorithm improving upon the competitive ratio for the next-best query model. 
Before proving this, we give a useful lower bound on the number of queries asked to an agent by using the serial dictatorship characterization of NPO matchings.
\begin{restatable}{lem}{lowerboundpo}
Let $\rank'$ be partial preferences and $M$ be an NPO matching in the hybrid-query model. Then there exists a permutation of agents $\sigma$ such that for all $i \in [n]$ agent $\sigma(i)$ has revealed at least $\min( \rank'(\sigma(i), M(\sigma(i))), n-i)$ of their preference list if $M(\sigma(i)) \in \rev(\sigma(i))$. If $M(\sigma(i)) \notin \rev(\sigma(i))$ the agent must have revealed at least $n-i$ houses.
\label{lem:lower_bound_po}
\end{restatable}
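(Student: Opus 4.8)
The plan is to reuse the auxiliary digraph $G = (A, E)$ built in the proof of \Cref{thrm:alg_npo_mixed}, where an arc $a_i \to a_j$ records that agent $a_i$ could prefer $M(a_j)$ to $M(a_i)$ in some completion of $\rank'$. Since $M$ is NPO, $G$ is acyclic, so it admits a topological ordering; I would take $\sigma$ to be a \emph{reversed} topological ordering, exactly the permutation used in \Cref{lem:npo_serial}. The structural fact I will exploit is that every arc leaving $\sigma(i)$ points to an earlier agent among $\sigma(1), \dots, \sigma(i-1)$; hence the $n-i$ agents $\sigma(i+1), \dots, \sigma(n)$ are all \emph{non}-out-neighbours of $\sigma(i)$. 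For each such non-neighbour $b$, the \emph{absence} of the arc $\sigma(i) \to b$ is, by the definition of $E$, a statement that forces $\sigma(i)$ to have revealed certain houses, so the whole proof is a translation of these non-arc conditions into a count of revealed houses, split according to whether $M(\sigma(i)) \in \rev(\sigma(i))$.

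For the case $M(\sigma(i)) \in \rev(\sigma(i))$, write $r = \rank'(\sigma(i), M(\sigma(i)))$ and examine the $n-i$ non-neighbours $b$. If some such $b$ has $M(b) \notin \rev(\sigma(i))$, then negating the second arc-creation rule shows that no rank below $r$ may be unrevealed for $\sigma(i)$, so ranks $1, \dots, r$ are all revealed, giving at least $r \ge \min(r, n-i)$ revealed houses. Otherwise every one of these $n-i$ non-neighbours has $M(b) \in \rev(\sigma(i))$; since $M$ assigns distinct houses to distinct agents these are $n-i$ distinct revealed houses, so $\sigma(i)$ has revealed at least $n-i \ge \min(r, n-i)$ houses. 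Either way the claimed bound holds.

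For the case $M(\sigma(i)) \notin \rev(\sigma(i))$, I use the final arc-creation rule, which produces an arc whenever \emph{both} $M(\sigma(i))$ and $M(b)$ are unrevealed. Thus, for every non-neighbour $b$ among $\sigma(i+1), \dots, \sigma(n)$, we must already have $M(b) \in \rev(\sigma(i))$, yielding $n-i$ distinct revealed houses and hence exactly the required bound of $n-i$.

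The argument is short, and the single place needing care is reading off the correct contrapositives of the four arc-creation rules of \Cref{thrm:alg_npo_mixed} and verifying that the houses I count — the $M(b)$'s of the later agents, and the rank-prefix $1, \dots, r$ — are genuinely distinct, which again follows because $M$ matches distinct agents to distinct houses. I would also state explicitly that the reversed-topological-order choice of $\sigma$ is the same permutation produced in \Cref{lem:npo_serial}, so that a single $\sigma$ simultaneously witnesses the serial-dictatorship characterisation and this lower bound, which is convenient for the elicitation algorithm that uses it.
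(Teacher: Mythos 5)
Your proof is correct and takes essentially the same route as the paper: the paper also chooses $\sigma$ to be the serial-dictatorship permutation of \Cref{lem:npo_serial} (i.e., the reversed topological ordering of the graph $G$ from \Cref{thrm:alg_npo_mixed}) and performs the identical case split on whether all of $M(\sigma(i+1)),\dots,M(\sigma(n))$ are revealed to $\sigma(i)$ or some such house is unrevealed, forcing the full rank prefix up to $\rank'(\sigma(i),M(\sigma(i)))$ to be revealed. The only difference is presentational --- you read the bounds off the contrapositives of the arc-creation rules rather than via the ``ranked lower in every extension'' phrasing, which has the minor benefit of making the $M(\sigma(i))\notin\rev(\sigma(i))$ case fully explicit.
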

\begin{proof}
By Lemma \ref{lem:npo_serial} we know that there has to be a permutation of agents $\sigma$ such that $M = SD_\succ(\sigma)$ for all possible preference extensions $\succ$ of $\rank'$. Then since for every preference extension $SD_\succ(\sigma)$ matched $\sigma(i)$ to $M(\sigma(i))$ we know that all houses matched to $\sigma(i+1), \dots, \sigma(n)$ must be ranked lower than $M(\sigma(i))$ in all preference extensions. Thus, we either know the preferences of all houses matched to $\sigma(i+1), \dots, \sigma(n)$ and have thus revealed at least $n-i$ of the preference list of $\sigma(i)$ or there is at least one house matched to $\sigma(i+1), \dots, \sigma(n)$ we do not know the preference of. 
Then we must have queried the preferences of all houses ranked at least as high as $M(\sigma(i))$ thus requiring us to reveal at least $\rank'(\sigma(i), M(\sigma(i)))$ houses in the preference list of $\sigma(i)$.
\end{proof}
\begin{algorithm}[tb]
\caption{Elicitation algorithm for Pareto optimal matchings in the hybrid query model}
\label{alg:po_hybrid}
\textbf{Input}: Set of agents $A$, set of houses $H$, parameter $c_0 > \frac{1}{3}$.\\
\textbf{Output}: A necessarily Pareto optimal matching.
\begin{algorithmic}[1] 
\STATE set $E \gets \emptyset,M \gets \emptyset, j \gets 0$
\FOR{$i = 1, \dots, n$}
\FORALL{$a \in A$}
\STATE $E \gets E \cup \{\{a, \mathcal{Q}(a,i)\}\}$
\ENDFOR
\STATE $M \gets$ maximum size Pareto-optimal matching in $(A \cup H, E)$
\IF{$i = \lceil n^{c_j}\rceil$}
\IF{$\lvert M \rvert \ge n - \left\lceil n^{(c_j+1)/2}\right\rceil$}
\STATE break for loop 
\ELSE
\STATE $j \gets j + 1, c_j \gets (3c_{j-1} + 1)/2 + c_0 -1$
\ENDIF
\ENDIF
\ENDFOR
\STATE $H'\subseteq H \gets$ subset matched by $M$
\STATE $A'\subseteq A\gets$ subset matched by $M$
\FORALL{$a \in A \setminus A'$}
\STATE $h \gets \argmin_{h \in H \setminus H'} \mathcal{Q}(a,h)$
\STATE $M\gets M \cup \{\{a,h\}\}$ 
\STATE $H' \gets H' \cup \{h\}$
\ENDFOR
\end{algorithmic}
\end{algorithm}
The crucial idea behind our \Cref{alg:po_hybrid} for eliciting an NPO matching is now as follows. Assume that we want to construct an algorithm with competitive ratio $n^c$ for some constant $c > 0$ and that we know that at least $k$ agents must be matched to a house of at least rank $r$ with $k \ge r$, for instance by knowing that the maximum matching in the top-$r$ preferences has size $n-k$. Then by \Cref{lem:lower_bound_po} we know that at least $\frac{k}{2}$ agents must be asked at least $\min(r, \frac{k}{2})$ queries, since at least $\frac{k}{2}$ of the agents matched to a house with a rank of at least $r$ must be listed between $r$ and $n-\frac{k}{2}$ by $\sigma$. Thus, we know that any optimal algorithm must ask at least $\Omega(kr)$ queries which allows our online algorithm to ask $\mathcal{O}(n^c kr)$ queries. The trick behind \Cref{alg:po_hybrid} is now to choose these values of $c,k,$ and $r$ appropriately. To give some further idea behind the value of $\frac{1}{3}$ we show that $c_0 > \frac{1}{3}$ implies that the $(c_j)$ series in \Cref{alg:po_hybrid} is increasing. 
\begin{restatable}{lem}{increasing}
The series $(c_j)_{j \in \mathbb{N}}$ with $c_{j+1} =(3c_{j} + 1)/2 + c_0 -1$ is increasing if $c_0 > \frac{1}{3}$.
\label{lem:increasing}
\end{restatable}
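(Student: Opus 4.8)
The plan is to analyze the affine recurrence directly rather than via any ad hoc inequality chasing. First I would rewrite the update rule in the simpler form $c_{j+1} = \frac{3}{2}c_j + c_0 - \frac{1}{2}$, obtained by expanding $(3c_j+1)/2 + c_0 - 1$. This is an affine map with multiplier $\frac{3}{2} > 1$, so it has a unique fixed point $c^\ast$, found by solving $c^\ast = \frac{3}{2}c^\ast + c_0 - \frac{1}{2}$, which gives $c^\ast = 1 - 2c_0$.

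The key observation is that the sign of each increment is governed entirely by the position of $c_j$ relative to $c^\ast$. Subtracting $c_j$ from both sides of the recurrence yields $c_{j+1} - c_j = \frac{1}{2}c_j + c_0 - \frac{1}{2} = \frac{1}{2}(c_j - c^\ast)$. Hence the sequence is strictly increasing if and only if $c_j > c^\ast$ holds for every index $j$, and I would reduce the entire statement to proving the single invariant $c_j > 1 - 2c_0$ for all $j$.

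This invariant I would establish by induction. For the base case, $c_0 > c^\ast = 1 - 2c_0$ is equivalent to $3c_0 > 1$, that is, exactly to the hypothesis $c_0 > \frac{1}{3}$; this is the one and only place where the precise threshold $\frac{1}{3}$ enters. For the inductive step, subtracting $c^\ast$ from the recurrence gives $c_{j+1} - c^\ast = \frac{3}{2}(c_j - c^\ast)$, so $c_j - c^\ast > 0$ immediately forces $c_{j+1} - c^\ast > 0$. Combining the invariant with the increment formula $c_{j+1} - c_j = \frac{1}{2}(c_j - c^\ast) > 0$ then concludes that the series is increasing.

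Since every step is an elementary algebraic manipulation, I do not expect a genuine obstacle. The only real content is spotting that monotonicity is equivalent to remaining above the fixed point $1 - 2c_0$, and that the condition $c_0 > \frac{1}{3}$ is precisely what places the starting value above that fixed point (after which the multiplier $\frac{3}{2}>1$ keeps all iterates above it). The remaining care is purely bookkeeping about the index at which the sequence is declared to start.
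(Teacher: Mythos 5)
Your proof is correct; every algebraic step checks out (the fixed point of $c \mapsto \tfrac{3}{2}c + c_0 - \tfrac12$ is indeed $c^\ast = 1-2c_0$, the increment identity $c_{j+1}-c_j = \tfrac12(c_j - c^\ast)$ is right, and $c_0 > c^\ast$ is exactly $c_0 > \tfrac13$). The paper's proof has the same skeleton---an induction maintaining a lower-bound invariant on $c_j$---but uses a different invariant: it keeps $c_j > \tfrac13$ and shows directly that $c_{j+1} \ge c_j + \varepsilon$ with $\varepsilon = c_0 - \tfrac13$ (the paper writes $\varepsilon = \tfrac13 - c_0$, which is a sign typo), by bounding $\tfrac{c_j+1}{2} \ge \tfrac23$. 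Your fixed-point formulation is the more structural route: it isolates \emph{why} $\tfrac13$ is the threshold (it is precisely the condition $c_0 > c^\ast$), shows the hypothesis is sharp rather than merely sufficient, and the invariant $c_{j+1}-c^\ast = \tfrac32(c_j-c^\ast)$ propagates with no inequality estimation at all. What the paper's version buys instead is an explicit uniform increment $c_{j+1} \ge c_j + (c_0-\tfrac13)$, which is the form actually invoked later in the analysis of Algorithm~\ref{alg:po_hybrid} (to argue $\lceil n^{c_{j+1}}\rceil \ge \lceil n^{c_j + c_0 - 1/3}\rceil$); your argument yields the same bound with one extra line, since $c_j - c^\ast \ge c_0 - c^\ast = 3(c_0-\tfrac13)$ for all $j$.
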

Using this Lemma, we can now turn to the correctness of \Cref{alg:po_hybrid}.
\begin{restatable}{thm}{algopohq}
For any $c > \frac{1}{3}$ Algorithm \ref{alg:po_hybrid} is an $\mathcal{O}(n^{c_0})$-competitive algorithm for eliciting a necessarily Pareto optimal matching in the hybrid-query model.
\end{restatable}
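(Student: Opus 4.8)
The plan is to prove three facts about \Cref{alg:po_hybrid}: that the matching it outputs is necessarily Pareto optimal, that the number of queries it asks when it breaks at the $j$-th checkpoint is $\mathcal{O}(n^{c_j+1})$, and that any optimal algorithm is already forced to ask $\Omega(n^{(3c_{j-1}+1)/2})$ queries; the recurrence defining $c_j$ is then engineered to collapse the ratio to $n^{c_0}$. For correctness I would use the serial dictatorship characterization of \Cref{lem:npo_serial}, so it suffices to exhibit one permutation $\sigma$ with $M=SD_\succ(\sigma)$ in every extension $\succ$. I would place the main-loop agents $A'$ first, ordered by a serial dictatorship order $\sigma_0$ witnessing that the Pareto-optimal matching on the revealed top-$i$ graph is an $SD$ outcome (such $\sigma_0$ exists by the classification of \citet{AbSo98a}), and append the clean-up agents $A\setminus A'$ in their processing order. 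The key point to verify is that no main-loop agent $a$ strictly prefers a house assigned to a later agent: if that house had rank at most $i$ in $a$'s list it would be a revealed edge of $(A\cup H,E)$, so re-routing $a$ to it would Pareto-improve or augment $M$, contradicting that $M$ is a \emph{maximum-size} Pareto-optimal matching on the revealed graph; hence the house has rank exceeding $i$ and is dispreferred to $M(a)$ in all extensions. Since each clean-up agent takes its best remaining house via house queries, this tail is a serial dictatorship on the leftover instance, so the combined $\sigma$ works and \Cref{lem:npo_serial} yields that $M$ is NPO.

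For the query count, suppose the loop breaks at $i=\lceil n^{c_j}\rceil$. The main loop asks one rank query per agent for each of $1,\dots,\lceil n^{c_j}\rceil$, i.e. $n\lceil n^{c_j}\rceil=\mathcal{O}(n^{c_j+1})$ queries; the break condition guarantees $n-|M|\le\lceil n^{(c_j+1)/2}\rceil$, so the clean-up phase asks at most $(n-|M|)^2=\mathcal{O}(n^{c_j+1})$ house queries, for a total of $\mathcal{O}(n^{c_j+1})$.

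For the lower bound I would exploit the \emph{failure} at the preceding checkpoint. If the algorithm breaks at checkpoint $j\ge1$, then at $i=\lceil n^{c_{j-1}}\rceil=:r$ the maximum Pareto-optimal (equivalently maximum-cardinality) matching of the revealed graph had size below $n-\lceil n^{(c_{j-1}+1)/2}\rceil$, so in every perfect matching — in particular the one the optimal algorithm certifies — at least $k>n^{(c_{j-1}+1)/2}\ge r$ agents are matched to a house of rank exceeding $r$. Feeding this into \Cref{lem:lower_bound_po} gives a permutation in which at least $k/2$ of these agents occupy positions $\le n-k/2$, each having revealed $\min(r,k/2)=\Omega(r)$ houses, whence the optimal algorithm asks $\Omega(kr)=\Omega(n^{(3c_{j-1}+1)/2})$ queries. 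Dividing the upper bound by this and substituting $c_j=(3c_{j-1}+1)/2+c_0-1$ makes the exponent exactly $c_0$. The base case $j=0$ is separate: if two agents are never queried, some extension lets their assigned houses be swapped and breaks Pareto optimality, so any optimal algorithm asks $\ge n-1=\Omega(n)$ queries, and $\mathcal{O}(n^{c_0+1})/\Omega(n)=\mathcal{O}(n^{c_0})$. Finally, \Cref{lem:increasing} shows $(c_j)$ is strictly increasing, so a checkpoint with $c_j\ge1$ (a fully revealed, hence NPO, matching) is reached and the loop terminates; in the boundary case where the loop runs to $i=n$ without a checkpoint being hit, $c_j>1$ forces $(3c_{j-1}+1)/2>2-c_0$, so the trivial cost $\mathcal{O}(n^2)$ divided by $\Omega(n^{(3c_{j-1}+1)/2})$ is still $\mathcal{O}(n^{c_0})$.

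The hard part will be the correctness argument rather than the arithmetic: pinning down precisely why a \emph{maximum-size} Pareto-optimal matching on the revealed top-$i$ graph, glued to the clean-up serial dictatorship, is globally a single serial dictatorship and therefore NPO (the maximality is what rules out a main-loop agent coveting a clean-up house). Once that is in place, the competitive-ratio bookkeeping is just the recurrence, which was reverse-engineered so that each checkpoint yields exactly $n^{c_0}$.
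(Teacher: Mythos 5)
Your proposal tracks the paper's own proof essentially step for step: correctness by gluing an SD witness for the main-loop matching onto the clean-up order and invoking \Cref{lem:npo_serial}, the $\mathcal{O}(n^{c_j+1})$ query count, the $\Omega(n^{(3c_{j-1}+1)/2})$ lower bound on the optimal algorithm extracted from the failed checkpoint via \Cref{lem:lower_bound_po}, and the same three-case analysis (base case $j=0$, break at some $j\ge 1$, loop never breaks) with the recurrence collapsing each ratio to $n^{c_0}$. The only divergence is that you make the correctness step (why a main-loop agent cannot prefer a house left to the clean-up phase, via maximality of $M$ on the revealed graph) more explicit than the paper does; the argument is sound.
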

As an easy corollary, this implies that for any $\varepsilon > 0$, we can reach a competitive ratio of $\mathcal{O}(n^{\frac{1}{3} + \varepsilon})$.
Further, we can also show that this competitive ratio is almost asymptotically optimal by showing that no online algorithm can be $o(n^\frac{1}{3})$-competitive. 
\begin{restatable}{thm}{lowerboundhqpo}
There is no online algorithm in the hybrid-query model for computing a necessarily Pareto optimal matching with a competitive ratio of $o(n^\frac{1}{3})$.
\end{restatable}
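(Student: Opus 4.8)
The plan is to prove the lower bound by an adversary argument: I would exhibit a family of instances, together with an adaptive adversary answering the algorithm's queries, on which any online algorithm is forced to ask $\Omega(n^{5/3})$ queries while the optimal offline algorithm certifies a necessarily Pareto optimal matching with only $\cO(n^{4/3})$ queries, giving a ratio of $\Omega(n^{1/3})$. I would set the two governing parameters to a \emph{conflict depth} $r = \lceil n^{2/3}\rceil$ and a number of \emph{expensive agents} $k = \lceil n^{2/3}\rceil$, so that the gap $n/k = \Theta(n^{1/3})$ is exactly what drives the bound.

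The instance I would build hides a conflict whose resolution requires deep queries and whose location the algorithm cannot guess. Concretely, I would take a large pool of agents that all look identical from the outside: each is a potential member of a contested group competing, through a hidden permutation of its top $r$ entries, for a common bundle of popular houses, so that in every completion exactly $k$ of the contesting agents are forced onto houses of rank at least $r$. In the committed profile only $k$ of the $n$ agents are actually pushed this deep, and the rest receive a uniquely favorite house at rank $1$. This lets the offline algorithm place the $k$ expensive agents late in a serial-dictatorship order and, invoking \Cref{lem:lower_bound_po}, certify the matching with $\cO(kr) = \cO(n^{4/3})$ queries, the remaining agents contributing only a lower-order $\cO(n)$ term.

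The heart of the argument is the adversary's answering strategy and an ambiguity invariant. I would maintain the invariant that, as long as the algorithm has not deep-queried (to rank $\Omega(r)$) at least $\Omega(n)$ of the pool agents, the adversary can still extend the revealed partial preferences in two ways that disagree on which agents end up deep, so that every matching the algorithm might output is dominated by a Pareto-improving cycle in at least one consistent completion. By the serial-dictatorship characterization (\Cref{lem:npo_serial}), certifying that a matching is necessarily Pareto optimal amounts to pinning down a forced dictatorship order; I would argue that the hidden conflict prevents any such order from being certified while many agents remain shallowly queried, so no output can yet be declared necessarily Pareto optimal. Counting the queries needed to break the invariant then yields the $\Omega(n\cdot r) = \Omega(n^{5/3})$ bound, and committing the adversary to a consistent instance with $\opt = \cO(n^{4/3})$ closes the ratio.

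The main obstacle I anticipate is twofold, both stemming from the power of the hybrid model. First, the online algorithm may freely interleave rank queries and house queries, and a single house query can cheaply reveal the rank of a named house; I must therefore design the hidden conflict so that the decisive information is genuinely spread over $\Omega(r)$ independent positions per agent and cannot be compressed into a few well-chosen house queries, making deep exploration unavoidable regardless of the query mix. Second, the algorithm is free to output any matching rather than a fixed target, so the adversary must defeat all output choices simultaneously; the technical crux is to show that the invariant keeps at least two consistent completions alive whose necessarily-Pareto-optimal statuses are incompatible for every candidate matching, which is where the careful counting of unresolved agents and the appeal to \Cref{lem:lower_bound_po} must be made airtight.
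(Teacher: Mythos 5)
There is a structural flaw in your parameter choice that makes the construction unrealizable. You commit to a profile in which all but $k=\lceil n^{2/3}\rceil$ agents ``receive a uniquely favorite house at rank~$1$,'' and you want every online algorithm to nonetheless spend $\Omega(n\cdot n^{2/3})=\Omega(n^{5/3})$ queries while $\opt=\Theta(n^{4/3})$. But any such instance admits a cheap online algorithm: query rank~$1$ of every agent ($n$ queries), match each agent whose revealed first choice is not shared to that house, and then exhaustively learn the remaining $k$ agents' ranks of the remaining $k$ houses with $k^2$ house queries and run serial dictatorship on that sub-instance. Agents matched to their top choice can never lie on a Pareto-improving cycle, and the leftover block is fully ordered, so by \Cref{lem:npo_serial} the combined matching equals $SD_\succ(\sigma)$ for every consistent completion and is hence NPO. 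This costs $O(n+k^2)=O(n^{4/3})$ queries, which matches the $\Omega(kr)=\Omega(n^{4/3})$ lower bound on $\opt$ that your own appeal to \Cref{lem:lower_bound_po} forces (that lemma only gives \emph{lower} bounds on $\opt$, so once $k$ agents are matched at rank $\ge r$ you cannot have $\opt=o(kr)$). The ratio on your family is therefore $O(1)$, not $\Omega(n^{1/3})$: making the expensive agents expensive enough to justify a large $\alg$ unavoidably makes $\opt$ large as well, and your ambiguity invariant cannot survive the rank-$1$ sweep. You also never resolve the obstacle you flag yourself, namely that house queries can certify ``no critical house in this agent's top $r$'' with far fewer than $r$ queries unless the number of candidate critical houses is itself $\Omega(r)$.

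The paper's construction avoids this by keeping $\opt=O(n)$ and forcing $\alg=\Omega(n^{4/3})$. It uses three tiers of agents and a set of $n^{2/3}$ special houses that must be matched \emph{somewhere} in any NPO matching, and balances two escape routes so that both are expensive for the online algorithm: either a special house is matched to an agent ranking it within its top $n^{1/3}$, in which case the algorithm must have verified, for each of $\Theta(n)$ first-choice agents, that no special house hides in its top $n^{1/3}$ (costing $n^{1/3}$ rank queries per agent, and house queries do not help because there are $n^{2/3}\gg n^{1/3}$ special houses to rule out); or the special houses are matched to agents ranking them in their last $n^{2/3}$ positions, in which case \Cref{lem:lower_bound_po} already charges $\Omega(n^{2/3})$ queries to each of $\Omega(n^{2/3})$ agents. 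Both branches cost $\Omega(n^{4/3})$ against an offline cost of $O(n)$. If you want to salvage your plan, you should move the ``hidden conflict'' down to depth $n^{1/3}$ for the many agents and reserve depth $n^{2/3}$ for only $n^{2/3}$ of them, which is essentially the paper's balance.
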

This of course still leaves the possibility of an online algorithm with a competitive ratio of $\Theta(n^\frac{1}{3})$.

\section{Rank-Maximal Matchings}
In this section, we turn to the problem of eliciting rank-maximal matchings. In their paper, \citet{HMSS21} showed that no online algorithm for the rank-maximal matching problem in the next-best query setting can be better than $\frac{4}{3}$ competitive. Here, we give an algorithm that is $\frac{3}{2}$-competitive and improve the lower bound of \citet{HMSS21} to $\frac{3}{2}$ as well, thus showing that the competitive ratio of our algorithm is tight. However, before defining this algorithm, we first need to recap some results from the classical work of \citet{IKMMP06}. First, we recall the definition of the so-called Dulmage–Mendelsohn decomposition.
\begin{restatable}[\citet{IKMMP06}, \citet{Manl13a}]{thm}{dulmage}
Given a bipartite graph $G = (V,E)$ there exists a partition of $V$ into three sets $\cO, \cE, \cU$ such that 
\begin{itemize}
    \item Any maximum matching only contains edges between $\cU$ and $\cU$ as well as between $\cE$ and $\cO$.
    \item Any maximum matching matches every vertex in $\cU$ and in $\cO$. 
    \item The cardinality of a maximum matching is $\lvert \cO\rvert + \frac{1}{2}\lvert \cU \rvert$.
    \item There is no edge between $\cE$ and $\cE$ as well as between $\cE$ and $\cU$.
\end{itemize}
\label{thrm:dmcomp}
\end{restatable}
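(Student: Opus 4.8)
The plan is to follow the classical alternating-path argument underlying the Gallai--Edmonds / Dulmage--Mendelsohn structure theorem, specialized to the bipartite setting. First I would fix an arbitrary maximum matching $M$ and, starting from the set of $M$-exposed (unmatched) vertices, grow $M$-alternating paths: I define $\cE$ to be the vertices reachable from some exposed vertex by an even-length alternating path (so every exposed vertex lies in $\cE$), $\cO$ the vertices reachable by an odd-length alternating path, and $\cU$ the remaining vertices. The first thing to verify is that this really is a partition, i.e.\ that no vertex is simultaneously even- and odd-reachable. This is exactly where bipartiteness enters: since a bipartite graph has no odd cycles, two alternating paths of opposite parity ending at the same vertex would splice into an $M$-augmenting walk between two exposed vertices, from which one extracts a genuine augmenting path, contradicting the maximality of $M$.

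Next I would establish the two structural edge conditions, which are statements about $G$ rather than about $M$. If there were an edge between two vertices of $\cE$, concatenating their even alternating paths through this edge would yield an odd alternating path between two exposed vertices, i.e.\ an augmenting path, contradicting maximality; hence there is no $\cE$-$\cE$ edge. If there were an edge $uv$ with $u \in \cE$ and $v \in \cU$, then extending $u$'s even path along $uv$ (necessarily a non-matching edge, since the matching edge at an even matched vertex already lies on its alternating path) exhibits an odd path to $v$, forcing $v \in \cO$, a contradiction. The same bookkeeping shows that every matching edge of $M$ is either $\cE$-$\cO$ or $\cU$-$\cU$, that every vertex of $\cO$ and $\cU$ is $M$-matched (an exposed $\cO$-vertex would again give an augmenting path), and that $\cO$ is matched bijectively into $\cE$ while $\cU$ is matched within itself. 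Counting matched vertices then gives $\lvert M \rvert = \lvert \cO \rvert + \tfrac{1}{2}\lvert \cU \rvert$, establishing the cardinality claim.

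The final and most delicate step is to upgrade the three ``any maximum matching'' assertions from the fixed $M$ to an arbitrary maximum matching $M'$, i.e.\ to show the decomposition is \emph{canonical}. Here I would argue purely from the already-established graph structure by a tight counting bound. Since $G$ has no $\cE$-$\cE$ and no $\cE$-$\cU$ edge, every edge of $M'$ is incident to $\cO \cup \cU$; the edges of $M'$ touching $\cO$ number at most $\lvert \cO \rvert$ (each consumes a distinct $\cO$-vertex) and the $\cU$-$\cU$ edges number at most $\tfrac{1}{2}\lvert \cU \rvert$, so $\lvert M' \rvert \le \lvert \cO \rvert + \tfrac{1}{2}\lvert \cU \rvert = \lvert M \rvert$. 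Because $M'$ is maximum both inequalities must be tight, which forces every vertex of $\cO$ and of $\cU$ to be matched, rules out $\cO$-$\cO$ and $\cO$-$\cU$ edges in $M'$, and leaves only $\cE$-$\cO$ and $\cU$-$\cU$ matching edges, exactly the stated properties. I expect this canonicity step to be the main obstacle: the individual bullets are almost immediate for the specific matching used to build the partition, but turning them into statements universally quantified over all maximum matchings is what genuinely requires the tight counting argument combined with the structural edge conditions proved in the previous step.
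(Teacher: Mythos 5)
The paper does not prove this statement at all: it is imported verbatim from \citet{IKMMP06} and \citet{Manl13a} as a known structural result, and the only thing the paper adds afterwards is the alternating-path characterization of $\cE$, $\cO$, $\cU$ that your construction takes as its definition. So there is nothing internal to compare against; what I can say is that your proposal is a correct, self-contained proof along the classical lines. The construction (even-/odd-reachability from exposed vertices under a fixed maximum matching $M$), the structural edge conditions, the matching of $\cO$ into $\cE$ and of $\cU$ into itself, the cardinality count, and in particular the final tight-counting step that transfers the three ``any maximum matching'' claims from the fixed $M$ to an arbitrary $M'$ are all sound, and you correctly identify that last step as the one that genuinely needs the no-$\cE$-$\cE$/no-$\cE$-$\cU$ edge conditions rather than following from the construction alone. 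The only places where a fully written-out version would need a little extra care are routine: extracting an augmenting \emph{path} from the spliced augmenting \emph{walk} (fine in bipartite graphs, where the $u_1=u_2$ case is killed by the absence of odd closed walks), and the truncation argument when an extended alternating path revisits a vertex. Neither affects correctness.
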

Further, it is shown by \citet{IKMMP06} that $\cE$ is the set of vertices which can reach an unmatched vertex in any maximum matching with an alternating path of even length, vertices in $\cO$ can reach an unmatched vertex with an alternating path of odd length, and vertices in $\cU$ cannot reach any unmatched vertex using an alternating path. Using the Dulmage–Mendelsohn decomposition \citet{IKMMP06} defined the following iterative algorithm for finding a rank-maximal matching. In each iteration $i$ the algorithm maintains a graph $G_i = (A \cup H, E_i)$ and a matching $M_i$ in $G$. It is initialized with $E_0 = \emptyset, M_0 = \emptyset, \cO_0 = \emptyset = \cU_0$ and $\cE_0 = A \cup H$. 
For each $i = 1, \dots, n$ the algorithm
\begin{itemize}
    \item adds all edges of rank $i$ or less that have not been deleted yet to $E_i$ and computes a maximum matching $M_i$ in $G_i = (A \cup H, E_i)$ by augmenting $M_{i-1}$;
    \item computes a Dulmage–Mendelsohn decomposition $\cU_i, \cE_i, \cO_i$ for $G_{i}$
    \item deletes all edges incident to a node in $\cU_{i}$ and $\cO_{i}$ with a rank greater or equal to $i$, as well as all edges connecting either two nodes in $\cO_{i}$ or a node in $\cO_{i}$ with a node in $\cU_{i}$;
\end{itemize}
The key observations of \citet{IKMMP06} are now that 
\begin{restatable}{lem}{rmchar}
\begin{itemize}
    \item Every rank-maximal matching in the instance restricted to top-$k$ preferences is a maximum matching in $G_k$.
    \item Every $M_k$ is a rank-maximal matching in the instance restricted to top-$k$ preferences.
\end{itemize}
\label{lem:rm_char}
\end{restatable}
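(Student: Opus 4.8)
The plan is to prove, by simultaneous induction on $k$, the two stated claims together with the strengthening that makes the induction go through: that \emph{every} maximum matching in $G_k$ (not merely the algorithm's output $M_k$) is rank-maximal in top-$k$, and conversely that every rank-maximal top-$k$ matching is a maximum matching in $G_k$. Since the algorithm obtains $M_k$ as a maximum matching of $G_k$, the second bullet follows at once from the strengthened second direction. For the base case $k=1$, the graph $G_1$ consists of all rank-$1$ edges, and a rank-maximal top-$1$ matching is precisely one maximizing the number of agents matched to their first choice, i.e.\ a maximum matching of $G_1$; both directions are then immediate. Throughout, I write $s_j$ for the size of a maximum matching in $G_j$ and $(r_1,\dots,r_k)$ for the rank-maximal signature of the top-$k$ instance.

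The central tool, which I would establish first, is a structural claim proved using \Cref{thrm:dmcomp}: every maximum matching of $G_k$ meets the rank-$(\le k-1)$ edges in a maximum matching of $G_{k-1}$, and every rank-$k$ edge of $G_k$ joins two vertices of $\cE_{k-1}$. The second part holds because at iteration $k-1$ the algorithm deletes every edge of rank $\ge k-1$ incident to $\cU_{k-1}\cup\cO_{k-1}$, so in particular every rank-$k$ edge touching $\cU_{k-1}\cup\cO_{k-1}$ is removed before it could be added at iteration $k$. By \Cref{thrm:dmcomp} every vertex of $\cU_{k-1}\cup\cO_{k-1}$ is matched in every maximum matching of $G_{k-1}$, and since these vertices receive no new incident edges, any augmentation of $M_{k-1}$ proceeds only through $\cE_{k-1}$; consequently a maximum matching of $G_k$ restricts to a maximum matching of $G_{k-1}$ and adds exactly $s_k-s_{k-1}$ rank-$k$ edges.

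Both directions then follow from the lexicographic definition of rank-maximality. For the first direction, let $N$ be rank-maximal in top-$k$; deleting its rank-$k$ edges yields a top-$(k-1)$ matching whose signature prefix must itself be lexicographically maximal (else $N$ could be improved), so by the inductive hypothesis $N$ restricted to rank $\le k-1$ is a maximum matching of $G_{k-1}$. In particular $N$ already saturates $\cU_{k-1}\cup\cO_{k-1}$ using rank-$(\le k-1)$ edges, so its rank-$k$ edges lie inside $\cE_{k-1}$ and survive the deletions, giving $N\subseteq G_k$; the structural claim then forces $|N|=s_k$, since otherwise a maximum matching of $G_k$ would restrict to a rank-maximal top-$(k-1)$ matching yet carry more than $|N|-s_{k-1}$ rank-$k$ edges, contradicting the rank-maximality of $N$. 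Hence $N$ is maximum in $G_k$, and this also shows $s_k-s_{k-1}=r_k$. For the second direction, any maximum matching $M$ of $G_k$ restricts, by the structural claim, to a maximum matching of $G_{k-1}$, which by the inductive hypothesis has the rank-maximal prefix $(r_1,\dots,r_{k-1})$; combining this with $|M|=s_k$ and the identity $s_k-s_{k-1}=r_k$ just derived, $M$ has signature $(r_1,\dots,r_{k-1},r_k)$ and is therefore rank-maximal in top-$k$.

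The main obstacle is the structural claim that maximum matchings of $G_k$ restrict to maximum matchings of $G_{k-1}$: this is exactly the point where the edge deletions of the previous iteration are indispensable, and it requires a careful alternating-path argument (equivalently, using the even--odd--unreachable characterization of $\cE_{k-1},\cO_{k-1},\cU_{k-1}$ recalled after \Cref{thrm:dmcomp}) to show that introducing the rank-$k$ edges, all of which lie within $\cE_{k-1}$, cannot produce a larger matching that fails to contain a maximum matching of $G_{k-1}$. Once this is in place, the remaining reasoning is bookkeeping with the lexicographic order and the inductive hypotheses.
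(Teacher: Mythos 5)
First, note that the paper does not prove \Cref{lem:rm_char} at all: it is stated as a recap of results of \citet{IKMMP06}, so there is no in-paper proof to compare against. Your architecture --- simultaneous induction on $k$, strengthened to ``every maximum matching of $G_k$ is rank-maximal in top-$k$,'' driven by a structural claim that maximum matchings of $G_k$ restrict to maximum matchings of $G_{k-1}$ --- is essentially the argument of \citet{IKMMP06}, and the bookkeeping (lex-maximality of the signature prefix, the identity $s_k-s_{k-1}=r_k$) is correct. One small repair: for the rank-$k$ edges of a rank-maximal $N$ to ``survive the deletions'' you need both endpoints in $\cE_j$ for \emph{every} $j\le k-1$, not only $j=k-1$, since an edge incident to $\cO_j\cup\cU_j$ is already deleted in iteration $j$; this follows by applying your inductive hypothesis at every level $j<k$ (each restriction of $N$ to ranks at most $j$ is maximum in $G_j$, hence saturates $\cO_j\cup\cU_j$), but your induction as set up only invokes level $k-1$.

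The genuine gap is exactly where you flag it, and the justification you give there is a non sequitur. ``Any augmentation of $M_{k-1}$ proceeds only through $\cE_{k-1}$'' is a statement about the particular matching $M_k$ produced by the algorithm; it says nothing about an arbitrary maximum matching $M$ of $G_k$, which need not arise by augmenting $M_{k-1}$, so ``consequently a maximum matching of $G_k$ restricts to a maximum matching of $G_{k-1}$'' does not follow. What \emph{does} follow cheaply from \Cref{thrm:dmcomp} is one inequality: after the deletions, every rank-$(\le k-1)$ edge of $G_k$ is of type $\cE_{k-1}$--$\cO_{k-1}$ or $\cU_{k-1}$--$\cU_{k-1}$, so the rank-$(\le k-1)$ part of any matching of $G_k$ has size at most $\lvert\cO_{k-1}\rvert+\frac{1}{2}\lvert\cU_{k-1}\rvert=s_{k-1}$. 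The missing content is the reverse inequality for maximum $M$, i.e., exhibiting an $M$-augmenting path in $G_k$ whenever the rank-$(\le k-1)$ part falls short of $s_{k-1}$; the $\cU$-deficient case is easy because $\cU_{k-1}$ vertices carry no rank-$k$ edges, but the $\cO$-deficient case is not, because the alternating path can escape $G_{k-1}$ through rank-$k$ edges inside $\cE_{k-1}$. Note also that the lemma as stated does not require your strengthening: for the second bullet it suffices to track $M_k$ itself, where the argument closes cleanly --- every augmenting path for $M_{k-1}$ in $G_k$ starts and ends at $M_{k-1}$-unmatched vertices (which lie in $\cE_{k-1}$), its matched edges are all $\cE_{k-1}$--$\cO_{k-1}$, and a parity count on the $\cE$/$\cO$ labels along the path shows it contains exactly one rank-$k$ edge, so each augmentation preserves the rank-$(\le k-1)$ count at $s_{k-1}$ and adds one rank-$k$ edge. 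As written, your proposal is an accurate outline of the right strategy with its hardest step still open.
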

Using these observations, we can now simulate the algorithm of \citet{IKMMP06} by tracking the set of forbidden edges $F$, \ie the set of edges deleted in the third step of \cite{IKMMP06} in each iteration, and by not asking any queries to an agent who has been in $\cU_i$ or $\cO_i$ for any $i \in [n]$. Please refer to \Cref{fig:alg_executed} for an example of \Cref{alg:rm_next_best} being executed.
\begin{algorithm}[tb]
\caption{Elicitation algorithm for rank-maximal matchings in the next-best model}
\label{alg:rm_next_best}
\textbf{Input}: Set of agents $A$, set of houses $H$.\\
\textbf{Output}: A necessarily rank-maximal matching 

\begin{algorithmic}[1] 
\STATE $U \gets A$ \COMMENT{set of unfinished agents}
\STATE $V \gets H$ \COMMENT{set of available houses}
\STATE $E \gets \emptyset$, $M \gets \emptyset$, $F \gets \emptyset$
\IF{$\lvert A \rvert = 2$}
\STATE let $h_1 = \mathcal{Q}(a_1)$, return $\{\{a_1, h_1\}, \{a_2, h_2\}\}$
\ENDIF
\FOR{$i$ in $1, \dots n-1$}
\FORALL{$a \in U$}
\STATE $h\gets \mathcal{Q}(a)$
\IF{$h \in V$ and $\{a,h\} \notin F$}
\STATE $E \gets E \cup \{\{a,h\}\}$
\ENDIF
\ENDFOR
\STATE augment $M$ to be a maximum matching in $(A \cup H, E)$
\STATE compute Dulmage-Mendelsohn decomposition $\cU, \cE, \cO$ for $M$
\STATE If an agent $a \in A$ is in $\cU$ or $\cO$ remove $a$ from $U$
\STATE If a house $h \in H$ is in $\cU$ or $\cO$ remove $h$ from $V$ 
\STATE Add any edges between $\cO,\cO$ and $\cO, \cU$ to $F$ and remove them from $E$
\ENDFOR
\STATE return $M$
\end{algorithmic}
\end{algorithm}
\begin{figure}
    \centering
    \includegraphics[scale = 0.84]{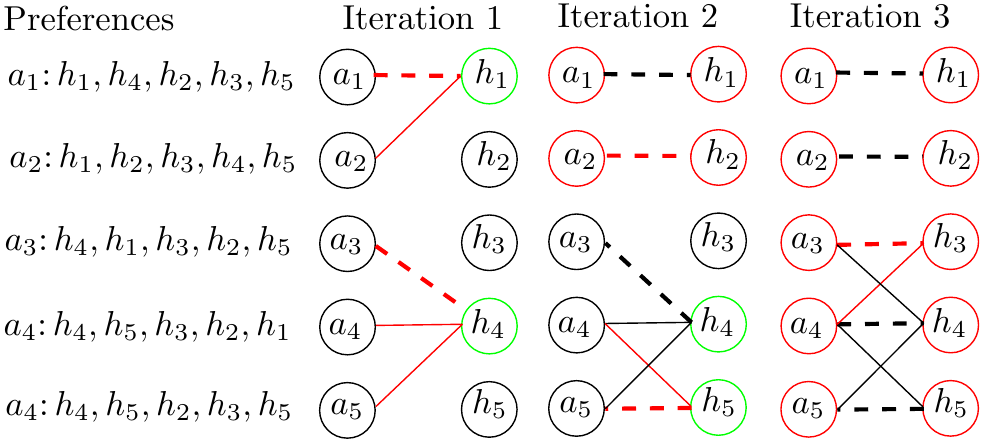}
    \caption{Exemplary run of \Cref{alg:rm_next_best}. The edges displayed are the edges in $E$ at the end of each iteration. Vertices in black are in $\cE$, vertices in green are in $\cO$ and vertices in red are in $\cU$. Edges in red are the edges added in each iteration and dashed edges are matched by $M$ in each iteration(Of course other matchings would also be valid).}
    \label{fig:alg_executed}
\end{figure}
To get a good bound on the competitive ratio of \Cref{alg:rm_next_best} we observe a simple lemma based on the Dulmage-Mendelsohn decomposition and its relation to preferences of agents in a necessarily rank-maximal matching. For a given house allocation instance and agent $a \in A$ let $r_a \coloneqq \min_{i \in [n]} a \notin \cE_i$. It is easy to see that for any instance, our algorithm asks exactly $r_a$ queries to $a$. We can now show that the optimal algorithm must ask at least $r_a-1$ queries to agent $a$ if this agent is matched to a preference the agent has revealed.
\begin{restatable}{lem}{rabound}
Given a house allocation instance $(A,H,\succ)$ partial preferences $\succ'$ in the next-best query model and necessarily rank-maximal matching $M$ for $\succ'$ it has to hold that $\lvert \rev(a) \rvert \ge r_a -1$ for all agents matched to a revealed preference by $M$.
\label{lem:rabound}
\end{restatable}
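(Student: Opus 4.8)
The plan is to prove the contrapositive: assuming an agent $a$ with $M(a)\in\rev(a)$ has revealed only $t:=\lvert\rev(a)\rvert\le r_a-2$ houses, I will exhibit a completion $\succ^*$ of $\succ'$ for which $M$ is rank-dominated, contradicting that $M$ is necessarily rank-maximal. Since $a$'s top $t$ houses are fixed by $\rev(a)$ --- in particular $M(a)$, whose rank $\rho=\rank(a,M(a))$ satisfies $\rho\le t$ --- I am free to rearrange the houses of true rank $>t$ in $a$'s list. The idea is to promote a carefully chosen house to rank $t+1$, so that a rank-maximal matching of the truncated instance that leaves $a$ unmatched can be extended by one edge into a matching that beats $M$ at coordinate $t+1$ of the signature.

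First I would record two facts about the true instance $\succ$. (i) For every $k$, the signature of a rank-maximal matching of the top-$k$ instance equals $(s_1,\dots,s_k)$, the first $k$ coordinates of the full rank-maximal signature $(s_1,\dots,s_n)$ of $M$: the ``$\ge$'' direction follows by restricting $M$ to its rank-$\le k$ edges, and the ``$\le$'' direction because any top-$k$ matching extends to a full matching whose signature can only be lexicographically larger, so exceeding $(s_1,\dots,s_k)$ would contradict rank-maximality of $M$. (ii) The sets $\cE_i$ are nested, $\cE_{i+1}\subseteq\cE_i$ --- equivalently, once a vertex is matched in every rank-maximal matching of the top-$i$ instance it stays matched in every rank-maximal matching of the top-$j$ instance for $j\ge i$. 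This follows from (i) by restricting a rank-maximal top-$(i{+}1)$ matching to its rank-$\le i$ edges and invoking \Cref{lem:rm_char}.

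Now I set $k:=t+1\le r_a-1$, so that $a\in\cE_{k}$ by the definition of $r_a$. By \Cref{lem:rm_char} together with \Cref{thrm:dmcomp}, some rank-maximal matching $R$ of the top-$k$ instance (a maximum matching of $G_k$) leaves $a$ unmatched, and by (i) its signature is $(s_1,\dots,s_{t+1},0,\dots,0)$. Choosing any house $h^*$ left unmatched by $R$ and moving it to rank $t+1$ in $a$'s list yields a completion $\succ^*$; then $M^*:=R\cup\{(a,h^*)\}$ has signature $(s_1,\dots,s_t,s_{t+1}+1,0,\dots,0)$ in $\succ^*$, whereas $\mathrm{sig}_{\succ^*}(M)=(s_1,\dots,s_n)$ is unchanged because only $a$'s tail was altered and $M(a)$ keeps rank $\rho\le t$. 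Thus $M^*$ agrees with $M$ on the first $t$ coordinates and strictly exceeds it on the $(t{+}1)$-th, so $M^*$ rank-dominates $M$, giving the contradiction.

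The main obstacle is to guarantee that such an $h^*$ may legitimately be placed at rank $t+1$, i.e.\ that $R$ has an unmatched house outside $a$'s revealed prefix. I would show that every house $h$ unmatched by $R$ has true rank exceeding $k$ for $a$. Such an $h$ lies in $\cE_k$, since houses in $\cO_k\cup\cU_k$ are matched by every maximum matching of $G_k$; by the nestedness (ii) this forces $h\in\cE_i$ for all $i\le k$, and $a\in\cE_i$ for all $i\le k$ as well. If $(a,h)$ had rank $\le k$ it would either survive in $G_k$ --- making $R\cup\{(a,h)\}$ a strictly larger matching and contradicting maximality of $R$ --- or have been deleted by the algorithm of \citet{IKMMP06}; but every such deletion removes an edge incident to a vertex of $\cO_i\cup\cU_i$, which neither endpoint of $(a,h)$ ever becomes up to iteration $k$. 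Hence $(a,h)$ is never deleted, a contradiction, so all houses unmatched by $R$ have rank greater than $k=t+1$ for $a$; since $a$ is itself unmatched by $R$, at least one such house exists, and promoting it to rank $t+1$ is a valid completion. This closes the argument and yields $\lvert\rev(a)\rvert\ge r_a-1$.
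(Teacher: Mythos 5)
Your proof is correct, but it takes a genuinely different route from the paper's. The paper also works by contradiction from $a\in\cE_{r_a-1}$, but it stays local to $M$: it takes the even alternating path $a=a_1,M(a_1),\dots,a_k$ from $a$ to an $M$-unmatched agent in $G_{r_a-1}$, establishes the rank-monotonicity $\rank(a_i,M(a_{i-1}))\le\rank(a_{i-1},M(a_{i-1}))$ along the path via the deletion rule (otherwise $M(a_{i-1})$ would lie in $\cO_{\rank(a_{i-1},M(a_{i-1}))}$ and the edge $\{a_i,M(a_{i-1})\}$ could not survive in $G_{r_a-1}$), and then completes $a$'s list so that $M(a_k)$ has rank at most $r_a-1$ and cyclically shifts the houses along the path. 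You instead discard $M$ below the truncation level: you take a rank-maximal matching $R$ of the top-$(t{+}1)$ instance that misses $a$ and graft on a single edge to an $R$-unmatched house promoted to rank $t+1$, after which the signature comparison is immediate and no path-by-path rank bookkeeping is needed. The price is that you lean on more of the \citet{IKMMP06} machinery than the paper's recap of \Cref{lem:rm_char} provides: your facts (i) and (ii), and especially the step asserting that an arbitrary maximum matching of $G_k$ missing $a$ is rank-maximal for the top-$k$ preferences, use the \emph{converse} direction of \Cref{lem:rm_char} (every maximum matching of the reduced graph $G_k$ is rank-maximal in the top-$k$ instance). That converse is indeed a lemma of \citet{IKMMP06}, so this is a citation gap rather than a mathematical one, but it should be stated. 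On the plus side, your explicit verification that every $R$-unmatched house lies outside $a$'s revealed prefix --- so that promoting it to rank $t+1$ yields a legitimate completion --- addresses a point the paper's proof glosses over when it places $M(a_k)$ at rank at most $r_a-1$.
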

\begin{proof}
Towards a contradiction, we assume that there is some agent $a \in A$ with $M(a) \in \rev(a)$ and $\lvert \rev(a) \rvert < r_a -1$. Then since $M$ is rank-maximal in all completions of $\succ'$ and thus also in $\succ$ we know that $a \in \cE_{r_a-1}$. Hence, there has to be an alternating path of even length $\rho \coloneqq a = a_1, M(a_1), a_2, \dots, M(a_{k-1}), a_k$ in $G_{r_a-1}$ from $a$ to an unmatched agent $a_k$. Since $a_k$ is unmatched in $G_{r_a-1}$ by $M$ we know that $\rank(a_k, M(a_k)) > r_a-1$ in $\succ$.

Further, assume that there is some $a_i$ with $i>1$ such that $\rank(a_i, M(a_{i-1})) > \rank(a_{i-1}, M(a_{i-1}))$. Then, since $a_{i-1}$ must be in $\cE_j$ for all $j \in [r_a-1]$ due to $\rho$ , we know that $M(a_{i-1})$ must be in $\cO_{\rank(a_{i-1}, M(a_{i-1}))}$ which in turn implies that $\{a_i, M(a_{i-1})\}$ is not an edge in $G_{r_a-1}$. Thus, $\rank(a_i, M(a_{i-1})) \le \rank(a_{i-1}, M(a_{i-1}))$ has to hold in $\succ$. Further, we can extend $\succ'$ in such a way that $\rank(a_1, M(a_k)) \le r_a -1$, while keeping all other preferences according to $\succ$. For these preferences, we can augment $M$ with the path $$a_1, M(a_k), a_k, M(a_{k-1}), a_{k-1}, \dots a_2, M(a_1)$$ and get a matching that rank-dominates $M'$ since $\rank(a_i, M(a_{i-1})) \le \rank(a_{i-1}, M(a_{i-1}))$, as well as $\rank(a_1, M(a_k)) \le r_a-1$, while previously $\rank(a_k, M(a_k)) > r_a - 1$. Therefore, no agent $a\in A$ matched to a revealed preference can have less than $r_a-1$ of their preference revealed. 
\end{proof}
This simple lemma is in fact already sufficient to get a constant upper bound of at most $3$ on the competitive ratio of \Cref{alg:rm_next_best}. To see this, consider the following. We know from Lemma 5 and the fact that at most one agent can be matched to an unrevealed preference and that all but one agent are asked at least $\min(r_a-1,1)$ queries by the optimal algorithm and exactly $r_a$ queries by our algorithm. Further, the agent matched to the unrevealed preference is asked at most $n-1$ queries by our algorithm. Let $A'$ be the set of agents matched to revealed preferences. Then, we can bound the competitive ratio by $$\frac{\sum_{a \in A'}(r_a) + n-1}{\sum_{a \in A'}(\min(r_a-1,1))} \le \frac{\sum_{a \in A'}(r_a + 1)}{\sum_{a \in A'}(\min(r_a-1,1))}.$$ Since for any $a\in A'$ the term $\frac{r_a + 1}{\min(r_a-1,1)}$ is at most $3$, this implies the upper bound of $3$ on the competitive ratio. However, we can improve upon this and can even show that the algorithm is $\frac{3}{2}$-competitive. 
\begin{restatable}{thm}{rmnextbest}
Algorithm \ref{alg:rm_next_best} is a $\frac{3}{2}$-competitive algorithm for eliciting a rank-maximal matching in the next-best query model.
\label{thrm:32-comp}
\end{restatable}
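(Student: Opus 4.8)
The plan is to charge the queries of \Cref{alg:rm_next_best} against those of an optimal algorithm on a per-agent basis, and then to repair the few agents on which this per-agent comparison is too weak. First I would fix the bookkeeping. As already observed, the algorithm asks agent $a$ exactly $r_a$ queries, so that $\alg = \sum_{a \in A} r_a$, where the unique agent that is never removed from $U$ (if it exists) is matched to an as-yet-unrevealed house and contributes at most $n-1$. For the lower bound on $\opt$ I would combine two facts about any necessarily rank-maximal matching $M^\ast$ that the optimal algorithm outputs. By \Cref{lem:rabound}, every agent matched by $M^\ast$ to a revealed house satisfies $\lvert \rev(a)\rvert \ge r_a - 1$. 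In addition, since in the next-best model houses are revealed in order of rank, an agent matched to a revealed house of rank $\rho^\ast_a \coloneqq \rank(a, M^\ast(a))$ forces at least $\rho^\ast_a$ queries; and because every rank-maximal matching realises the same rank profile, the multiset $\{\rho^\ast_a\}$ is independent of the matching chosen. Hence, writing $b^\ast$ for the at most one agent matched by $M^\ast$ to an unrevealed house, $\opt_\succ \ge \sum_{a \ne b^\ast} \max(r_a - 1,\ \rho^\ast_a,\ 1)$.

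Next I would carry out the per-agent comparison $r_a \le \tfrac32 \max(r_a - 1, \rho^\ast_a, 1)$. This holds when $r_a = 1$ (the left side is $1$, the right side is at least $\tfrac32$), when $r_a \ge 3$ (since then $r_a - 1 \ge \tfrac23 r_a$), and when $r_a = 2$ but $\rho^\ast_a \ge 2$ (the left side is $2$, the right side is at least $3$). The only agents on which it fails are the \emph{winners}: those with $r_a = 2$ that $M^\ast$ matches to their rank-$1$ house, where the algorithm pays $2$ but the lower bound only guarantees $1$, leaving a deficit of $\tfrac12$. Summing the per-agent inequality over all non-winner agents already yields the factor $\tfrac32$ on that part of the cost, so it remains to absorb $\tfrac12$ for each winner.

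The crux is therefore a charging argument that pays for the winners out of surplus elsewhere. A winner $a$ lies in $\cE_1$ (it survives the first iteration) yet is matched to its top house, so the first-rank graph $G_1$ admits an even alternating path from $a$ to an unmatched vertex; following this path I would assign to each winner a distinct agent (a ``loser'') that does \emph{not} receive its rank-$1$ house in $M^\ast$, hence has $\rho^\ast_a \ge 2$ and a nonnegative surplus $\tfrac32 \max(r_a - 1, \rho^\ast_a, 1) - r_a$. The structural input here is exactly the Dulmage--Mendelsohn machinery of \Cref{thrm:dmcomp} and \Cref{lem:rm_char}: the partition into $\cE, \cO, \cU$ and the alternating-path description of $\cE$ are what make the winner-to-loser assignment injective and guarantee that the losers are genuinely cheap for the algorithm relative to what they cost the optimum. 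Finally I would amortise the single special agent's up to $n-1$ queries against the $\ge n-1$ remaining agents, each of which costs the optimum at least one query, so that no additive slack remains and $\alg \le \tfrac32 \opt_\succ$ holds exactly.

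I expect the charging step to be the main obstacle. The difficulty is that a loser may itself settle late (large $r_a$) and so carry little or no surplus---for instance a loser with $r_a = 3$ matched at rank $2$ is itself already at the extremal ratio $\tfrac32$. Making the argument go through then seems to require a possibly many-to-one correspondence driven by the alternating-path structure of $G_1$ (a winner whose competition resolves slowly should be charged to several displaced agents), together with a careful verification that the special agent and the winners do not interact so as to break the exact, rather than merely asymptotic, factor of $\tfrac32$.
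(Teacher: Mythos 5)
Your accounting is set up correctly --- $\alg = \sum_a r_a$, the per-agent lower bound $\opt_a \ge \max(r_a-1,\rho^*_a,1)$ via \Cref{lem:rabound} and the rank-order structure of next-best queries, and the observation that the factor $\frac32$ can only fail for agents with $r_a = 2$ that the optimal matching sends to their top choice --- but the proof is not complete, and the step you yourself flag as the main obstacle is exactly where it breaks. A winner-to-loser transfer along alternating paths in $G_1$ cannot absorb the deficit in general, for the reason you give: a displaced agent $a'$ with $r_{a'}=3$ and $\rank(a',M^*(a'))=2$ has $\alg_{a'}=3$ and is only guaranteed $\opt_{a'}\ge 2$, i.e., it sits exactly at ratio $\frac32$ with zero surplus and cannot pay for anything. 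No injective (or many-to-one) assignment of winners to such losers closes the gap without an additional structural fact, and you do not supply one.

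The paper's proof supplies two such facts that your sketch is missing. First, it aggregates not along alternating paths but over the groups $A_h = \{a \in A \mid \rank(a,h)=1\}$ and proves the group-level bound $\opt(A_h) \ge 2\lvert A_h\rvert - 1$ (at most one agent in $A_h$ can be left with a single revealed house, since only one of them can receive $h$); this is strictly stronger than the sum of the per-agent bounds precisely in the critical regime, and the mediant inequality $\frac{a+b}{c+d}\le\max\bigl(\frac{a}{c},\frac{b}{d}\bigr)$ reduces the global ratio to a single group. Second, in the remaining subcase where a group contains exactly one agent with $r_a=2$, the Dulmage--Mendelsohn machinery (\Cref{thrm:dmcomp}, \Cref{lem:rm_char}) forces $h \notin \cE_2$, so every rank-maximal matching sends that agent to its second choice --- that is, your ``winner with no loser to charge'' cannot exist there, and the optimum must pay $2$ for it after all. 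Separately, your treatment of the agent matched to an unrevealed house does not work as stated: amortising its up to $n-1$ queries against $n-1$ agents each costing the optimum at least one query adds a full $+\opt$ to the bound rather than a vanishing term. The paper instead shows that if the optimum matches any agent to an unrevealed house, then every other agent must rank that house last and hence reveal $n-1$ houses, so $\opt \ge (n-1)^2$ while $\alg \le n(n-1)$, giving ratio $\frac{n}{n-1}\le\frac32$.
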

Further, we can show that this bound is tight (up to sub-constant factors) by showing that for any $\varepsilon > 0$, no online algorithm can have a competitive ratio better than $\frac{3}{2}-\varepsilon$.
\begin{figure}
    \centering
    \includegraphics[scale = 0.65]{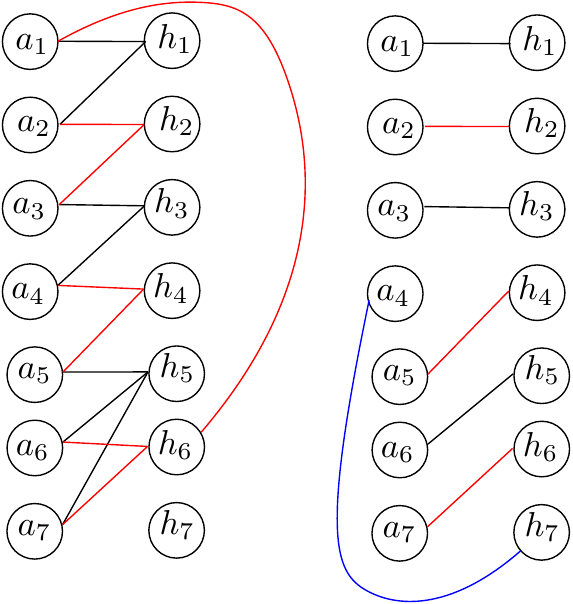}
    \caption{Construction of \Cref{thrm:lower_rm_nb} for $k = 3$ and special agent $a_4$. The basic preferences for ranks $1$ and $2$ are shown on the left, with rank $1$ edges in black and rank $2$ edges in red and the edge of the special agent in blue. The matching on the right, is the rank-maximal matching.}
    \label{fig:examples_lb}
\end{figure}
\begin{restatable}{thm}{lowerrmnb}
No online algorithm in the next-best query model can achieve a competitive ratio better than $\frac{3}{2}-\varepsilon$ for any $\varepsilon > 0$.
\label{thrm:lower_rm_nb}
\end{restatable}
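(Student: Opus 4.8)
The plan is to prove a matching lower bound via an adaptive adversary argument, exhibiting a family of instances on which every (deterministic) online algorithm is forced to spend essentially $3$ queries on each of $k$ ``core'' agents, while an optimal offline algorithm, knowing the full profile, certifies a necessarily rank-maximal matching with only $2$ queries per core agent. Concretely, I would build the instance sketched in \Cref{fig:examples_lb}: $k$ core agents together with one \emph{special} agent whose identity (equivalently, whose single blue edge) is hidden from the online algorithm. The rank-$1$ (black) and rank-$2$ (red) edges among the core agents are laid out symmetrically so that, on the basis of top-$2$ information alone, every core agent lies in the even set $\cE_2$ of the Dulmage--Mendelsohn decomposition of \Cref{thrm:dmcomp}; that is, after two ranks each core agent is still reachable from an unmatched vertex along an even alternating path and hence its fate is not yet settled. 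The role of the special agent's edge is to pin down, at rank $3$, exactly which alternating path is blocked and thus which core agents ``spill over'' to rank $3$ in the true rank-maximal matching.

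Why does this force the third query on the online algorithm? The point is that for an agent $a$ still in $\cE_2$ there is a completion of the currently revealed preferences in which $a$ is unmatched in the top-$2$ rank-maximal matching and therefore ends up matched at rank $3$; by \Cref{lem:rm_char} the rank-maximal matching of the top-$3$ instance, and in particular the value of $r_3$ and which agents realise it, depends on the rank-$3$ houses of exactly these agents. I would show, by an exchange argument along the even alternating path witnessing $a \in \cE_2$, that if the online algorithm has queried some core agent only twice, then the adversary can still choose the hidden special edge so that rerouting along that path produces a matching $M'$ which rank-dominates the algorithm's output $M$ (it keeps $r_1, r_2$ unchanged while strictly increasing $r_3$). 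Consequently $M$ is not rank-maximal in that completion, so it cannot be necessarily rank-maximal, and the algorithm cannot yet have terminated. Since the answers seen so far are consistent with at least two such completions for every as-yet-twice-queried core agent, the adversary keeps the special agent's identity hidden and forces a third query on all but $O(1)$ of the core agents, giving a total of at least $3k - O(1)$ queries.

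For the upper bound on the optimum I would fix the adversary's finally committed profile and exhibit an explicit query set of size $2k + O(1)$ after which a necessarily rank-maximal matching already exists: knowing which agent is special, the offline algorithm queries each core agent exactly twice and the special agent only a constant number of times, and one checks directly (again via the characterisation of \Cref{lem:rm_char}, or via the bound of \Cref{lem:rabound} applied in reverse) that the revealed top-$2$ preferences of the core agents together with the special agent's information already determine a unique rank vector and a matching achieving it in every completion. Hence $\mathrm{opt} \le 2k + O(1)$, while any online algorithm uses at least $3k - O(1)$ queries; letting $k \to \infty$ the ratio tends to $\tfrac{3}{2}$, so for $k$ large enough it exceeds $\tfrac{3}{2} - \varepsilon$.

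The main obstacle is the second step: turning the informal ``$\cE_2$ means unsettled'' intuition into a rigorous adaptive-adversary argument that works against an \emph{arbitrary} online algorithm rather than just \Cref{alg:rm_next_best}. This requires maintaining, throughout the algorithm's (adaptively chosen) queries, a sufficiently rich set of mutually inconsistent completions --- indexed by the possible locations of the special agent --- so that whenever a core agent has been queried only twice I can name a concrete completion and a concrete rank-dominating matching. Getting the symmetric core to simultaneously (i) keep all core agents in $\cE_2$, (ii) let the special edge independently block any chosen agent's augmenting path, and (iii) still admit the cheap $2k$-query offline certificate is the delicate part of the construction; the remaining competitive-ratio arithmetic is routine.
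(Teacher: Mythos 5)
Your plan follows the same route as the paper's proof: the instance of \Cref{fig:examples_lb}, an adversary that hides which agent ranks the distinguished house third, and a count of $3$ forced queries per agent online against roughly $2$ per agent offline. The genuine gap is that you never exhibit the construction, and you explicitly flag its existence --- simultaneously keeping every agent in $\cE_2$, letting the special edge block any chosen agent's path, and admitting a cheap offline certificate --- as the unresolved ``delicate part.'' The paper's instance is concrete and simple: $n=2k+1$; agents come in pairs $a_{2i-1},a_{2i}$ sharing the first choice $h_{2i-1}$ and having second choices $h_{2(i-1)}$ and $h_{2i}$ respectively (indices cyclic), with a final agent $a_{2k+1}$ duplicating the top two of $a_{2k}$; the house $h_{2k+1}$ is ranked last by every agent except one \emph{special} agent, who ranks it third. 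Once this is written down, your worry dissolves, and your $\cE_2$/augmenting-path machinery is heavier than what the lower bound needs: $h_{2k+1}$ appears in nobody's top two, so some agent must receive it, and rank-maximality forces that agent to be the special one (rank $3$ versus rank $n$ for everyone else). Hence if the algorithm halts while two agents each have only two revealed houses, the adversary designates as special whichever of them was not assigned $h_{2k+1}$, and the output is rank-dominated in that completion; so at least $2k$ agents need a third query, giving at least $6k+2$ online queries against the offline budget of $2(2k+1)+1=4k+3$ (the top two of every agent plus the special agent's third entry), and the ratio tends to $\frac{3}{2}$. You should also actually verify, as the paper does, that the offline query set certifies an NRM matching --- i.e., that top-$2$ preferences plus the special agent's third choice pin down the signature $(k,k,1,0,\dots)$ and a matching achieving it in every completion; this is the one substantive check your sketch gestures at but does not perform.
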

\begin{proof}
Let $k \ge 2$ be an integer and consider the following instance with $n = 2k+1$. We start off with the \emph{basic preferences} of the agents. Here for any $i \in[k]$ (with the assumption that $1-1 = k$), we assume that the basic preference of agent $a_{2i-1}$ is $h_{2i-1} \succ h_{2 (i-1)} \succ \dots \succ h_{2k+1}$ and the basic preference of agent $a_{2i}$ is $h_{2i-1} \succ h_{2i} \succ \dots \succ h_{2k+1}$ with the preferences between the second and the last house being arbitrary. Further, the basic preference of agent $a_{2k+1}$ is $h_{2k-1} \succ h_{2k} \succ \dots \succ h_{2k+1}$ (just like for agent $a_{2k}$). For our adversarial instance, we assume that all but one agent have their basic preference, with the one \emph{special agent} instead listing $h_{2k+1}$ as their third preference. It is easy to see that in such an instance, a rank-maximal matching must match $k$ agents to their first choice, $k$ agents to their second choice and the special agent to $h_{2k+1}$, since the special agent is the only one not listing $h_{2k+1}$ last. The existence of such a matching easily follows by matching the special agent $a_i$ to $h_{2k+1}$, any agent $a_j$ with $j < i$ to $h_j$ and any agent $a_j$ with $j > i$ to $h_{j-1}$. This is a valid matching and matches exactly $k$ agents to their first choice, namely all agents with an odd index that is smaller than $i$ and all agents with an even index that is larger than $i$, $k$ to their second choice and $a_i$ to their third choice, thus being rank-maximal.  For an example of this construction, we refer to \Cref{fig:examples_lb}.

Hence, the optimal algorithm can ask 2 queries to each non-special agent and 3 queries to the special agent and can thus elicit an NRM matching. An adversary on the other hand can simply reveal other houses than $h_{2k+1}$ when asked for the third house of any agent until the last agent is asked. Thus, any online algorithm needs to ask every agent at least $3$ queries. 

Therefore, the competitive ratio of any online algorithm has to be at least $\frac{3(2k +1)}{2(2k+1) +1} = \frac{3}{2} - \frac{3}{8k+6}$ and thus we get that no online algorithm can be $\frac{3}{2}-\varepsilon$-competitive for any $\varepsilon > 0$.
\end{proof}
With some slight adjustments to the adversary, the same construction also works for the hybrid-query and set-compare model, thus also implying a lower bound of $\frac{3}{2}$ in both models. 
\begin{restatable}{cor}{lowerboundothers}
In both the set-compare and hybrid query models, there is no online algorithm with a competitive ratio of $\frac{3}{2}-\varepsilon$ for any $\varepsilon > 0$.
\end{restatable}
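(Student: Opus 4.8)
The plan is to reuse, essentially verbatim, the adversarial instance from the proof of \Cref{thrm:lower_rm_nb}: the instance on $n = 2k+1$ agents with the basic preferences and a single \emph{special} agent whose third-ranked house is $h_{2k+1}$, for which the unique rank-maximal signature is $(k,k,1)$. Since both the hybrid-query and the set-compare model can simulate a next-best query (a rank query $\mathcal{Q}(a_i,j)$, respectively the set query $\mathcal{Q}(a_i, H \setminus \{\text{houses already seen}\})$, returns the next unrevealed house), the optimal offline strategy from \Cref{thrm:lower_rm_nb} carries over unchanged and gives $\opt_\succ \le 4k+3$ on this instance. Hence it suffices to force every online algorithm to ask roughly $3(2k+1)$ queries, after which the competitive ratio is at least $\frac{3(2k+1)}{4k+3} = \frac{3}{2} - \frac{3}{8k+6}$, which exceeds $\frac{3}{2} - \varepsilon$ once $k$ is large.

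First I would show that in either model the algorithm is still forced to reveal the top two houses of every agent. The argument is the same as in the next-best case and uses only that $h_{2k+1}$ never occurs in an agent's top two: if the second choice of some $a_i$ is left open, the adversary may complete the profile so that $a_i$ ranks $h_{2k+1}$ second, which yields a matching of signature $(k,k+1,0,\dots)$ that rank-dominates any output of signature $(k,k,1)$; bounding the rank-$1$ and rank-$2$ layers therefore requires knowing all first and all second choices. In the hybrid model this is two queries per agent (the answers to a rank or house query about the top two never equal $h_{2k+1}$), and in the set-compare model it is the two queries $\mathcal{Q}(a_i, H)$ and $\mathcal{Q}(a_i, H \setminus \{\max_i(H)\})$, whose answers likewise never equal $h_{2k+1}$.

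The remaining, and genuinely new, point is the identification of the special agent, where the adversary must be adjusted to the new query types. The adversary answers every query consistently with the basic preferences (so that $h_{2k+1}$ is ranked last by each agent) except that, whenever a query could certify the rank-$3$ position of $h_{2k+1}$ for some agent -- a rank query $\mathcal{Q}(a_i,3)$, a house query $\mathcal{Q}(a_i,h_{2k+1})$, or a set query whose answer is forced to be the third-ranked house -- it answers so as to keep that agent non-special, deferring the declaration of the special agent to the last agent so probed. As in \Cref{thrm:lower_rm_nb}, the output must match $h_{2k+1}$ to an agent that provably ranks it third, since otherwise a completion placing $h_{2k+1}$ at rank four for some other agent rank-dominates the output; consequently the algorithm must pose one such distinguishing query to every agent, contributing a further $2k+1$ queries and bringing the total to at least $3(2k+1) - \mathcal{O}(1)$.

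The step I expect to be the main obstacle is showing, in the set-compare model, that no single (adaptively chosen) query can do double duty -- simultaneously returning a top-two house and certifying whether $h_{2k+1}$ is or is not the third choice -- so that the ``two-for-the-top-two'' and ``one-to-locate-the-special-agent'' costs really do add up. The clean fact to exploit is that $h_{2k+1}$ lies outside every agent's top two, so any query answer equal to $h_{2k+1}$ instantly exposes the special agent while any other answer leaves its rank-$3$ status completely open; turning this into a consistent adversary strategy against arbitrary sets -- in particular maintaining one hidden rank-four ``decoy'' so that the algorithm can never safely assign $h_{2k+1}$ to an un-probed agent -- is where the bookkeeping, rather than a verbatim transfer from \Cref{thrm:lower_rm_nb}, is needed.
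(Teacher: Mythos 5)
There is a genuine gap in the counting that carries the whole bound. Your total of $3(2k+1)$ online queries rests on the claim that every agent must reveal \emph{both} of their top two houses, justified only by the threat that an unrevealed second choice could be completed to $h_{2k+1}$. But that threat is neutralized by the very query you charge separately in your third step: in the hybrid-query model, a single house query $\mathcal{Q}(a_i,h_{2k+1})$ answered with rank $2k+1$ simultaneously certifies that $a_i$ is not the special agent \emph{and} that $h_{2k+1}$ is not $a_i$'s second choice. Once every agent's first choice and every agent's rank of $h_{2k+1}$ are known, a signature of $(k,k+1,\dots)$ is impossible in any consistent completion (it would require $2k+1$ distinct houses in the union of top-two positions, forcing $h_{2k+1}$ into someone's top two), so the only second choices that must actually be elicited are those of the roughly $k$ agents the algorithm matches at rank $2$ — needed to certify the output's own signature, not to exclude a dominating one. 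That accounting gives only about $2(2k+1)+k$ forced queries against $\opt=4k+3$, i.e., a ratio tending to $\frac{5}{4}$, not $\frac{3}{2}$.

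The paper closes exactly this hole with an argument your proposal is missing: it groups the agents into blocks sharing a first-choice house, observes that adjacent blocks share second-choice houses, and designs an adversary that answers second-choice queries so as to keep the revealed block-adjacency structure as disconnected as possible. Since the final profile must make the blocks form a single path/cycle, the algorithm is forced to reveal two second-choice houses for at least $k-2$ blocks, i.e., at least $2k-4$ second-choice queries in addition to the $2(2k+1)$ first- and third-choice queries — enough to recover $\frac{3}{2}-o(1)$. Your instinct that the delicate point is whether one query can "do double duty" is right, but the decisive instance of that phenomenon is the interaction between the second-choice threat and the $h_{2k+1}$-locating query, not only the set-compare bookkeeping you flag; without the block-connectivity lower bound on second-choice queries the proof does not reach $\frac{3}{2}$ in either model.
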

We also construct an algorithm, which decides in polynomial time whether a necessarily rank-maximal matching exists, thus standing in contrast to the $\np$-hardness of the same decision problem for Pareto optimal matchings.
\begin{restatable}{thm}{hybridrmpoly}
Given partial preferences $\rank'$ in the hybrid-query model, it can be decided in polynomial time whether a necessarily rank-maximal matching exists and whether a given matching $M$ is necessarily rank-maximal.
\end{restatable}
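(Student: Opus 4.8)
The plan is to reduce both questions to the ordinary (tie-allowing) rank-maximal matching problem on an auxiliary weighted bipartite graph, to which the machinery of \citet{IKMMP06} applies in polynomial time. For each agent $a$ let $F_a = [n] \setminus \{\rank'(a,h) : h \in \rev(a)\}$ be the set of \emph{free ranks}, and define the optimistic and pessimistic ranks $\rho^-(a,h) = \rank'(a,h)$ if $h \in \rev(a)$ and $\rho^-(a,h) = \min F_a$ otherwise, and symmetrically $\rho^+$ with $\max F_a$. The point I would establish first is that for any single house $h$ and any value in $F_a$ there is a completion of $\rank'$ placing $h$ at that rank for $a$; hence $\rho^-(a,h)$ and $\rho^+(a,h)$ are exactly the best and worst ranks $h$ can attain in agent $a$'s list, and within one agent's list the house $M(a)$ can be pushed to $\max F_a$ while a single \emph{different} house is simultaneously pulled to $\min F_a$.

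For the verification clause I would prove the characterization: \emph{$M$ is NRM if and only if $M$ is rank-maximal in the weighted instance $w_M$} that gives edge $(a,h)$ the rank $\rho^+(a,M(a))$ when $h=M(a)$ and $\rho^-(a,h)$ otherwise. Here $\text{sig}_{w_M}(M)$ is precisely the worst-case signature of $M$, and $w_M$ evaluates every competitor optimistically on edges disjoint from $M$ (while edges shared with $M$ cancel, since agreeing agents contribute identically to both signatures). For the ``only if'' direction: if some matching $M'$ rank-dominates $M$ in $w_M$, the ranks $\rho^+$ on $M$'s edges and $\rho^-$ on $M'$'s differing edges are jointly realizable in one completion $\succ$ (put $M(a)$ at $\max F_a$ and $M'(a)$ at $\min F_a$), in which $M'$ then rank-dominates $M$, so $M$ is not NRM. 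For the ``if'' direction I would use an elementary exchange lemma on the differential signature $\text{sig}_\succ(M')-\text{sig}_\succ(M)$ (which depends only on agents where $M,M'$ differ): moving a $+1$ to a smaller rank, or a $-1$ to a larger rank, makes the vector lexicographically larger. Given any completion $\succ$ and any $M'$ dominating $M$ there, replacing the true ranks by $w_M$ sends every $M'$-contribution ($+1$) to an equal-or-smaller rank ($\rho^- \le \text{rank}_\succ$) and every $M$-contribution ($-1$) to an equal-or-larger rank ($\text{rank}_\succ \le \rho^+$), so the differential only increases lexicographically and stays positive, witnessing that $M$ is not rank-maximal in $w_M$. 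Checking rank-maximality of $M$ in $w_M$ is then a single run of the \citet{IKMMP06} algorithm (in its tie-tolerant form, since $w_M$ has repeated ranks), settling the second clause of the theorem.

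For the existence clause I would first narrow the search using two clean necessary conditions, both provable by the same tailoring-plus-exchange idea. \textbf{(i)} Any NRM $M$ is \emph{maximin}, i.e.\ rank-maximal in the all-pessimistic instance $w^+$ (every edge at $\rho^+$): taking $\succ$ to be $M$'s worst completion gives $\text{sig}_\succ(M)=\sigma^-(M)$ and, by NRM, $\text{optsig}(\succ)=\sigma^-(M)$; but every $M'$ satisfies $\text{sig}_\succ(M')\ge_{\mathrm{lex}}\sigma^-(M')$, so $\sigma^-(M')>_{\mathrm{lex}}\sigma^-(M)$ would contradict optimality, proving $M$ maximizes the worst-case signature. \textbf{(ii)} Symmetrically, any NRM $M$ is rank-maximal in the all-optimistic instance $w^-$, because its best-case signature must meet the globally maximal optimal signature. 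I would then compute, via IKMMP on $w^+$ and $w^-$, a canonical candidate lying in the intersection of these two rank-maximal families and test it with the verification algorithm above; if it passes, an NRM matching exists, and I would argue conversely that if \emph{any} NRM matching exists the canonical candidate must pass.

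The main obstacle is exactly this last converse: the conditions (i) and (ii) are necessary but \emph{not} sufficient (for $n=2$ with empty preferences every matching is both maximin and optimistic-rank-maximal, yet none is NRM), so existence genuinely requires the full $w_M$-test, and distinct maximin matchings may use different unrevealed edges and hence have different $w_M$ and different NRM status. The crux is therefore to show that the NRM status is invariant across the canonical family, i.e.\ to prove an exchange argument: if one matching in the intersection of the $w^+$- and $w^-$-rank-maximal families is NRM, then the specific matching produced by IKMMP (with a fixed tie-breaking) is NRM as well. I expect to prove this by analysing the Dulmage--Mendelsohn decompositions of the final graphs produced on $w^+$ and $w^-$: all these matchings saturate the same $\cU$- and $\cO$-vertices at each rank level, and I would use augmenting/alternating paths within the symmetric difference $M \oplus M_0$ of the canonical matching and a hypothetical NRM matching $M_0$ to transform one into the other without losing rank-maximality in $w_{M}$. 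Making this transformation preserve the $w_M$-optimality rank-by-rank is the delicate step; once it goes through, existence reduces to a single IKMMP computation followed by one verification call, giving the claimed polynomial-time algorithm.
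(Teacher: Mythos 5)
Your verification half is correct and takes a genuinely different route from the paper. The single-instance test ``$M$ is NRM iff $M$ is rank-maximal in $w_M$'' (pessimistic ranks on $M$'s edges, optimistic ranks elsewhere), together with the joint-realizability observation and the lexicographic exchange lemma on the differential signature, is sound: each move of a $+1$ to a smaller rank or a $-1$ to a larger rank adds a lex-positive vector to the differential, so domination under any completion transfers to domination under $w_M$, and conversely a dominating $M'$ in $w_M$ can be realized in one completion because the two relevant houses of each agent are distinct and can occupy $\min F_a$ and $\max F_a$ simultaneously. The paper is terser here: it only proves that an NRM matching must be rank-maximal in the all-pessimistic instance $(A,H,\max\rank)$ and then pins down the unrevealed edges combinatorially; your $w_M$-test is a cleaner, self-contained verification procedure. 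Your necessary conditions (i) and (ii) are also both correct, and (i) coincides with the paper's first lemma.

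The existence half, however, has a genuine gap, which you yourself flag. Two steps are missing. First, ``compute, via IKMMP on $w^+$ and $w^-$, a canonical candidate lying in the intersection of these two rank-maximal families'' is not an algorithm: running the procedure of \citet{IKMMP06} on $w^+$ and on $w^-$ separately yields one rank-maximal matching for each instance, and there is no reason these coincide; finding a matching that is simultaneously rank-maximal in two different weighted instances is not something IKMMP provides. Second, and more fundamentally, the converse (if any NRM matching exists then your candidate passes the $w_M$-test) is exactly the statement you leave as ``once it goes through,'' and the symmetric-difference exchange you sketch would have to preserve $w_M$-optimality where $w_M$ itself changes with the matching being transformed --- this is not a routine alternating-path argument. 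The paper closes this gap with a different structural lemma that your proposal lacks: working only in the pessimistic instance $(A,H,\max\rank)$ with Dulmage--Mendelsohn decompositions $(\cE_k,\cO_k,\cU_k)$, it shows that an NRM matching matches agent $a$ to an unrevealed house $h$ \emph{if and only if} both $a$ and $h$ lie in $\cE_{\ell_a-1}$ (the ``if'' direction via a completion placing $h$ at the first free rank of $a$, which forces $a,h\in\cU_{k_a}$). This determines the unrevealed part of every NRM matching outright, so existence reduces to matching those forced pairs, computing an ordinary rank-maximal matching on the residual revealed instance, and running one verification call. You would need either this forced-pairs lemma or a proof of your invariance claim to complete the argument.
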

Finally, we modify \Cref{alg:rm_next_best} to get an algorithm with a constant competitive ratio in the hybrid-query model.
\begin{restatable}{thm}{algormhq}
There exists a $6$-competitive algorithm for eliciting a rank-maximal matching in the hybrid query model.
\end{restatable}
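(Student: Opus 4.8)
The plan is to obtain the algorithm by \emph{modifying} \Cref{alg:rm_next_best} rather than starting from scratch. The first observation is that the entire Dulmage--Mendelsohn simulation of \citet{IKMMP06} carries over verbatim to the hybrid model: the $i$-th next-best query $\mathcal{Q}(a)$ asked to an agent returns exactly the rank-$i$ house, which is precisely what the rank query $\mathcal{Q}(a,i)$ returns. Hence running \Cref{alg:rm_next_best} with rank queries still produces a necessarily rank-maximal matching, and by the same argument as in the next-best case it asks exactly $r_a$ queries to each agent $a$, where $r_a = \min_i\{a \notin \cE_i\}$. So correctness and the upper bound $\sum_a r_a$ on the number of queries of our algorithm come for free from \Cref{lem:rm_char} and the analysis preceding \Cref{thrm:32-comp}; all of the real work is in lower-bounding the number of queries an optimal algorithm must make in the hybrid model.

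First I would ask whether \Cref{lem:rabound} still holds, i.e.\ whether $\lvert\rev(a)\rvert \ge r_a - 1$ for every agent matched to a revealed house. The counting step of that proof does survive --- whenever $\lvert\rev(a)\rvert < r_a - 1$ there is a free rank in $\{1, \dots, r_a - 1\}$ --- but the \emph{adversarial extension} no longer goes through: in the hybrid model the optimal algorithm may reveal, via a single house query, that a dangerous house $M(a_k)$ (the endpoint of the even alternating path in $G_{r_a - 1}$) already sits at a rank $> r_a - 1$ for agent $a$, which blocks exactly the augmentation used in the proof of \Cref{lem:rabound} without revealing any prefix. Consequently the clean bound of $r_a - 1$ is false in the hybrid model, and this is precisely why the ratio must degrade from $\frac{3}{2}$. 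The key lemma I would prove instead is that the optimal algorithm must still spend at least a constant fraction --- I expect $\frac{1}{2}(r_a-1)$ --- of those queries on each such agent: to kill every augmenting path reachable from $a$ in $G_{r_a - 1}$, the optimal algorithm must either reveal a prefix of $a$ or reveal enough blocking houses at high ranks, and a charging argument over the free ranks of $a$ should show that no more than two ranks can be neutralised per revealed house.

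With such a bound in hand the competitive analysis mirrors the one preceding \Cref{thrm:32-comp}. Writing $A'$ for the agents matched to a revealed house (so $\lvert A \setminus A'\rvert \le 1$), our algorithm asks at most $\sum_{a \in A'} r_a + (n-1)$ queries while the optimum asks at least $\frac{1}{2} \sum_{a \in A'} (r_a - 1)$; using the crude estimate $\frac{r_a + 1}{\min(r_a - 1,\, 1)} \le 3$ together with the single unrevealed-preference agent, the extra factor of $2$ from the weakened lower bound turns the next-best bound of $3$ into $6$. The hard part will be the lower-bound lemma itself: unlike in the next-best model one cannot simply count revealed houses, and one has to argue structurally that a sub-linear number of house queries is insufficient to certify rank-maximality, i.e.\ that the optimal algorithm cannot block all alternating-path augmentations for a deep agent too cheaply. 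Pinning down the precise constant in that argument (and hence the clean bound of $6$) is where I would expect to spend most of the effort; if the constant-fraction bound cannot be pushed through for the pure rank-query algorithm, the fallback would be to let the algorithm itself issue house queries to finish deep agents early and charge its cost directly against the optimum's blocking queries.
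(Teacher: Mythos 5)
There is a genuine gap, and it sits exactly where you flag the ``hard part.'' Your key lemma --- that the optimum must still ask at least $\tfrac{1}{2}(r_a-1)$ queries to every agent matched to a revealed house --- is false in the hybrid model, so the plan of running \Cref{alg:rm_next_best} verbatim with rank queries and paying only a factor $2$ in the lower bound does not go through. Consider $k$ ``deep'' agents who all place the $n-k$ first-choice houses of the remaining agents in their top $n-k$ positions and a pool of $k$ special houses at the bottom: each deep agent has $r_a = \Theta(n)$, so the pure rank-query simulation asks $\Theta(kn)$ queries, yet the optimum can certify each deep agent with roughly $k$ house queries (one per special house, pinning each to a bottom rank), because none of the ranks below a rank at which somebody is actually matched ever needs to be revealed. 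With $k=\sqrt{n}$ this gives a ratio of $\Omega(\sqrt{n})$, not a constant. Your charging heuristic (``no more than two ranks neutralised per revealed house'') cannot be repaired, because a single house query can neutralise an unbounded number of free ranks when the matching is not growing at those ranks.

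What you dismiss as a fallback is in fact the paper's algorithm: \Cref{alg:rm_hq} maintains a counter $c$ of ranks at which the maximum matching fails to grow, aborts the rank-query loop as soon as $c \ge \lvert V\rvert$ (the number of surviving houses), and then finishes every unfinished agent with $\lvert V\rvert$ house queries before computing a rank-maximal matching on the completed preferences. The lower bound on the optimum is not a single per-agent inequality but two structural lemmas --- every unrevealed house still in $\cE$ up to rank $k_a$ must be house-queried by a non-prefix agent (\Cref{lem:rm_hybrid_unm}), and every rank below a rank actually used by the matching must be rank-queried unless the matched house itself is revealed (\Cref{lem:hybrid_rank_matched}) --- followed by a four-way case analysis over whether each agent is a prefix or non-prefix agent for the algorithm and for the optimum; the worst case yields $\frac{i+\lvert V\rvert}{i-2}\le 6$ precisely because the break condition guarantees $\lvert V\rvert \le c \le i$. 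So the correct proof requires changing the algorithm itself, not just reproving the lower-bound lemma with a worse constant.
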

However, we were not able to find an online algorithm for the set-compare setting achieving a sublinear competitive ratio for eliciting a rank-maximal matching.

\section{Discussion}
There are multiple open questions and possible future research directions which can be derived from this work. 

Firstly, there are still gaps left between the upper bounds and lower bounds we showed in this paper. Most importantly, it would be very interesting to find out whether there is an algorithm with a constant (or even sublinear) competitive ratio for eliciting an NRM matching in the set-compare model. Besides this, the complexity of determining whether a matching is NRM is also open in the set-compare model. 

Secondly, there are several other notions of optimality left to explore, for instance \cite{HKMM16} fair matchings or the general class of profile-based matchings \cite{KIMS14} encompassing both fair and rank-maximal matchings. Of course, it might also be interesting to study further querying models or models of partial preferences.

\section{Acknowledgments}
This work was supported by the Deutsche Forschungsgemeinschaft under
grant BR 4744/2-1 and Graduiertenkolleg ``Facets of Complexity'' (GRK 2434). Further, I want to thank the reviewers of AAAI, Markus Brill, Felix Brandt, Jonas Israel, and Ulrike Schmidt-Kraepelin for their helpful comments and suggestions.
\bibliography{pop}

\begin{thebibliography}{38}
\providecommand{\natexlab}[1]{#1}

\bibitem[{Abdulkadiro{\u{g}}lu and S{\"o}nmez(1998)}]{AbSo98a}
Abdulkadiro{\u{g}}lu, A.; and S{\"o}nmez, T. 1998.
\newblock Random Serial Dictatorship and the Core from Random Endowments in
  House Allocation Problems.
\newblock \emph{Econometrica}, 66(3): 689--701.

\bibitem[{Abraham et~al.(2004)Abraham, Cechl{\'a}rov{\'a}, Manlove, and
  Mehlhorn}]{ACMM04}
Abraham, D.~J.; Cechl{\'a}rov{\'a}, K.; Manlove, D.~F.; and Mehlhorn, K. 2004.
\newblock Pareto Optimality in House Allocation Problems.
\newblock In \emph{16th International Symposium on Algorithms and Computation
  (ISAAC)}, 1163--1175.

\bibitem[{Abraham et~al.(2007)Abraham, Irving, Kavitha, and Mehlhorn}]{AIKM07}
Abraham, D.~J.; Irving, R.~W.; Kavitha, T.; and Mehlhorn, K. 2007.
\newblock Popular Matchings.
\newblock \emph{SIAM Journal on Computing}, 37(4): 1030--1045.

\bibitem[{Aziz et~al.(2019)Aziz, Bir{\'{o}}, de~Haan, and Rastegari}]{ABDR19}
Aziz, H.; Bir{\'{o}}, P.; de~Haan, R.; and Rastegari, B. 2019.
\newblock Pareto Optimal Allocation under Uncertain Preferences: Uncertainty
  Models, Algorithms, and Complexity.
\newblock \emph{Artif. Intell.}, 276: 57--78.

\bibitem[{Aziz and Sun(2021)}]{AziSun21}
Aziz, H.; and Sun, Z. 2021.
\newblock Multi-Rank Smart Reserves.
\newblock In \emph{Proceedings of the 22nd ACM Conference on Economics and
  Computation}, 105--124.

\bibitem[{Belahc{\`{e}}ne, Mousseau, and Wilczynski(2021)}]{BMW21}
Belahc{\`{e}}ne, K.; Mousseau, V.; and Wilczynski, A. 2021.
\newblock Combining Fairness and Optimality when Selecting and Allocating
  Projects.
\newblock In \emph{Proceedings of the Thirtieth International Joint Conference
  on Artificial Intelligence, {IJCAI} 2021}, 38--44.

\bibitem[{Benabbou et~al.(2018)Benabbou, Chakraborty, Ho, Sliwinski, and
  Zick}]{BCHSZ18}
Benabbou, N.; Chakraborty, M.; Ho, X.-V.; Sliwinski, J.; and Zick, Y. 2018.
\newblock Diversity Constraints in Public Housing Allocation.
\newblock In \emph{Proceedings of the 17th International Conference on
  Autonomous Agents and Multiagent Systems (AAMAS 2018)}, 973--981. ACM.

\bibitem[{Berman, Karpinski, and Scott(2004)}]{BKS04}
Berman, P.; Karpinski, M.; and Scott, A. 2004.
\newblock Approximation Hardness of Short Symmetric Instances of MAX-3SAT.
\newblock Technical report.

\bibitem[{Beynier et~al.(2019)Beynier, Chevaleyre, Gourv{\`e}s, Harutyunyan,
  Lesca, Maudet, and Wilczynski}]{BCGHLMW19}
Beynier, A.; Chevaleyre, Y.; Gourv{\`e}s, L.; Harutyunyan, A.; Lesca, J.;
  Maudet, N.; and Wilczynski, A. 2019.
\newblock Local Envy-freeness in House Allocation Problems.
\newblock \emph{Autonomous Agents and Multi-Agent Systems}, 33(5): 591--627.

\bibitem[{Bogomolnaia and Moulin(2001)}]{BoMo01a}
Bogomolnaia, A.; and Moulin, H. 2001.
\newblock A New Solution to the Random Assignment Problem.
\newblock \emph{Journal of Economic Theory}, 100(2): 295--328.

\bibitem[{Chen, Skowron, and Sorge(2019)}]{CSS19}
Chen, J.; Skowron, P.; and Sorge, M. 2019.
\newblock Matchings under Preferences: Strength of Stability and Trade-offs.
\newblock In \emph{Proceedings of the 2019 ACM Conference on Economics and
  Computation}, 41--59.

\bibitem[{Chen, Li, and Mao(2018)}]{CLM18}
Chen, X.; Li, Y.; and Mao, J. 2018.
\newblock A Nearly Instance Optimal Algorithm for Top-k Ranking Under the
  Multinomial Logit Model.
\newblock In \emph{Proceedings of the Twenty-Ninth Annual ACM-SIAM Symposium on
  Discrete Algorithms}, 2504--2522. SIAM.

\bibitem[{Chen and S{\"o}nmez(2002)}]{ChSo02}
Chen, Y.; and S{\"o}nmez, T. 2002.
\newblock Improving Efficiency Of On-campus Housing: An Experimental Study.
\newblock \emph{American economic review}, 92(5): 1669--1686.

\bibitem[{Drummond and Boutilier(2014)}]{DruBou14}
Drummond, J.; and Boutilier, C. 2014.
\newblock Preference Elicitation and Interview Minimization in Stable
  Matchings.
\newblock In \emph{Proceedings of the Twenty-Eighth {AAAI} Conference on
  Artificial Intelligence}, 645--653.

\bibitem[{Gan, Suksompong, and Voudouris(2019)}]{GSV19}
Gan, J.; Suksompong, W.; and Voudouris, A.~A. 2019.
\newblock Envy-freeness in House Allocation Problems.
\newblock \emph{Mathematical Social Sciences}, 101: 104--106.

\bibitem[{Garg et~al.(2010)Garg, Kavitha, Kumar, Mehlhorn, and Mestre}]{GKK10a}
Garg, N.; Kavitha, T.; Kumar, A.; Mehlhorn, K.; and Mestre, J. 2010.
\newblock Assigning Papers to Referees.
\newblock \emph{Algorithmica}, 58: 119--136.

\bibitem[{Genc et~al.(2017)Genc, Siala, O'{S}ullivan, and Simonin}]{GSSS17}
Genc, B.; Siala, M.; O'{S}ullivan, B.; and Simonin, G. 2017.
\newblock Robust Stable Marriage.
\newblock In \emph{Proceedings of the Thirty-First {AAAI} Conference on
  Artificial Intelligence}, 4925--4926.

\bibitem[{Ghosal, Nasre, and Nimbhorkar(2019)}]{GNN19}
Ghosal, P.; Nasre, M.; and Nimbhorkar, P. 2019.
\newblock Rank-Maximal Matchings--Structure and Algorithms.
\newblock \emph{Theoretical Computer Science}, 767: 73--82.

\bibitem[{Hosseini et~al.(2021)Hosseini, Menon, Shah, and Sikdar}]{HMSS21}
Hosseini, H.; Menon, V.; Shah, N.; and Sikdar, S. 2021.
\newblock Necessarily Optimal One-Sided Matchings.
\newblock In \emph{Proceedings of the Thirty-Fifth {AAAI} Conference on
  Artificial Intelligence}, 6, 5481--5488.

\bibitem[{Huang et~al.(2016)Huang, Kavitha, Mehlhorn, and Michail}]{HKMM16}
Huang, C.-C.; Kavitha, T.; Mehlhorn, K.; and Michail, D. 2016.
\newblock Fair Matchings and Related Problems.
\newblock \emph{Algorithmica}, 74(3): 1184--1203.

\bibitem[{Hylland and Zeckhauser(1979)}]{HyZe79}
Hylland, A.; and Zeckhauser, R. 1979.
\newblock The Efficient Allocation of Individuals to Positions.
\newblock \emph{Journal of Political economy}, 87(2): 293--314.

\bibitem[{Irving(2003)}]{irving2003greedy}
Irving, R.~W. 2003.
\newblock Greedy Matchings.
\newblock \emph{Technical Report Tr-2003-136, University of Glasgow}.

\bibitem[{Irving et~al.(2006)Irving, Kavitha, Mehlhorn, Michail, and
  Paluch}]{IKMMP06}
Irving, R.~W.; Kavitha, T.; Mehlhorn, K.; Michail, D.; and Paluch, K.~E. 2006.
\newblock Rank-Maximal Matchings.
\newblock \emph{ACM Transactions on Algorithms (TALG)}, 2(4): 602--610.

\bibitem[{Kavitha and Shah(2006)}]{KavSha06}
Kavitha, T.; and Shah, C.~D. 2006.
\newblock Efficient Algorithms for Weighted Rank-Maximal Matchings and Related
  Problems.
\newblock In \emph{International Symposium on Algorithms and Computation},
  153--162. Springer.

\bibitem[{Kwanashie et~al.(2014)Kwanashie, Irving, Manlove, and Sng}]{KIMS14}
Kwanashie, A.; Irving, R.~W.; Manlove, D.~F.; and Sng, C.~T. 2014.
\newblock Profile-based Optimal Matchings in the Student/Project Allocation
  Problem.
\newblock In \emph{International Workshop on Combinatorial Algorithms},
  213--225. Springer.

\bibitem[{Lang(2020)}]{Lang20}
Lang, J. 2020.
\newblock Collective Decision Making under Incomplete Knowledge: Possible and
  Necessary Solutions.
\newblock In \emph{Proceedings of the Twenty-Ninth International Joint
  Conference on Artificial Intelligence, {IJCAI} 2020}, 4885--4891.

\bibitem[{Liu et~al.(2014)Liu, Mailath, Postlewaite, and Samuelson}]{LMPS14}
Liu, Q.; Mailath, G.~J.; Postlewaite, A.; and Samuelson, L. 2014.
\newblock Stable Matching with Incomplete Information.
\newblock \emph{Econometrica}, 82(2): 541--587.

\bibitem[{Ma, Menon, and Larson(2021)}]{MML21}
Ma, T.; Menon, V.; and Larson, K. 2021.
\newblock Improving Welfare in One-sided Matching using Simple Threshold
  Queries.
\newblock In \emph{Proceedings of the Thirtieth International Joint Conference
  on Artificial Intelligence, {IJCAI} 2021}, 321--327.

\bibitem[{Mai and Vazirani(2018)}]{MaiVaz18}
Mai, T.; and Vazirani, V.~V. 2018.
\newblock Finding Stable Matchings That Are Robust to Errors in the Input.
\newblock In \emph{26th Annual European Symposium on Algorithms (ESA 2018)}.
  Schloss Dagstuhl-Leibniz-Zentrum fuer Informatik.

\bibitem[{Manlove(2013)}]{Manl13a}
Manlove, D.~F. 2013.
\newblock \emph{Algorithmics of Matching Under Preferences}.
\newblock World Scientific Publishing Company.

\bibitem[{Michail(2007)}]{Mich07}
Michail, D. 2007.
\newblock Reducing Rank-Maximal to Maximum Weight Matching.
\newblock \emph{Theoretical Computer Science}, 389(1-2): 125--132.

\bibitem[{Nasre, Nimbhorkar, and Pulath(2019)}]{NNP19}
Nasre, M.; Nimbhorkar, P.; and Pulath, N. 2019.
\newblock Classified Rank-Maximal Matchings and Popular Matchings--Algorithms
  and Hardness.
\newblock In \emph{International Workshop on Graph-Theoretic Concepts in
  Computer Science}, 244--257. Springer.

\bibitem[{Paluch(2013)}]{Paluch13}
Paluch, K. 2013.
\newblock Capacitated Rank-Maximal Matchings.
\newblock In \emph{International Conference on Algorithms and Complexity},
  324--335. Springer.

\bibitem[{Rastegari et~al.(2013)Rastegari, Condon, Immorlica, and
  Leyton-Brown}]{RCIL13}
Rastegari, B.; Condon, A.; Immorlica, N.; and Leyton-Brown, K. 2013.
\newblock Two-sided Matching with Partial Information.
\newblock In \emph{Proceedings of the fourteenth ACM conference on Electronic
  Commerce}, 733--750.

\bibitem[{Ren, Liu, and Shroff(2019)}]{RLS19}
Ren, W.; Liu, J.~K.; and Shroff, N. 2019.
\newblock On Sample Complexity Upper and Lower Bounds for Exact Ranking from
  Noisy Comparisons.
\newblock \emph{Advances in Neural Information Processing Systems}, 32:
  10014--10024.

\bibitem[{Saha and Gopalan(2019)}]{SahGop19}
Saha, A.; and Gopalan, A. 2019.
\newblock Combinatorial Bandits with Relative Feedback.
\newblock \emph{Advances in Neural Information Processing Systems}, 32:
  985--995.

\bibitem[{Saha and Gopalan(2020)}]{SahGop20}
Saha, A.; and Gopalan, A. 2020.
\newblock From PAC to Instance-Optimal Sample Complexity in the Plackett-Luce
  Model.
\newblock In \emph{International Conference on Machine Learning}, 8367--8376.
  PMLR.

\bibitem[{Shapley and Scarf(1974)}]{ShSc74a}
Shapley, L.~S.; and Scarf, H. 1974.
\newblock On Cores and Indivisibility.
\newblock \emph{Journal of Mathematical Economics}, 1(1): 23--37.

\end{thebibliography}
\newpage

\appendix
\section{Missing proofs}
Before giving the missing proofs of the theorems from the main body of work, we give a proof sketch that serial dictatorship mechanism can be efficiently simulated in the set-compare model, even if the size of the sets one can query is upper bounded. We define the set-compare-$k$ query model, as the set-compare query model with the additional requirement that for any query $\mathcal{Q}(a, H')$ it has to hold that $\lvert H' \rvert \le k$.
\begin{proposition}
For any $k > 2$ there is a 1-competitive algorithm in the set-compare-$k$ query model for eliciting an NPO matching.
\end{proposition}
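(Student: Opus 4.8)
The plan is to reuse the serial-dictatorship strategy behind the unbounded set-compare algorithm and to replace each call to the oracle $\mathcal{Q}(a_i, H\setminus H_i)$ by a bounded \emph{tournament}. Concretely, in iteration $i$ the algorithm finds the top house of $a_i$ among the $m_i \coloneqq \lvert H\setminus H_i\rvert = n-i+1$ remaining houses by maintaining a running champion: the first query compares $k$ houses, and every subsequent query pits the current champion against $k-1$ fresh houses, the winner becoming the new champion. This uses exactly $\lceil (m_i-1)/(k-1)\rceil$ queries of size at most $k$, and since each champion is genuinely the most preferred house among the queried ones under every consistent extension, the produced matching equals $SD_\succ(\sigma)$ for $\sigma = (a_1,\dots,a_n)$ and is therefore necessarily Pareto optimal by \citet{AbSo98a}, exactly as in the unbounded case. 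Summing over $i = 1,\dots,n-1$ (the last agent is forced and needs no query), the algorithm asks $T \coloneqq \sum_{m=2}^{n}\lceil (m-1)/(k-1)\rceil$ queries.

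The harder direction, which is what actually yields $1$-competitiveness, is to show that no algorithm in the set-compare-$k$ model can certify an NPO matching with fewer than $T$ queries. I would argue via the set-compare counterpart of \Cref{lem:npo_serial} (obtained from the same directed-graph serial-dictatorship argument as \Cref{thrm:alg_npo_sc}): if the output $M$ is certified NPO, there is a single permutation $\sigma$ with $M = SD_\succ(\sigma)$ for all consistent extensions, so $M(\sigma(i))$ must be the \emph{necessary} top choice of $\sigma(i)$ among the available set $R_i \coloneqq H \setminus \{M(\sigma(1)),\dots,M(\sigma(i-1))\}$ of size $n-i+1$. The key combinatorial observation is an elimination count: in the partial order $\succ'_{\sigma(i)}$ induced by the answers, every house $h \in R_i \setminus \{M(\sigma(i))\}$ must lie strictly below $M(\sigma(i))$, hence $h$ must appear as a non-winner of at least one query to $\sigma(i)$; otherwise $h$ would be maximal in $\succ'_{\sigma(i)}$ and some consistent extension would rank it above $M(\sigma(i))$, breaking necessity. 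Since a single size-$\le k$ query exposes at most $k-1$ non-winners, agent $\sigma(i)$ must receive at least $\lceil (n-i)/(k-1)\rceil$ queries, and summing gives $\opt_\succ \ge T$.

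Finally, because the multiset of remaining-set sizes $\{2,3,\dots,n\}$, and therefore $T$, is independent of the agent order $\sigma$, the upper and lower bounds coincide: the algorithm asks exactly $T$ queries while $\opt_\succ \ge T$, so the algorithm is $1$-competitive.

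I expect the main obstacle to be the lower bound, specifically making the elimination argument fully rigorous. One has to commit to a single witnessing permutation (invoking the single-$\sigma$ characterization rather than a per-extension one), rule out that transitive chains passing through houses outside $R_i$ let an adversary certify the required dominations more cheaply than the naive per-agent count suggests, and confirm that the adaptivity of the tournament leaks no savings to the optimal algorithm. A further point to clarify is the role of the hypothesis $k>2$: the tournament upper bound and the elimination lower bound both evaluate to $T$ for every $k \ge 2$, so I would verify whether the restriction to $k>2$ is needed for a subtler reason or is merely stated to separate the claim from the degenerate purely pairwise case $k=2$, where $T = \binom{n}{2}$.
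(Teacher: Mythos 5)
Your proposal is correct and follows essentially the same route as the paper: simulate serial dictatorship via a running-champion tournament using $\lceil (n-i)/(k-1)\rceil$ size-$k$ queries per agent, and lower-bound the optimum by the set-compare analogue of \Cref{lem:npo_serial} together with a per-agent counting argument. Your explicit ``non-winner'' elimination count is in fact a slightly more rigorous rendering of the step the paper dismisses as easy to see, and your suspicion about $k>2$ is right --- the argument goes through verbatim for $k=2$.
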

\begin{proofsketch}
We again iteratively simulate the serial dictatorship mechanism. We start off by setting $M = \emptyset$. Then in iteration $i$, we can find the unmatched house agent $a_i$ ranks the highest using $\lceil \frac{n-i}{k-1} \rceil$-queries, by at first choosing an arbitrary subset of size $k$, asking $a_i$ for their top-choice in this subset, then taking $k-1$ other unmatched houses, together with the previous top-choice and so on. The final output of this is guaranteed to be the highest ranked, unmatched house of $a_i$. We can thus add this house and $a_i$ to $M$ in accordance with the serial dictatorship mechanism. Further, we note that we can also generalize \Cref{lem:npo_serial} to the set-compare setting. Thus, for the optimal algorithm there also has to be a permutation of the agents $\sigma$ such that the resulting matching of the optimal algorithm $M'$ is the result of the serial dictatorship mechanism on any preference extension and $\sigma$. Thus, agent $\sigma(1)$ must rank $M(\sigma(1))$ better than all other houses, $\sigma(2)$ must rank $M(\sigma(2))$ better than all other houses except for $M(\sigma(1))$ and so on. It is therefore easy to see that agent $\sigma(i)$ must have been asked at least $\lceil \frac{n-i}{k-1} \rceil$ queries, hence showing that our algorithm is $1$-competitive.
\end{proofsketch}
We note that this is only a proof sketch, with the result not being part of the main work. We defer a proper study of this setting, also with regard to rank-maximality, to future work.

Following this, we give the proofs of the remaining missing theorems from the main body of the paper.
\npcmqnpo*
\begin{proof}
To show membership in $\np$ we observe that the problem of checking whether a matching is NPO is solvable in polynomial time, as shown in \Cref{thrm:alg_npo_mixed}. Thus any NPO matching $M$ is enough as a polynomial size witness and thus the problem is in $\np$.
For our reduction, we reduce from \textsc{(2,2)-e3-sat}. Let $X = \{x_1, \dots, x_n\}$ be the set of variables and $\mathcal{C} = \{C_1, \dots, C_m\}$ be the set of clauses in our \textsc{(2,2)-e3-sat} instance. Each variable $x_i \in X$ appears in exactly two clauses in negative form and two clauses in positive form. We assume that we have some ordering over these clauses such that we can identify one of the clauses as being the first clause $x_i$ (or $\overline{x}_i$ respectively) and the other one as the second one $x_i$ appears in.

In our reduction, we create the following houses:
\begin{itemize}
    \item for each variable $x_i \in X$ we create two \emph{selection houses} $s_i^1$ and $s_i^2$. Further, we create two positive \emph{variable houses} $h_i^1$ and $h_i^2$ and two negative variable houses $\overline{h}_i^1$ and $\overline{h}_i^2$. If $C$ is the clause $x_i$ appears in the first time, we say that $h_i^1$ belongs to $C$ and similarly $h_i^2$ belongs to $C$ if $C$ is the second clause $x_i$ appears in. The equivalent notation is also used for $\overline{x}_i$ with the negative variable houses.
    \item for each clause $C \in \mathcal{C}$ we add a \emph{clause house} $h_C$;
    \item we add $\frac{2}{3}n+2$ \emph{dummy houses} $d_1, \dots ,d_{\frac{2}{3}n}, d$, and $d'$.  
\end{itemize}
Note that due to the nature of our problem $3m = 4n$ holds and therefore the total number of houses is $6n + m + \frac{2}{3}n+2 = 6n + \frac{4}{3}n + \frac{2}{3}n+2 = 8n+2$.
Next we come to the agents and their revealed preferences. 
\begin{itemize}
    \item For each variable $x_i \in X$ we add four agents, two \emph{positive agents} $x_i^1$ and $x_i^2$ and two \emph{negative agents} $\overline{x}_i^1$ and $\overline{x}_i^2$. The positive agent $x_i^1$ has revealed that $\rank'(x_i^1, s_i^1) = 1,\rank'(x_i^1, s_i^2) = 2, \rank'(x_i^1, h_i^1) = 4, \rank'(x_i^1, h_i^2) = 5$. Further, for each other house except for $\overline{h}_i^1$ and $\overline{h}_i^2$ the agent has a revealed rank between $6$ and $8n+1$, with variable houses being ranked before dummy houses and dummy houses before clause houses. The internal ranking in these house classes is arbitrary, except that we require every house from $d_1, \dots, d_{\frac{2}{3}n}$ to be before $d$ and $d'$. Therefore, the only unrevealed ranks are $3$ and $8n+2$ and the only unrevealed houses are $\overline{h}_i^1$ and $\overline{h}_i^2$. The revealed preferences for the second positive agent $x_i^2$ are the same. For $\overline{x}_i^1$ and $\overline{x}_i^2$ we have the same preferences, except that they have $h_i^1$ and  $h_i^2$ unrevealed and $\overline{h}_i^1$ and $\overline{h}_i^2$ revealed.
    \item Next, for each clause $C \in \mathcal C$ we add three agents $C_1, C_2$, and $C_3$. Let $h_1, h_2,$ and $h_3$ be the three variables houses belonging to $C$. 
    Then we  assume that the three agents have revealed their top $\frac{2}{3}n + 4$ ranks such that that they induce the preferences
    
    $\begin{array}{cc}
         & C_1 \colon h_1 \succ h_2 \succ h_3 \succ h_C \succ d_1 \succ \dots \succ d_{\frac{2}{3}n} \\
         & C_2 \colon h_2 \succ h_3 \succ h_1 \succ h_C \succ d_1 \succ \dots \succ d_{\frac{2}{3}n} \\
         & C_3 \colon h_3 \succ h_1 \succ h_2 \succ h_C \succ d_1 \succ \dots \succ d_{\frac{2}{3}n} \\
    \end{array}$
    
    with the preferences over all other houses being unrevealed.
    \item Finally, we add two \emph{dummy agents} $a^1_d, a^2_d$ who have revealed rank such that the preference list from position $m$ to $8n+1$ is $d \succ d' \succ h_1^1 \succ \overline{h}_1^1 \succ s_1^1 \dots \succ \overline{h}_n^2 \succ s_n^1 \succ s_n^2 \succ d_1 \succ \dots \succ d_{\frac{2}{3}n}$. Thus, the only unrevealed houses of $a_d^1$ and $a_d^2$ are the clause houses, one of which is listed last while the other $m-1$ clause houses are a prefix of the preference list.
\end{itemize}
    Note, that there are $4n + 3m + 2 = 8n + 2$ agents, thus fulfilling the condition that the number of agents and houses is the same. 
    
    \paragraph{$\Rightarrow$} First, we assume that our \textsc{(2,2)-e3-sat} instance is satisfiable and that $\Phi$ is a satisfying assignment. We now show how to construct an ordering of the agents, such that the serial dictatorship method with regard to that ordering is consistent with any possible preference completion. First, for any variable $x_i$ if $x_i$ appears in $\Phi$, our serial dictatorship mechanism asks $\overline{x}_i^1$ and then $\overline{x}_i^2$ (who get matched to $s_i^1$ and $s_i^2$), followed by the clause agents who rank $\overline{h}_i^1$ and $\overline{h}_i^2$ first. Afterwards, we ask $x_i^1$ and $x_i^2$ to pick their top choice, which is $h_i^1$ and $h_i^2$ regardless of the preference completion. Now, since $\Phi$ is a satisfying assignment, for any clause $C$ there has to be at least one agent $C_j$ who is currently unmatched. This is due to the fact that the only clause houses matched, are matched to variable houses, belonging to variables not in $\Phi$. For these clause houses, the top-choice which is not matched yet is guaranteed to be $h_C$, since all the variable houses are matched. We can therefore ask a clause agent to be matched to their corresponding clause house. Finally, we can match the remaining clause agents to dummy agents and the dummy agents to $d$ and $d'$, since all clause houses are matched. Since this is a serial dictatorship order that works for any completion of the preferences, the matching is guaranteed to be NPO.
    \paragraph{$\Leftarrow$}
    Let $M$ be an NPO matching. First, assume that $M(a^1_d) \neq d$ and $M(a^1_d) \neq d'$. We will distinguish two cases based on the houses $a_d$ could be matched to. 
    \begin{itemize}
        \item If $a^1_d$ is matched to a clause house $h_C$, there is a completion such that $h_C$ is ranked last by $a^1_d$ and first by $a^2_d$, thus implying that they form a cycle.  
        \item If $a_d^1$ is matched to either a selection, dummy, or variable house, then every preference list except for $a_2^d$ can be completed in such a way that the house matched to $a_d^1$ is in front of $d$ and $d'$ in each preference list. Therefore, there needs to be a cycle containing $a_d^1$ and another agent with both preferring the partner house of the other.  
    \end{itemize}
    Due to $a_d^2$ having the same revealed preferences as $a_d^1$ this also implies that $a_d^2$ is matched to either $d$ or $d'$. 
    
    Next, we assume that there is any variable agent matched to a clause house. However, then this variable agent would prefer $d$ and $d'$ to their current house, while the preferences of $a_d^1$ could be completed to prefer the clause house to $d$ and $d'$ thus inducing a cycle. Thus, we know that every clause house must be matched to a clause agent. Let $C, C' \in \mathcal{C}$ be two distinct clauses and assume that $C_j$ for some $j \in [3]$ is matched to $h_{C'}$. Since $h_{C'}$ as well as $d$ and $d'$ are unrevealed by $C_j$ and since the dummy agent $a_d^1$ could prefer $h_{C'}$ to $d$ and $d'$ we know that $C_j$ cannot be matched to $h_{C'}$ in any NPO matching. Therefore, for any clause $C \in \mathcal{C}$ there needs to be one agent $C_j$ with $M(C_j) = h_{C_j}$. This however implies that at least one variable agent corresponding to a variable from $C_j$ must be matched to their corresponding variable houses. Since the negative houses could be ranked last by any positive agent and vice versa, this implies that this agent matched to a variable house must be positive if the variable house is positive and negative if the variable house is negative. 
    
    Let $\Phi$ be the assignment setting all variables to true for which a positive agent is matched to a variable house and all variables to false for which a negative agent is matched to a variable house. To show that this is a valid definition, assume that there is some variable $x_i$ for which both a negative and a positive agent are matched to a variable house. Then it is easy to see that the preferences of both agents could be completed in such a way that both agents prefer each other's house to their own. Thus, $\Phi$ is a valid assignment. Moreover, we know that for each clause, there is at least one agent, who is also positive if and only if the house is positive, matched to this variable house. Thus, each clause is fulfilled by at least one variable from $\Phi$ and the \textsc{(2,2)-e3-sat} instance is satisfiable.
\end{proof}
\scnponpc*
\begin{proof}
For this, we give a simple reduction from the problem of determining whether an NPO matching exists in the hybrid-query setting. Assume that we are given partial preferences $\rank'$ in the hybrid-query model. For a given agent $a \in A$ we call a revealed house $h \in H$ a prefix house of $a$, if all ranks from $1$ to $\rank(a, h)$ are revealed. It is easy to see that no unrevealed house can be ranked better than a prefix house, while all unranked houses could be ranked better than a non-prefix house.

For our reduction, we now create partial preferences $\succ'$ in the set-compare model, by taking the partial order for each agent which ranks 
\begin{itemize}
    \item every prefix house better than every unrevealed house and every house with a worse rank;
    \item every revealed non prefix house better than every revealed house with a worse rank.
\end{itemize} 

Now assume we are given a matching $M$. We show that $M$ is NPO in the hybrid-query preferences if and only if $M$ is NPO in the constructed set-compare preferences.

First, assume that $M$ is not necessarily Pareto optimal in the hybrid-query preferences. Then there is a cycle $a_1, \dots, a_k = a_1$ with agent $a_i$ preferring $M(a_{i+1})$ to $M(a_i)$ in some preference extension of $\rank'$. Therefore, we know that $M(a_i) \succ'_i M(a_{i+1})$ does not hold in $\succ'$. Since $\succ'_i$ is a partial order, we can extend $\succ'_i$ to $\succ_i$ in such a way that $M(a_{i+1})\succ_i M(a_i)$. By doing this for all agents, we see that $M$ is also not necessarily Pareto optimal in $\succ'$. Similarly, if $M$ is not NPO in $\succ'$ there is some cycle  $a_1, \dots, a_k = a_1$ with agent $a_i$ preferring $M(a_{i+1})$ to $M(a_i)$ in some preference extension of $\succ'$. Thus, for each agent $a_i$ on the cycle there is some extension of $\rank'$ with $\rank'(a_i, M(a_i)) \ge \rank'(a_i, M(a_{i+1}))$ which in turn shows that $M$ is also not NPO for $\rank'$. 

Thus, we can reduce to the problem of determining whether a matching is NPO in the set-compare model and hence this problem is $\np$-complete as well. 
\end{proof}
\increasing*
\begin{proof}
We show this by induction. Let $j > 0$ be given and assume that $c_j > \frac{1}{3}$. Further, let $\varepsilon = \frac{1}{3} - c_0$. Then,  \begin{align*}
    c_{j+1} = \frac{3c_{j} + 1}{2} + c_0 -1 = c_j + \frac{c_j + 1}{2} + \frac{1}{3} + \varepsilon - 1 \\
    \ge c_j + \frac{2}{3} + \frac{1}{3} + \varepsilon -1 = c_j + \varepsilon.
\end{align*}
Thus, it holds that $c_{j+1} \ge c_j + \varepsilon$ and the sequence is monotonically increasing.
\end{proof}
\algopohq*
\begin{proof}
To show this bound, let $(A, H, \succ)$ be a given house allocation instance.

First, we observe that the matching $M$ is indeed Pareto optimal. For this, let $A'$ be the set of agents matched in the first 'for loop'. Since $M$ is Pareto optimal when restricted to $A'$, there is some permutation $\sigma'$ of $A'$ such that $SD_\succ(\sigma') = M$ when restricted to $A'$. We now create a permutation $\sigma$ of $A$ by first copying $\sigma'$ and for any $j \in \left[\lvert A' \rvert, n\right]$ setting $\sigma(j)$ to be the agent matched the $j$th time by our algorithm. Now, running the serial dictatorship mechanism on $\sigma$ results in any agent in $A'$ being matched to their respective partner in $M$, due to $\sigma$ consisting of $\sigma'$. Further, Algorithm \ref{alg:po_hybrid} matches any agent in the second for loop to the unmatched house they prefer the most, thus resulting in $SD_\succ(\sigma) = M$.

We show that Algorithm \ref{alg:po_hybrid} is $\mathcal{O}(n^{c_0})$-competitive by considering the maximum value of $j$. 
\begin{itemize}
    \item As our base case, if the maximum value of $j$ is $0$, we get that for $i = \lceil n^{c_0}\rceil$ we have that $\lvert M\rvert \ge n-\left\lceil n^{(c_j+1)/2}\right\rceil$, and our algorithm at most takes $n \lceil n^{c_0}\rceil + \left\lceil n^{(c_0+1)/2}\right\rceil\left\lceil n^{(c_0+1)/2}\right\rceil \in \mathcal{O}(n^{1+c_0})$ queries. Since the optimal algorithm uses at least $n-1$ queries, the competitive ratio in this case is in $\mathcal{O}(n^{c_0})$.
    \item Now assume that the maximum value $j$ takes is at least $1$ (and let $j$ be this maximum value) and that the for loop breaks. Then for $i = \lceil n^{c_{j-1}}\rceil$ we know that $M < n - \left\lceil n^{(c_{j-1}+1)/2}\right\rceil$. Thus, at least $\left\lceil n^{(c_{j-1}+1)/2}\right\rceil$ agents are matched to a house with a rank worse than $\left\lceil n^{c_{j-1}}\right\rceil$ for any NPO matching. 
    
    Let $M'$ be the NPO matching produced by the optimal algorithm with partial preferences $\rank'$ and $\sigma$ the corresponding permutation of $A$ such that $SD_{\succ'}(\sigma) = M'$ for all extensions $\succ'$ of $\rank'$. By definition of the serial dictatorship mechanism it is easy to see that for any $i \in [n]$ and $a \in A$ we have that $\sigma(i) = a$ implies $\rank(a, M'(a)) \le i$. Therefore, and by the fact that $\left\lceil n^{(c_{j-1}+1)/2}\right\rceil \ge \lceil n^{c_{j-1}}\rceil$, we get that $$\min(\rank(a, M'(a)), n-\sigma^{-1}(a)) \ge \frac{1}{2}\lceil n^{c_{j-1}}\rceil$$ for at least $\frac{1}{2}\left\lceil n^{(c_{j-1}+1)/2}\right\rceil$ many agents, since $\sigma^{-1}(a) \le n - \frac{1}{2}\left\lceil n^{(c_{j-1}+1)/2}\right\rceil$ for at least $\frac{1}{2}\left\lceil n^{(c_{j-1}+1)/2}\right\rceil$ many $a \in A$ with $\rank(a, M'(a))\ge \lceil n^{c_{j-1}}\rceil$. Thus, by Lemma \ref{lem:lower_bound_po} the optimal algorithm needs at least \[\Omega\left(\left\lceil n^{(c_{j-1}+1)/2}\right\rceil\lceil n^{c_{j-1}}\rceil\right) \in \Omega\left(n^{(3c_{j-1}+1)/2}\right)\] queries.
    Since our algorithm uses \begin{align*}
        \mathcal{O}(n n^{c_j} + n^{(c_j+1)/2}n^{(c_j+1)/2}) = \mathcal{O}(n^{c_j+1}) = \\  \mathcal{O}(n^{1+ (3c_{j-1} + 1)/2 + c_0 -1}) = \mathcal{O}(n^{ (3c_{j-1} + 1)/2}n^{c_0})
    \end{align*} queries, we get a competitive ratio of $\mathcal{O}(n^{c_0})$ in this case.
    \item Finally, if the for loop never breaks, we know that our algorithm uses $\mathcal{O}(n^2)$ many queries. Further, we can assume that $n^{c_o - \frac{1}{3}} \ge 2$ and $n^{c_0} \ge 1$ since we are only dealing with asymptotics. Then following the calculations of Lemma \ref{lem:increasing} we get that \begin{align*}
        \lceil n^{c_{j+1}} \rceil \ge\lceil n^{c_{j} + c_0 - \frac{1}{3}} \rceil \\
        \ge \lceil 2n^{c_{j}} \rceil > \lceil n^{c_{j}} \rceil + 1. 
    \end{align*}
    Therefore, the sequence $(\lceil n^{c_j}\rceil)_{j \in \mathbb{N}}$ is monotonically increasing as well.
    Further, we know that for the maximum $j$ it has to hold that $(3c_{j-1} + 1)/2 + c_0 -1 > 1$ since the sequence is monotonically increasing by Lemma \ref{lem:increasing} and otherwise the algorithm would enter the case for $i = n^1$. This implies that $c_0 > 2 - (3c_{j-1} + 1)/2$. By the same argument as in the last case, we can lower bound the number of queries asked by the optimal algorithm by $\Omega\left(n^{(3c_{j-1}+1)/2}\right)$ and thus get a competitive ratio of $\mathcal{O}(n^{2 - (3c_{j-1} + 1)/2}) \in \mathcal{O}(n^{c_0})$.
\end{itemize}
\end{proof}
\lowerboundhqpo*
\begin{proof}
For this, we construct a family of instances such that any correct online algorithm must ask at least $\Omega(nn^\frac{1}{3})$ queries, while an optimal offline algorithm can do it using $\mathcal{O}(n)$ queries. For simplicity, we assume that $n^\frac{1}{3}$ is an integer.  We divide the set of agents into three sets
\begin{itemize}
    \item a set of \emph{first-choice} agents $A_1$, consisting of $n - 2n^\frac{2}{3}$ agents,
    \item a set of \emph{second-choice} agents $A_2$ consisting of $n^\frac{2}{3}$ agents,
    \item and a set of \emph{special agents} $A'$ of size $n^\frac{2}{3}$.
\end{itemize}  Similarly, we also divide our set of houses into a set of \emph{first-choice} houses $H_1$ of size $n - n^\frac{2}{3}$ and a set of \emph{special houses} $H_2$ of size $n^\frac{2}{3}$. The preferences of the first-choice agents are such that each first-choice agent has a unique first-choice house as their first-choice and lists all special houses at the end of their preference list with the rest of their preference list being completed arbitrarily. Every special agent also has a unique first-choice house as their first choice and lists a special house somewhere in their first $n^\frac{1}{3}$ preferences. All other special houses are listed in last $n^\frac{2}{3}$ as well. Finally, all second-choice agents, share their first-house with some other agent and list the first choice of a unique special agent somewhere in their first $n^\frac{1}{3}$ preferences. Our family of instances now consist of all preferences inducing such profiles. 
Our adversary now works as follows:
\begin{itemize}
    \item For any agent asked to reveal their first-choice house, we just reveal consistently any 'first-choice' house that has not been revealed twice yet.
    \item For any agent asked to reveal a house in their first $n^\frac{1}{3}$ preferences, we always reveal a first-choice house while still possible.
    \item For any agent asked to reveal the rank of a special house, we always return the lowest consistent rank in the last $n^\frac{2}{3}$ ranks, while this is still possible. 
    \item The rest of the preferences are revealed consistently.
\end{itemize}
We claim that any online algorithm working against this adversary must ask at least $\Omega(nn^\frac{1}{3})$ queries. Let $A'$ be the set of agents matched to special houses. For any $a' \in A'$ there are now two possibilities. If $a' \in A_2$, so $a_2$ is a special agent, and $a'$ is matched to one of their first $n^\frac{1}{3}$ choices, then we know that these preferences could not be completed consistently by placing the agent in the last $n^\frac{2}{3}$ houses of $a'$. Thus, this agent could not have been a first-choice agent. This, however, easily implies that each first choice agent must have been asked at least $n^\frac{1}{3}$ queries, since otherwise the preferences could be consistently completed with that agent being a special agent. Hence, this would already imply that our online algorithm asked $\Omega(n^\frac{4}{3})$ queries.

Therefore, the only choice remaining is all special houses being matched to agents who list them in their last $n^\frac{2}{3}$ preferences. Thus, using Lemma \ref{lem:lower_bound_po} we can deduce that any online algorithm must have asked at least $\frac{1}{2}n^\frac{2}{3}$ queries to at least $\frac{1}{2}n^\frac{2}{3}$ of them, therefore also implying that any online algorithm must ask at least $\Omega(n^{1+\frac{1}{3}})$ queries in this case. An algorithm having access to all preferences on the other hand can ask any top-choice agent for their top-choice, any special agent until it reaches the special house of this agent and any second-choice agent until it reaches the unique top-choice house of a special agent in ranks in the top $n^\frac{1}{3}$ houses. After this the preferences are top-$k$ preferences and a matching of size $n$ is possible. By \citet{HMSS21} this implies that an NPO matching exists. Since it can be found using $\mathcal{O}(n + n^\frac{1}{3}n^\frac{2}{3}) = \mathcal{O}(n)$ queries, this shows that no algorithm can be $o(n^\frac{1}{3})$-competitive. 
\end{proof}

\rmnextbest*
\begin{proof}
To see that the matching $M$ computed by Algorithm \ref{alg:rm_next_best} is indeed rank-maximal, we observe that the graph $(A \cup H, E)$ after finishing the elicitation step in iteration $i$ is equivalent to the graph $G_i$. Every agent only elicits preferences until they are in either $\cU$ or $\cO$ which is equivalent to the deletion of all preferences of a higher rank. Further, by deleting all edges between $\cO, \cO$ and $\cO, \cU$ and by adding them to $F$ we ensure that these edges cannot be used or added to the graph. Finally, removing houses from $V$ once they are in $\cU$ or $\cO$ is again equivalent to the deletion step in the original algorithm. Thus, by Lemma \ref{lem:rm_char} the matching $M$ is rank-maximal. To show that the algorithm is also $\frac{3}{2}$-competitive, we show that almost all agents need to be queried until they would be a part of either $\cU$ or $\cO$ in the decomposition. In \Cref{lem:rabound} we have already shown that no agent $a\in A$ matched to a revealed preference can have less than $r_a-1$ of their preference revealed. 

Now we turn to agents matched to an agent they have not revealed yet \footnote{Note that for these agents \Cref{lem:rabound} does not hold. It is easy to construct an instance, where an agent with no revealed preferences is matched to their last choice.}. Let $M'$ be any NRM matching with corresponding partial preferences and $a \in A$ an agent matched to a house they have not revealed yet. Now, if either $M'(a) \notin \rev(a')$ or $\rank(a', M'(a)) \neq n$ for any $a' \in A$, we can get a matching that dominates $M'$ in an extension of the preferences by matching $a$ to $M'(a')$ and $a'$ to $M'(a)$. Thus, $\rank(a', M'(a)) = n$ has to hold for any $a' \in A \setminus \{a\}$. Thus, any optimal algorithm leaving matching an agent to a preference this agent has not revealed yet has to ask $(n-1)^2$ queries. On the other hand, we know that our algorithm asks at most $n(n-1)$ queries. Thus, in this special case, we get an upper bound on the competitive ratio of $\frac{n(n-1)}{(n-1)^2} = \frac{n}{n-1}$. Since our algorithm is $1$-competitive for $n\le 2$ this implies a competitiveness of $\frac{n}{n-1} \le \frac{3}{2}$ if the optimal algorithm matches an agent to an unqueried house.

Therefore, we can now assume that $\lvert \rev(a)\rvert \ge \max(r_a - 1, 1)$ for all $a \in A$. Finally, for $h \in H$ let $A_h \coloneqq \{a \in A \mid \rank(a,h) = 1\}$. It is easy to see that for $n > 2$, there need to be at least $2 \lvert A_h \rvert -1$ queries to agents in $A_h$, since otherwise there would be two agents with only their first agent revealed, which would be a contradiction to every agent only being matched to a revealed preference. Of course, this also holds for any non-empty subset of $A_h$. Now let $\opt(A')$ be the number of queries asked to agents in the set $A' \subseteq A$. We thus get that for any $h \in H$ and $A' \subseteq A_h$ it has to hold that 
\[
\opt(A') \ge \max(\sum_{a \in A'}(\max(r_a-1, 1)), 2 \lvert A'\rvert -1)
\]
Since our algorithm asks \[\sum_{a \in A} r_a = \sum_{h \in H} \sum_{a \in A_h} r_a\] queries we get that the competitive ratio is at most \[\frac{\sum_{h \in H} \sum_{a \in A_h} r_a}{\sum_{h \in H} \opt(A_h)}.\] By using the fact that for all $a,b,c,d \in \mathbb{R}_{>0}$ it holds that \[\frac{a+b}{c+d} \le \max(\frac{a}{c}, \frac{b}{d})\] we thus get an upper bound of \[\max_{h \in H} \frac{\sum_{a \in A_h} r_a}{\opt(A_h)}\] on the competitive ratio. For any $h \in H$ we can now distinguish two cases. If $\sum_{a \in A_h} r_a \ge 3\lvert A_h \rvert$ we get \begin{align*}
    \frac{\sum_{a \in A_h} r_a}{\opt(A_h)} \le \frac{\sum_{a \in A_h} r_a}{\max(\sum_{a \in A_h}(\max(r_a-1, 1)), 2 \lvert A_h\rvert -1)} \\\le \frac{\sum_{a \in A_h} r_a}{\sum_{a \in A_h}(\max(r_a-1, 1))} \le \frac{\sum_{a \in A_h} r_a}{\sum_{a \in A_h} r_a - \lvert A_h \rvert} \\\le \frac{\sum_{a \in A_h} r_a}{\sum_{a \in A_h} r_a - \frac{1}{3}\sum_{a \in A_h} r_a} = \frac{\sum_{a \in A_h} r_a}{\frac{2}{3}\sum_{a \in A_h} r_a}  =  \frac{3}{2}.
\end{align*} 

If $\sum_{a \in A_h} r_a < 3\lvert A_h \rvert$ there must be at least one $a \in A_h$ with $r_a = 2$. Let $A_h^2$ be this subset. If $\lvert A_h^2 \rvert \ge 2 $ we can see that 
\begin{align*}
    \frac{\sum_{a \in A_h} r_a}{\opt(A_h)} = \frac{\sum_{a \in A_h\setminus A_h^2} r_a + 2 \lvert A_h^2\rvert }{\opt(A_h \setminus A_h^2) + \opt(A_h^2)  } \\ 
    \le \max(\frac{\sum_{a \in A_h\setminus A_h^2} r_a }{\sum_{a \in A_h\setminus A_h^2} r_a - \lvert A_h\setminus A_h^2 \rvert }, \frac{2 \lvert A_h^2\rvert}{2 \lvert A_h^2\rvert - 1}) \le \frac{3}{2}
\end{align*} since we know that $\sum_{a \in A_h\setminus A_h^2} r_a \ge 3 \lvert A_h\setminus A_h^2 \rvert $.
    If, however $\lvert A_h^2 \rvert = 1$, then there must be exactly one $a \in A_h$ with $r_a = 2$. This implies that either $a \in \cO_2$ or $a \in \cU_2$. Since all other agents in $A_h$ are in $\cE_2$ this shows that $h \notin \cE_2$. Therefore, $G_n$ contains no edge between $a$ and $h$ and any rank-maximal matching matches $a$ to their second ranked house. However, then the optimal algorithm must have asked two queries to agent $a$ and therefore we get an upper bound on the competitive ratio of \begin{align*}
    \frac{\sum_{a \in A_h} r_a}{\max(\sum_{a \in A_h}(\max(r_a-1, 1)), 2 \lvert A_h\rvert -1)}\\ \le  \frac{3\lvert A_h \rvert-1}{ 2\lvert A_h \rvert} \le \frac{3}{2}.
\end{align*}
    Thus our algorithm correctly elicits a necessarily rank-maximal matching with a competitive ratio of $\frac{3}{2}$.
\end{proof}
\lowerboundothers*
\begin{proof}
~\paragraph{Hybrid-query model.} We begin with the hybrid query model and take the same preference structure as in the proof of \Cref{thrm:lower_rm_nb}, \ie at the family of preferences isomorphic to the preference structure in the proof. However, in this case, we need to change the elicitation of the preferences by our adversary slightly. When asked for a first-choice house, we still reveal the preferences according to the basic preferences and when asked for the third preference or the rank of the special house we can either return any first-choice house or the last rank for this special house until the last agent queried. 

For the second-choice houses, however, we need to adjust the adversary. For $i \in [k-1]$ we call the agents $a_{2i-1}$ and $a_{2i}$ the block $A_i$ and similarly the agents $a_{2k-1},a_{2k},$ and $a_{2k+1}$ form the block $A_{k}$. Based on the total preferences, we see that the agents in block $A_i$ share their second choice agents with blocks $A_{i-1}$ and $A_{i+1}$ with indices taken modulo $k$. 

While eliciting preferences, we call two blocks $A_i$ and $A_j$ neighboring if they share a second-choice house. It is easy to see that any way of answering the queries for second-choice houses such that the blocks form a cycle of length $n$ based on the neighborhood relation leads to a preference profile isomorphic to the one specified in \Cref{thrm:lower_rm_nb}. In order to use this, until the final preference of the agents is revealed, if any agent is queried for their second choice agent, the adversary first checks if there is any second-choice house revealed by an agent in a block not reachable by the neighborhood relation. If this is the case, we can safely return this agent as the second-choice. If none such agent exists, we return any second-choice agent, not queried before. For the final agent revealing their second-choice agent, there is only one choice left, which we reveal. 

This model of eliciting preferences leads to a cycle of blocks as described above, and thus to preferences isomorphic to the ones described in \Cref{thrm:lower_rm_nb}. Further, every second-choice agent needs to appear at least once as a second-choice agent for any matching to be NRM.

Now assume that the blocks do not form a path according to the neighbor relation. Then either one second-choice houses is still unrevealed or the last query to an agent would be able to reveal a second-choice house in a different connected component according to the neighbor relation, since these two blocks cannot reach each other. Thus, since the blocks need to form a path, we need to reveal two second-choice houses for at least $k-2$ blocks. Thus, we need to ask at least $2k-4$ second-choice queries. Since we also need to query every first and every third choice, this leads to a competitive ratio of at least $\frac{2k-4 + 2(2k+1)}{2(2k+1) + 1} = \frac{3}{2} - \frac{13}{8k+6}$. Thus, also no algorithm can be $\frac{3}{2}-\varepsilon$-competitive for the hybrid-query model.
~\paragraph{Set-compare model.}
The same construction also works in the set-compare model. If an agent $a_i$ is asked for their best-choice of a set containing all houses (or their top-choice house), we simply return their top-choice house. If asked for their best-choice out of a set only not containing their top-choice house \ie their second choice house, we reveal the preferences according to the construction in the hybrid-query model. Finally, if asked to reveal their best-choice out of a smaller set not containing their first or second choice we simply reveal any house that is not the special house until all agents have been queried for the special house at least once. 

Since we need to know the top-choice of every agent we need to ask every agent for their best-choice. Further, since we need to reveal every second-choice agent at least once as the best-choice out of a set, we need to ask at least $2k-4$ queries as previously shown. Finally, we need to ask every agent at least one query containing neither their first and second choice, but the special house. Thus, the competitive ratio is also lower bounded by $\frac{3}{2} - \frac{13}{8k+6}$ in this case.
\end{proof}
\begin{algorithm}[tb]
\caption{Elicitation algorithm for rank-maximal matchings in the hybrid-query model}
\label{alg:rm_hq}
\textbf{Input}: Set of agents $A$, set of houses $H$.\\
\textbf{Output}: A necessarily rank-maximal matching 

\begin{algorithmic}[1] 
\STATE $U \gets A$ \COMMENT{set of unfinished agents}
\STATE $V \gets H$ \COMMENT{set of available houses}
\STATE $E \gets \emptyset$, $M \gets \emptyset$, $F \gets \emptyset$
\IF{$\lvert A \rvert = 2$}
\STATE let $h_1 = \mathcal{Q}(a_1,1)$, return $\{\{a_1, h_1\}, \{a_2, h_2\}\}$
\ENDIF
\FOR{$i$ in $1, \dots n-1$}
\FORALL{$a \in U$}
\STATE $h\gets \mathcal{Q}(a,i)$
\IF{$h \in V$ and $\{a,h\} \notin F$}
\STATE $E \gets E \cup \{\{a,h\}\}$
\ENDIF
\ENDFOR
\STATE augment $M$ to be a maximum matching in $(A \cup H, E)$
\IF{$M$ matches no house of rank $i$}
\STATE $c \gets c+1$
\ENDIF
\IF{$c \ge \lvert V \rvert$}
\STATE break for-loop
\ENDIF
\STATE compute Dulmage-Mendelsohn decomposition $\cU, \cE, \cO$ for $M$
\STATE If an agent $a \in A$ is in $\cU$ or $\cO$ remove $a$ from $U$
\STATE If a house $h \in H$ is in $\cU$ or $\cO$ remove $h$ from $V$ 
\STATE Add any edges between $\cO,\cO$ and $\cO, \cU$ to $F$ and remove them from $E$
\ENDFOR
\IF{$U \neq \emptyset$}
\FORALL{$h \in V, a \in U$}
\STATE query $\mathcal{Q}(a,h)$
\ENDFOR
\STATE compute rank-maximal matching $M$ with completed preferences
\ENDIF
\STATE return $M$
\end{algorithmic}
\end{algorithm}

\hybridrmpoly*
\begin{proof}
In order to show this, we provide a short necessary condition of necessarily rank-maximal matchings, provided that one exists. Given partial preferences in the hybrid query model, for an agent $a \in A$ let $\ell_a$ be the last rank with no revealed house and $k_a$ be the first rank with no revealed house. Then we define $\max\rank(a,h)$ to be $\rank(a,h)$ if $h \in \rev(a)$ and $\max\rank(a,h) = \ell_a$ otherwise, \ie $\max\rank(a,h)$ is the worst possible rank of $h$ for $a$ in any possible completion. We now show that if an NRM matching $M$ exists it is rank-maximal in the instance $(A,H, \max\rank)$. 

Assume that $M$ is indeed NRM and that there is a matching $M'$ rank-dominating $M$ in $(A,H, \max\rank)$. Then we extend the preferences, by setting $\rank(a, M(a)) = \ell_a$ for all $a \in A$ and by extending all other preferences in any consistent manner. The signature of $M$ under $\rank$ and $\max\rank$ is the same, while the signature of $M'$ under $\rank$ is guaranteed to be at least as large as under $\max\rank$. Thus, $M'$ also rank-dominates $M$ under $\rank$ and $M$ is not NRM. 

For our next step let $(\cE_1, \cO_1, \cU_1), \dots, (\cE_n, \cO_n, \cU_n)$ be the Dulmange-Mendelsohn decompositions of $(A,H, \max\rank)$. Let $a \in A$ be an agent and $h \notin \rev(a)$. If either $a \notin \cE_{\ell_a-1}$ or $h \notin \cE_{\ell_a-1}$ then we know that no rank-maximal matching in $(A,H, \max\rank)$ and thus also no NRM matching can match $a$ and $h$. On the other hand, assume that $a \in \cE_{\ell_a-1}$ and $h \in \cE_{\ell_a-1}$. Then we can extend the preferences in such a way that $\rank(a,h) = k_a$ and for all other $a' \in A$ with $h \notin \rev(a)$ it holds that $\rank(a', h) = \ell_{a'}$. Under these preferences $a \in \cE_{k_a-1}$ and $h \in \cE_{k_a-1}$ still hold. Further, if there was another $a' \in \cE_{k_a-1}$ with $\rank(a', h) = k_a$ this would have implied that $h \notin \cE_{\ell_a-1}$, since we only decreased the ranks of the other agents and since $\ell -1 \ge k_a$. Hence, $a,h \in \cU_{k_a}$ has to hold and they have to be matched. Therefore, an agent $a$ is matched to an unrevealed agent $h$ in an NRM matching if and only if $a \in \cE_{\ell_a-1}$ and $h \in \cE_{\ell_a-1}$. Thus, it is sufficient to calculate the Dulmange-Mendelsohn decompositions of $(A,H, \max\rank)$, match all agents $a \in A$ to unrevealed houses $h \in H$ if $a \in \cE_{\ell_a-1}$ and $h \in \cE_{\ell_a-1}$ and then calculate a rank-maximal matching in the instance without the already matched agents and houses. Afterwards, we check if this matching is indeed NRM and return it if it is, otherwise we return that no NRM matching exists. 

If there is no NRM matching this is indeed correct. Otherwise, assume that an NRM matching $M'$ exists and let $M$ be the matching returned by our algorithm. By our previous proof, we know that the agents matched to unrevealed preferences by $M'$ and $M$ are the same. Further, for any other preference extension, the signature over the agents matched to revealed preferences is obviously the same, since $M$ is rank-maximal on these agents. Thus $M'$ and $M$ have the same signature for all possible preference extensions and therefore $M$ is NRM.
\end{proof}
Before turning to the correctness proof of \Cref{alg:rm_hq} we show a few auxiliary lemmas for the hybrid query model. For a given preference profile $\succ$ and an NRM matching $M$ with partial preferences $\rank'$ of $\succ$, we call an agent $a \in A$ a prefix agent, if the ranks $1$ to $r_a-1$ are all revealed for $\rank$. Similarly, we call all other agents non-prefix agents.  We can show that all houses not ranked by a non-prefix agent $a$ can be guaranteed to be matched to an agent that rank them at most at rank $k_a$, \ie the first rank for which agent $a$ has not revealed a house yet.
\begin{restatable}{lem}{hybridunm}
Let $a \in A$ be a non prefix agent. Then for any house $h \notin \rev(a)$ with $h \neq M(a)$ there exists a $k \le k_a$ with $h \notin \cE_k$.
\label{lem:rm_hybrid_unm}
\end{restatable}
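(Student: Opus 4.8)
The plan is to prove the contrapositive by contradiction: assume $h \in \cE_k$ for every $k \le k_a$, and construct a completion of the partial preferences under which $M$ is \emph{not} rank-maximal, contradicting that $M$ is NRM. The first two facts I would pin down are that both $a$ and $h$ lie in $\cE_{k_a}$. For $h$ this is precisely the assumption (taking $k = k_a$). For $a$, I would use that $a$ is a non-prefix agent: by definition some rank in $[1,r_a-1]$ is unrevealed, so the first unrevealed rank satisfies $k_a \le r_a - 1 < r_a$; since membership in $\cE$ is monotone along the iterations of \citet{IKMMP06}, $a \in \cE_i$ for all $i < r_a$, and in particular $a \in \cE_{k_a}$. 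Note this is the one place where the non-prefix hypothesis is essential.

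Next I would observe that the edge $\{a,h\}$ is \emph{absent} from $G_{k_a}$, the top-$k_a$ graph of the $(A,H,\max\rank)$ instance. Indeed $\max\rank(a,h) = \ell_a$ since $h \notin \rev(a)$, and were $\ell_a \le k_a$ the edge would already be present and would join the two $\cE$-vertices $a$ and $h$, contradicting the property of the Dulmage--Mendelsohn decomposition (\Cref{thrm:dmcomp}) that there is no edge inside $\cE$; hence $\ell_a > k_a$. The key structural step is then the standard bipartite augmenting-path fact: because $a$ is an agent in $\cE_{k_a}$ and $h$ is a house in $\cE_{k_a}$, each reaches an unmatched vertex of a maximum matching of $G_{k_a}$ along an even alternating path, and by the bipartition the two endpoints reached are an unmatched agent and an unmatched house, hence distinct and joined by disjoint paths. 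Splicing the new edge $\{a,h\}$ between them yields an augmenting path, so adding $\{a,h\}$ strictly increases the maximum matching size of $G_{k_a}$ from $\nu(G_{k_a})$ to $\nu(G_{k_a})+1$.

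Finally I would build the completion $\succ^*$: set $\rank(a,h) = k_a$ (legal, as $k_a$ is unrevealed and $\ell_a > k_a$ leaves room for $a$'s remaining houses, including $M(a)$ if it is unrevealed), and for every agent $b$ whose $M$-partner $M(b)$ is unrevealed with $\ell_b > k_a$, place $M(b)$ at some rank exceeding $k_a$ (possible exactly because $\ell_b > k_a$); complete everything else consistently. Under $\succ^*$ the top-$k_a$ graph contains $G_{k_a}$ together with the edge $\{a,h\}$, so its maximum matching has size at least $\nu(G_{k_a})+1$; since $M$ is NRM and hence rank-maximal for $\succ^*$, \Cref{lem:rm_char} forces $M$ to match that many pairs within the top-$k_a$ ranks of $\succ^*$. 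On the other hand, by the choice of $\succ^*$ an $M$-pair $(b,M(b))$ falls within the top-$k_a$ ranks of $\succ^*$ exactly when $\max\rank(b,M(b)) \le k_a$, and the number of such pairs is $\nu(G_{k_a})$, because $M$ is rank-maximal in $(A,H,\max\rank)$ (again via \Cref{lem:rm_char}). These two counts contradict each other, which establishes the lemma. I expect the main obstacle to be the bookkeeping for $\succ^*$: one must verify that the top-$k_a$ membership of every $M$-pair is preserved from the $\max\rank$ instance (using that $\ell_b \le k_a$ forces an unrevealed partner into the top ranks regardless, while $\ell_b > k_a$ gives room to push it out), so that the single added edge $\{a,h\}$ is the only change affecting the maximum matching size.
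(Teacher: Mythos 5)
Your overall strategy is the paper's: assume $h\in\cE_k$ for all $k\le k_a$, use the non-prefix hypothesis to get $a\in\cE_{k_a}$ (via $k_a\le r_a-1$ and monotonicity of leaving $\cE$), complete the preferences so that $h$ sits at rank $k_a$, and derive an augmenting path between the two $\cE$-vertices that contradicts rank-maximality of $M$ in the completion. Those pieces are right. The gap is in how you close the contradiction. You argue that the \emph{full} top-$k_a$ graph of $\succ^*$ contains $G_{k_a}\cup\{\{a,h\}\}$ and hence has maximum matching $\nu(G_{k_a})+1$, and then invoke \Cref{lem:rm_char} to force $M$ to match that many pairs in the top $k_a$ ranks. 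But \Cref{lem:rm_char} says a rank-maximal matching is a maximum matching of the \emph{reduced} graph $G_k$ produced by the deletion steps of \citet{IKMMP06}, not of the unreduced top-$k$ graph; a rank-maximal matching routinely matches strictly fewer pairs in the top $k$ ranks than the maximum matching of the full top-$k$ graph (that is the whole point of the deletions). So the inequality you need is $\nu(G^*_{k_a})>\nu(G_{k_a})$ for the reduced graph $G^*_{k_a}$ of $\succ^*$, and for that you must show that the edge $\{a,h\}$ and your augmenting path survive the deletion history of the algorithm run on $\succ^*$. Your own construction undermines this: by relocating other agents' unrevealed partners $M(b)$ to ranks above $k_a$, you change those agents' edges at ranks below $k_a$, so the decompositions $\cE_i,\cO_i,\cU_i$ of $\succ^*$ for $i<k_a$ need not agree with those of $(A,H,\max\rank)$ (nor with those of $\succ$), and the containment $G^*_{k_a}\supseteq G_{k_a}\cup\{\{a,h\}\}$ is unjustified. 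The comparison of the two reduced-graph matching numbers across the distinct instances $(A,H,\max\rank)$ and $(A,H,\succ^*)$ is likewise not established.

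The paper avoids all of this by perturbing \emph{only} agent $a$'s unrevealed houses: since $k_a$ is $a$'s first unrevealed rank, nothing in ranks $1,\dots,k_a-1$ changes, so the graphs and decompositions up to round $k_a-1$ are literally identical to those of $\succ$, the new rank-$k_a$ edge $\{a,h\}$ cannot have been deleted (both endpoints stayed in $\cE_i$ for $i<k_a$), and the even alternating path from $h$ avoiding $a$ persists; augmenting then directly contradicts that $M$ is rank-maximal in the completion. To repair your proof, drop the $\max\rank$ detour and the relocation of other agents' partners, and carry out the counting inside the reduced graph of a completion that agrees with $\succ$ everywhere except on $a$'s unrevealed ranks.
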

\begin{proof}
Assume that there is a $h \notin \rev(a)$ with $h \in \cE_k$ for every $k \le k_a$ and assume that the preferences are completed such that $\rank(a, h) = k_a$, \ie the smallest unrevealed rank, with all other preferences consistent with $\succ$. Since $a,h \in \cE_i$ for any $i < k_a$ there must exist an edge between $a$ and $h$ in $G_{k_a}$ . However, since $h \in \cE_{k_a}$, when only looking at $\succ$, when $h$ is not matched to $a$, there is an alternating path from $h$ to an unmatched house $h'$ not containing $a$. Hence, we could augment this matching along this path and by matching $a$ to $h$. Thus, $a$ had to be matched to $h$ in the first place. 
\end{proof}
Similarly, we can show that a non-prefix agent must reveal a house for each rank matched by a rank-maximum matching which is better than his own rank.
\begin{restatable}{lem}{hybridrankmatched}
Let $a \in A$ be a non-prefix agent. Then for all $a' \in A$ with $\rank(a', M(a')) < r_a$ it has to either hold that $M(a') \in \rev(a)$ or for all $r' < \rank(a', M(a'))$ there has to be a $h' \in H$ with $h' \in \rev(a)$ and $\rank(a, h') = r'$.
\label{lem:hybrid_rank_matched}
\end{restatable}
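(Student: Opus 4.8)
The plan is to argue by contradiction and to reduce everything to the companion Lemma \ref{lem:rm_hybrid_unm}, which is the natural partner of this statement. Suppose the claim fails for some $a'$. Write $\rho \coloneqq \rank(a', M(a')) < r_a$, and assume both that $M(a') \notin \rev(a)$ and that there is a rank $r' < \rho$ carrying no revealed house for $a$. Letting $k_a$ be $a$'s first unrevealed rank, the existence of $r'$ immediately gives $k_a \le r' < \rho$, so in particular $k_a < \rho$. This simple inequality is the whole point of the hypothesis $\rho < r_a$: it guarantees that a genuinely unrevealed rank strictly below $\rho$ is available (if instead $a$ had revealed all ranks below $\rho$, the second disjunct of the statement would hold and there would be nothing to prove).

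Next I would invoke Lemma \ref{lem:rm_hybrid_unm} with the house $h = M(a')$. Its hypotheses are met: $a$ is a non-prefix agent, $M(a') \notin \rev(a)$ by assumption, and $M(a') \neq M(a)$ since $a \neq a'$ and $M$ is a matching. The lemma then yields some $k \le k_a$ with $M(a') \notin \cE_k$, i.e.\ $M(a') \in \cO_k \cup \cU_k$. By the Dulmage--Mendelsohn decomposition (Theorem \ref{thrm:dmcomp}, second bullet), every maximum matching of $G_k$ must saturate $M(a')$.

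The contradiction then follows from rank-maximality. Since $M$ is necessarily rank-maximal, it is in particular rank-maximal in the true profile $\succ$, and by the greedy characterization underlying Lemma \ref{lem:rm_char} a rank-maximal matching saturates the maximum possible number of agents within every rank prefix; hence the restriction of $M$ to its rank-$\le k$ edges is itself a maximum matching of $G_k$. But the only $M$-edge incident to $M(a')$ is $\{a', M(a')\}$, whose rank is $\rho > k_a \ge k$, so this edge is absent from $G_k$ and $M(a')$ is left unsaturated by the restriction of $M$. This contradicts the conclusion of the previous paragraph, so the assumption was false and the lemma holds.

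The step I expect to require the most care is the bookkeeping ensuring that the decompositions $\cE_k, \cO_k, \cU_k$ and the ``restriction of $M$ is maximum'' statement both refer to the same profile $\succ$, so that ``saturated in every maximum matching of $G_k$'' and ``unsaturated by $M$ restricted to $G_k$'' are directly comparable, and that the index $k$ handed back by Lemma \ref{lem:rm_hybrid_unm} genuinely satisfies $k < \rho$ (which is exactly what $k_a < \rho$ secures). As an alternative, mirroring the proof technique of Lemma \ref{lem:rm_hybrid_unm} directly, one could instead complete $\rank'$ so that $\rank(a, M(a')) = k_a < \rho$, observe that $\{a, M(a')\}$ is then a rank-$k_a$ edge while $M(a')$ is unsaturated at level $k_a$ by $M$, and augment along an even alternating path (available since $M(a')$ lies in $\cE$ at that level) to saturate $M(a')$ at a strictly better rank, contradicting rank-maximality; I expect the reduction to Lemma \ref{lem:rm_hybrid_unm} to be the cleaner of the two.
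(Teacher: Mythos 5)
Your proof is correct. The paper argues directly: since $M$ matches $M(a')$ at rank $\rho \coloneqq \rank(a',M(a'))$, the house $M(a')$ lies in $\cE_{r'}$ for every $r' < \rho$, and so does $a$ (as $r' < \rho < r_a$); because $G_{r'}$ can contain no edge between two $\cE$-vertices in any completion for which $M$ is rank-maximal, an unrevealed rank $r' < \rho$ together with the unrevealed house $M(a')$ would admit a completion with $\rank(a,M(a')) = r'$ that destroys rank-maximality. You instead delegate that completion/augmentation step to Lemma \ref{lem:rm_hybrid_unm} and combine its output ($M(a') \notin \cE_k$ for some $k \le k_a \le r' < \rho$) with the observation that the rank-$\le k$ restriction of $M$ is a maximum matching of $G_k$ and hence must saturate $M(a') \in \cO_k \cup \cU_k$, which it cannot since the only $M$-edge at $M(a')$ has rank $\rho > k$. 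The two arguments rest on the same mechanism; what your packaging buys is that the completion argument is not repeated, at the price of making explicit the fact that the restriction of a rank-maximal matching to ranks $\le k$ is maximum in $G_k$ (which does follow from Lemma \ref{lem:rm_char}, since that restriction is rank-maximal in the top-$k$ truncation). One small point: you need $a' \neq a$ so that $M(a') \neq M(a)$ when invoking Lemma \ref{lem:rm_hybrid_unm}; the statement quantifies over all $a' \in A$, but the paper's own proof also tacitly treats $a'$ as distinct from $a$, and only that case is used in the $6$-competitiveness argument, so this is not a substantive gap.
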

\begin{proof}
Assume that $\rank(a', M(a')) < r_a$ and $M(a') \notin \lvert \rev(a) \rvert$. Since $M$ matches $a'$ to $M(a')$ we know that $M(a') \in \cE_{r'}$ for all $r' < \rank(a', M(a')) $. However, since also $a \in \cE_{r'}$ there cannot be an edge between $a$ and $M(a')$ in $G_{r'}$ in any preference completion and thus $a$ must elicit a preference with rank $r'$. Otherwise, $a$ could elicit $M(a')$ with rank $r'$.
\end{proof}

\algormhq*
\begin{proof}
It is easy to see that the matching computed by \Cref{alg:rm_hq} is indeed necessarily rank-maximal, since any agent $a \notin U$ in line 26 will be in either $\cU$ or $\cO$ and will thus be matched to a house they have already revealed. Further, the only houses unrevealed by an agent $a \in U$ in line 26 are houses which are in $\cU$ or $\cO$ and can thus not be matched to $a$ in any rank-maximal matching by \Cref{lem:rm_char}. Thus, the preferences elicited in \cref{alg:rm_hq} are a superset of the edges of $G_n$ which by \Cref{lem:rm_char} implies that we correctly find a necessarily rank-maximal matching.

Let OPT be the optimal algorithm and OPT$_P$ be the set of prefix agents after the partial preferences were revealed by OPT and OPT$_{NP}$ the non-prefix agents of OPT. Similarly, let ALG$_{P}$ and ALG$_{NP}$ be the prefix and non-prefix agents of our algorithm. For simplicity, we also let $i$ and $c$ be the respective values of the variables at the end of our algorithm. For any agent $a \in A$ let $\opt_a$ be the number of queries asked to $a$ by an optimal algorithm and $\alg_a$ the number of queries asked by our algorithm. Our goal is now to bound $\frac{\alg_a}{\opt_a} \le 6$ for any agent, thus showing that the competitive ratio is at most $6$. Before doing this, we must however discuss the case, where $\rev(a) = \emptyset$ for some agent $a \in A$. Firstly, assume that there is some $a' \in A$ with $\rank(a', M(a') > 1$. Then we can extend the preferences in such a way that $\rank(a, M(a') = 1$ and the matching is not NRM. Thus, it needs to hold that all agents except for $a$ are matched to their first choice. Further, similar to the proof in \Cref{thrm:32-comp} we must know that $\rank(a',M(a)) = n$ for all agents. Thus, any optimal algorithm must ask at least $2(n-1)$ queries. Further, our algorithm asks every agent one question and then queries $a$ and the agent sharing the first choice agent with $a$ both exactly two queries. Since we can assume that $n \ge 3$ we get an upper bound on the competitive ratio of $\frac{n + 4}{2(n-1)} \le \frac{7}{4} \le 2$.

Now we can assume that $\rev(a) \neq \emptyset$ and distinguish four cases:
\begin{itemize}
    \item For any $a \in A$ with $a \in$  OPT$_P$  and $a \in$  ALG$_P$ we know that our algorithm asks at most $r_a$ queries to $a$ while the optimal algorithm asks at least $r_a-1$. Thus, the quotient for $a$ is upper bounded by \[\frac{\alg_a}{\opt_a} \le \frac{r_a}{r_a-1} \le \frac{2}{1} = 2.\]
    \item If $a \in$ OPT$_P$  and $a \in$  ALG$_{NP}$ we know that $r_a > i$ and thus $r_a -1 \ge i \ge c \ge \lvert V \rvert$. Thus, the quotient for $a$ in this case is upper bounded by \[\frac{\alg_a}{\opt_a} \le \frac{i + \lvert V \rvert}{i} \le 1 + \frac{\lvert V \rvert}{\lvert V \rvert} = 2.\] 
    \item Next, assume that $a \in \opt_{NP}$ and $a \in \alg_P$. Then we know that in iteration $i = r_a$ it has to hold that $c \le \lvert V \rvert$. By \Cref{lem:rm_hybrid_unm} we know that all but one house not in $\cU_{r_a}$ and $\cO_{r_a}$, \ie in $V$, need to be queried by $a$ in the optimal algorithm. Similarly, by \Cref{lem:hybrid_rank_matched} we know that at most $c + 1$ ranks before $r_a$ can be unqueried, namely all ranks $r'$ without any agent $a'$ such that $\rank(a', M(a'))$ as well as the largest $r'$ such that such an $a'$ exists. Since no house with a rank smaller than $r_a$ can belong to $\cE_{r_a}$ we get $\opt_a \ge r_a - c + \lvert V \rvert -2 \ge r_a - 2  + c - c$. Thus, we get an upper bound on the quotient of \[ \frac{\alg_a}{\opt_a} \le \frac{r_a}{r_a - 2 } \le  3\] since we know that at least one query is asked to $a$.
    \item Finally, the case of $a \in \opt_{NP}$ and $a \in \alg_{NP}$ remains. Let $p_a$ be the length of the prefix of the preferences queried by the optimal algorithm.
    
    If $p_a > i$, we get that $a$ gets asked at least $i \ge \lvert V \rvert$ queries by the optimal algorithm. Since our algorithm asks at most $i + \lvert V \rvert $ queries, this implies a quotient of $2$. 
    
    If $p_a \le i$, then we know by \Cref{lem:rm_hybrid_unm} that the optimal algorithm must ask at least $\lvert V \rvert + p_a-1$ queries to $a$. If $p_a = i$, this implies a quotient of \[\frac{\alg_a}{\opt_a} \le \frac{i + \lvert V \rvert}{\lvert V \rvert + p_a-1} = \frac{i + \lvert V \rvert}{i + \lvert V \rvert -1}\le 2.\] If $p_a < i$, we know that every but one house in $V$ must be queried by the optimal algorithm and at most $c_{i-1} + 1 \le \lvert V \rvert + 1$ ranks can be left unqueried by the optimal algorithm, thus leading to a quotient of at most \[\frac{\alg_a}{\opt_a} \le \frac{i + \lvert V \rvert }{(\lvert V \rvert -1) + (i - \lvert V \rvert -1) } = \frac{i + \lvert V \rvert}{i-2} \le  6\] due to $\lvert V \rvert \le c \le i$ and since we can assume that at least one query has been asked to $a$.
\end{itemize}

Thus, the competitive ratio is at most $6$. 
\end{proof}
\end{document}